\documentclass[11pt,a4paper,reqno]{amsart}
\usepackage{fullpage}

\usepackage[utf8]{inputenc}

\usepackage{hyperref}
\usepackage{enumerate}
\usepackage{comment}

\usepackage{setspace}
\usepackage{amsfonts,amssymb,amsthm,amsmath,amsopn}
\usepackage{physics}
\usepackage{bbm}

\usepackage{tikz}
\usetikzlibrary{calc,decorations.pathreplacing}

\numberwithin{equation}{section}
\allowdisplaybreaks

\DeclareMathOperator{\supp}{supp}
\DeclareMathOperator{\dist}{dist}

\renewcommand{\i}{\mathrm{i}}
\newcommand{\e}{\mathrm{e}}
\renewcommand{\d}{\mathrm{d}}
\renewcommand{\op}{\mathrm{op}}

\newcommand{\bN}{\mathbb{N}}
\newcommand{\bR}{\mathbb{R}}
\newcommand{\bC}{\mathbb{C}}

\newcommand{\Id}{\mathbbm{1}}

\newcommand{\p}{\mathtt{p}}
\newcommand{\q}{\mathtt{q}}

\newcommand{\MF}{\mathrm{MF}}

\theoremstyle{plain} 
\newtheorem{theorem}{Theorem}[section]
\newtheorem*{theorem*}{Theorem}
\newtheorem{lemma}[theorem]{Lemma}
\newtheorem*{lemma*}{Lemma}
\newtheorem{corollary}[theorem]{Corollary}
\newtheorem*{corollary*}{Corollary}
\newtheorem{proposition}[theorem]{Proposition}
\newtheorem*{proposition*}{Proposition}

\theoremstyle{remark}

\newtheorem{remark}[theorem]{Remark}
\newtheorem*{remark*}{Remark}

\newtheorem*{remarks*}{Remarks}

\theoremstyle{definition} 

\newtheorem*{definition*}{Definition}

\newtheorem*{assumption*}{Assumption}
\newtheorem{hypothesis}[theorem]{Hypothesis}
\newtheorem*{hypothesis*}{Hypothesis}

\usepackage[dvipsnames]{xcolor}

\title[Propagation Estimates for the Many-body Boson Star Equation]{Propagation Estimates\\for the Many-body Boson Star Equation}

\author[Breteaux]{S\'ebastien Breteaux}
\address[S. Breteaux]{Universit\'e de Lorraine, CNRS, IECL, F-57000 Metz, France}
\email{\href{mailto:sebastien.breteaux@univ-lorraine.fr}{sebastien.breteaux@univ-lorraine.fr}}
\author[Dietze]{Charlotte Dietze}
\address[C. Dietze]{Sorbonne Universit\'e, CNRS, Universit\'e Paris Cit\'e, Laboratoire Jacques-Louis Lions (LJLL), 4 place Jussieu, F-75005 Paris, France}
\email{\href{mailto:charlotte.dietze@sorbonne-universite.fr}{charlotte.dietze@sorbonne-universite.fr}}
\author[Faupin]{J\'er\'emy Faupin}
\address[J. Faupin]{Universit\'e de Lorraine, CNRS, IECL, F-57000 Metz, France}
\email{\href{mailto:jeremy.faupin@univ-lorraine.fr}{jeremy.faupin@univ-lorraine.fr}}
\author[Grasselli]{\\Viviana Grasselli}
\address[V. Grasselli]{Universit\'e de Lorraine, CNRS, IECL, F-57000 Metz, France}
\email{\href{mailto:viviana.grasselli@univ-lorraine.fr}{viviana.grasselli@univ-lorraine.fr}}
\author[Lee]{Jinyeop Lee}
\address[J. Lee]{Department of Applied Mathematics, Kyung Hee University,
1732 Deogyeong-daero, Yongin-si, Gyeonggi-do, Republic of Korea}
\email{\href{mailto:jinyeop.lee@khu.ac.kr}{jinyeop.lee@khu.ac.kr}}

\date{\today}

\subjclass[2020]{35Q40, 81V73, 81Q05, 35B40, 35P25}

\keywords{boson star equation, semi-relativistic Hartree equation, mean-field limit, propagation estimates, maximal velocity, minimal velocity, many-body Schr\"odinger dynamics, Lieb--Robinson-type bounds}

\begin{document}

\begin{abstract}
We study the many-body boson star equation in the mean-field regime. Under a decay assumption on the solution to the effective one-body equation, we prove a uniform-in-time trace-norm estimate comparing the one-particle reduced density matrix of the many-body state with the projection onto the one-body solution. This establishes uniform-in-time mean-field convergence as the particle number goes to infinity. Furthermore, for the many-body dynamics, we establish both a maximal velocity bound, which can be seen as a Lieb--Robinson-type bound, and a minimal velocity bound. As an application, we recover maximal and minimal velocity bounds for the one-body boson star equation by passing to the mean-field limit.
\end{abstract}

\maketitle

\section{Introduction}\label{sec:intro}

\subsection{The Model}

We consider the many-body boson star equation in $\bR^3$
\begin{equation}\label{eq:NbdQSE}
    \i \partial_t \Psi_{N,t}
    =
    H_N\Psi_{N,t}, \quad \Psi_{N,t}\vert_{t=0}=\Psi_{N,0}
\end{equation}
where
\begin{equation}\label{eq:Nbody-Hamiltonian}
    H_N
    :=
    \sum_{j=1}^{N} \sqrt{-\Delta_j+m^2}
    +
    \frac{1}{N-1}
    \sum_{1\leq j<k\leq N} w(x_j-x_k)
\end{equation}
for some fixed $m>0$ and where $w:\bR^3\to\bR$ is an even function modeling interactions. For simplicity of exposition and without loss of generality (by a scaling argument), we set $$m=1$$ throughout the paper. Moreover, we  assume 
\begin{equation}\label{cond-w}
    w\in L^1(\bR^3)\cap L^\infty(\bR^3),
\end{equation}
though some of our results hold for more general $w$. 
In particular, \eqref{cond-w} implies that~$H_N$ identifies with a self-adjoint operator on $L^2(\mathbb{R}^{3N})$ with domain $H^1(\mathbb{R}^{3N})$ and form domain~$H^{1/2}(\mathbb{R}^{3N})$. We refer the reader to Subsection \ref{sec:hist} below for motivations to study Eq. \eqref{eq:NbdQSE} and references to previous works.

Everywhere in the paper, we consider \eqref{eq:NbdQSE} for bosonic initial states
\[
    \Psi_{N,0}\in L^2_{\mathrm{sym}}(\bR^{3N}),
    \qquad
    \|\Psi_{N,0}\|_2=1.
\]
The evolved quantum state then clearly satisfies $\Psi_{N,t}\in L^2_{\mathrm{sym}}(\bR^{3N})$ and $\|\Psi_{N,t}\|_2=1$, for any $t\in\mathbb{R}$. We define the associated \emph{$1$-particle reduced density matrix} $\gamma^{(1)}_{N,t}$ by its integral kernel
\begin{equation}\label{eq:rdm}
    \gamma^{(1)}_{N,t}(x_1,y_1)
    :=
    \int_{\bR^{3(N-1)}}\!\!
    \Psi_{N,t}(x_1,Z)
    \overline{\Psi_{N,t}(y_1,Z)}\,\d Z .
\end{equation}
Since $\|\Psi_{N,t}\|_2=1$, the trace of $\gamma^{(1)}_{N,t}$ is also normalized, that is
\begin{equation}\label{eq:tr-gamma-normalization}
    \Tr \gamma^{(1)}_{N,t}=1.
\end{equation}
The one-particle reduced density matrix $\gamma^{(1)}_{N,t}$ is a trace class operator which  describes the quantum state of a typical particle in the many-body system.

\medskip

For factorized initial data
\[
    \Psi_{N,0}=u_0^{\otimes N},
\]
with $\|u_0\|_2=1$, one expects the convergence
\[
    \gamma^{(1)}_{N,t}
    \longrightarrow
    |u_t\rangle\langle u_t| \quad \text{as } N\to\infty,
\]
in a suitable sense, where $u_t$ is a solution to the corresponding effective equation 
\begin{equation}\label{eq:BSE}
    \i \partial_t u_t
    =
    \sqrt{-\Delta+1}\,u_t
    +
    (w*|u_t|^2)u_t, \quad u_t\vert_{t=0}=u_0.
\end{equation}
Note that it is not difficult to verify that $\|u_t\|_2=1$ for all $t\in\bR$ since $w$ is real-valued.
We will refer to \eqref{eq:BSE} as the one-body boson star equation. Inspired by the work \cite{BFG2025} that deals with the one-body boson star equation \eqref{eq:BSE}, this paper is devoted to proving similar propagation estimates for the many-body boson star equation \eqref{eq:NbdQSE}. 
\subsection{Maximal and Minimal Velocity Bounds}\label{subsec:maxminvel}

In this paper, we prove maximal and minimal velocity bounds for \eqref{eq:NbdQSE}.
Roughly speaking, a maximal velocity bound asserts that a state initially localized in a region $X$ cannot instantaneously develop a significant mass in a distant region $Y$. More precisely, in Theorem \ref{thm:maxvel} below, we show that
if $\dist(X,Y)$ is large compared with the elapsed time $t$, then the mass observed in $Y$ is exponentially small. This can be seen as a Lieb--Robinson-type bound. 

On the other hand, a minimal velocity bound describes the complementary phenomenon: in a scattering regime, the position variable $x/t$ is asymptotically tied to the velocity operator associated with the kinetic energy.
As we show in Theorem \ref{thm:intro-many-body-minvel}\eqref{thm:minimal-scattering}, this implies that a suitably chosen state develops a significant mass at spatial infinity if the elapsed time is sufficiently large.
For the semi-relativistic dispersion relation
\begin{equation}\label{def:semi-rel-dispersion}
    \omega(\xi)=\sqrt{|\xi|^2+1},
\end{equation}
the corresponding velocity observable is
\begin{equation}\label{def:Theta}
    \Theta := \nabla_\xi \omega(-\i\nabla )
    = \frac{-\i\nabla}{\sqrt{-\Delta+1}}.
\end{equation}
Thus, at large times, one expects the spatial observable $x/t$ and the velocity observable~$\Theta$ to be close in a suitable sense. Minimal velocity estimates quantify this relation by showing that portions of the solution whose asymptotic position and velocity are incompatible have vanishing mass. In this sense, the term minimal velocity refers to a propagation estimate excluding components whose spatial escape rate is smaller than, or incompatible with, the velocity selected by the operator $\Theta$. We prove such an estimate in Theorem \ref{thm:intro-many-body-minvel} below.

In the present work, we first establish maximal and minimal velocity estimates for~\eqref{eq:NbdQSE}.
We then transfer these estimates to the one-body boson star equation \eqref{eq:BSE} by combining them with quantitative mean-field convergence estimates for the one-particle reduced density matrix.

\subsection{Main Results}\label{subsec:main}
In this section, we state the main results of the paper. 
The following hypothesis will be used in several of our results.

\begin{hypothesis}\label{hyp}
Let $u_0\in L^2(\bR^3)$ with $\|u_0\|_2=1$ and let $w\in L^1(\bR^3)\cap L^\infty(\bR^3)$ be an even function such that \eqref{eq:BSE} admits a global solution 
\begin{equation}
    u_t\in C\left([0,\infty), H^{1/2}(\bR^3)\cap L^\infty(\mathbb{R}^3)\right)\cap C^1\left([0,\infty), H^{-1/2}(\bR^3)\right)
\end{equation}
satisfying, for all $t\ge0$,
\begin{equation}\label{eq:decay}
    \|u_t\|_\infty\le C\langle t\rangle^{-3/2}.
\end{equation}
\end{hypothesis}
In \cite[Theorem 2.10]{BFG2025}, it is proven that Hypothesis \ref{hyp} is satisfied provided that $u_0$ belongs to a suitable Sobolev space and the norm $\|w\|_1$ is not too large (see Remark \ref{rk:hypotheses} below for more details). Note that the constant $C$ in \eqref{eq:decay} then depends on $w$ and $u_0$. \\

\noindent {\bf Many-body boson star equation. }
Our first result proves that the one-particle reduced density matrix $\gamma^{(1)}_{N,t}$ that is associated to $\Psi_{N,t}$ is close to the orthogonal projection onto the solution $u_t$ to \eqref{eq:BSE} in a quantitative way, uniformly in time. We provide a proof of this result in Section \ref{sec:mean-field} below.
\begin{theorem}[Mean-field convergence]\label{thm:intro-many-body-mf}
Under the assumptions of Hypothesis \ref{hyp}, assume that $\Psi_{N,t}$ solves \eqref{eq:NbdQSE} with initial datum $\Psi_{N,0}=u_0^{\otimes N}$. Let $\gamma^{(1)}_{N,t}$ be the one-particle reduced density matrix associated to $\Psi_{N,t}$ (see \eqref{eq:rdm}). Then, there exists $C>0$ such that for all $t\ge0$ and $N\in\mathbb{N}$, it holds
\begin{equation}
    \Tr\left|
        \gamma^{(1)}_{N,t}-|u_t\rangle\langle u_t|
    \right|
    \leq
    C N^{-1/2}.
\end{equation}
\end{theorem}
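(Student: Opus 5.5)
The plan is to use Pickl's counting method. Let $p_t := |u_t\rangle\langle u_t|$ and $q_t := \Id - p_t$, with $p_j,q_j$ denoting these operators acting on the $j$-th variable. The quantity to control is
\[
\alpha_N(t) := \langle \Psi_{N,t}, q_1 \Psi_{N,t}\rangle = 1 - \langle u_t, \gamma^{(1)}_{N,t} u_t\rangle .
\]
This is the right quantity because of the standard inequality
\[
\Tr\bigl|\gamma^{(1)}_{N,t} - p_t\bigr| \le 2\sqrt{2\,\alpha_N(t)}\lesssim \sqrt{\alpha_N(t)} .
\]
Since $\alpha_N(0)=0$ for factorized data, it therefore suffices to show $\alpha_N(t)\le C/N$ uniformly in $t\ge0$.

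Next we differentiate $\alpha_N$. The regularity in Hypothesis~\ref{hyp} justifies this, possibly after a density argument in the initial data together with $H^{1/2}$ energy bounds. The time derivative of $p_t$ is given by the commutator with the mean-field generator $h_t=\sqrt{-\Delta+1}+w*|u_t|^2$. The kinetic terms cancel exactly, which is crucial in the semi-relativistic setting since $\sqrt{-\Delta+1}$ has no special structure to exploit. After symmetrization we obtain
\[
\tfrac{\d}{\d t}\alpha_N(t) = \i\,\bigl\langle \Psi_{N,t}, \bigl[\,w(x_1-x_2) - (w*|u_t|^2)(x_1) - (w*|u_t|^2)(x_2)\,\bigr]\text{-commutator with } q_1 \,\Psi_{N,t}\bigr\rangle ,
\]
up to a $1/(N-1)$ combinatorial factor. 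We then insert $\Id=(p_1+q_1)(p_2+q_2)$ on both sides. This produces the usual three types of terms.

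\begin{enumerate}[(i)]
\item $p_1p_2[\cdots]q_1p_2$: this term vanishes exactly, because $p_2 w(x_1-x_2)p_2 = (w*|u_t|^2)(x_1)p_2$.
\item $p_1p_2\, w\, q_1q_2$: this term is handled by the weighted-operator trick. Writing $\widehat{n}$ for the counting operator, we move a factor $\widehat{n}^{1/2}$ across and use
\[
\|p_2 w(x_1-x_2) p_2\|\le \|w\|_\infty ,\qquad \|w(x_1-x_2)p_2\|\le \|w\|_2\,\|u_t\|_\infty .
\]
The bound obtained is of the form $\|u_t\|_\infty\bigl(\alpha_N + N^{-1}\bigr)$, where the $N^{-1}$ comes from the term $j=k$.
\item $p_1q_2\,w\,q_1q_2$: here one uses $\|w(x_1-x_2)p_2\|\le \|w\|_2\|u_t\|_\infty$ and Cauchy--Schwarz, giving a bound of order $\|u_t\|_\infty\,\alpha_N$.
\end{enumerate}

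The mean-field terms contribute $\|w*|u_t|^2\|_\infty\le\|w\|_1\|u_t\|_\infty^2$, or a similar quantity. The key point of the whole argument is that every error term carries a factor $\|u_t\|_\infty$ instead of merely $\|w\|_\infty$. To make this work, each bound must be arranged so that a $p_j$ sits next to $w$, using $\|w(x_1-x_2)p_2\|_{\op}\le\|w\|_2\|u_t\|_\infty$; here $\|w\|_2<\infty$ follows from~\eqref{cond-w}.

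These estimates combine into the differential inequality
\[
\bigl|\tfrac{\d}{\d t}\alpha_N(t)\bigr| \le C\,\|u_t\|_\infty\Bigl(\alpha_N(t) + \tfrac1N\Bigr).
\]
Gronwall's lemma then gives
\[
\alpha_N(t)\le \frac{C}{N}\exp\Bigl(C\int_0^t\|u_s\|_\infty\,\d s\Bigr).
\]
By the decay assumption~\eqref{eq:decay}, $\int_0^\infty \|u_s\|_\infty\,\d s\le \int_0^\infty C\langle s\rangle^{-3/2}\,\d s<\infty$. Hence $\alpha_N(t)\le C'/N$ uniformly in $t$, and the trace-norm bound $CN^{-1/2}$ follows.

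The main obstacle is term (ii). In the standard approach its bound is typically $\|w\|_\infty$ times $\alpha_N+1/N$, with no decaying factor. One must carefully place a $p_2$ adjacent to $w(x_1-x_2)$ so that the bound gains $\|w\,p_2\|\lesssim\|u_t\|_\infty$. If a single factor cannot be extracted there, I would try instead to obtain $\|u_t\|_\infty^{1/2}$ or $\|u_t\|_\infty^{2}$; these are still integrable in time given the $\langle t\rangle^{-3/2}$ decay, so any positive power $\ge 2/3$ suffices.
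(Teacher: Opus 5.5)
Your proposal follows the paper's proof: Pickl's functional $\alpha_N(t)=\|\q_1\Psi_{N,t}\|_2^2$ with $\Tr\bigl|\gamma^{(1)}_{N,t}-|u_t\rangle\langle u_t|\bigr|\le\sqrt{8\alpha_N(t)}$, the cancellation $\p_2V_t\p_2=0$, and the $(\hat n^u)^{\pm1/2}$ trick with a $\p$ placed next to $V_t$ so that $\|\p_1V_t^2\p_1\|_{\op}\le 4\|w\|_2^2\|u_t\|_\infty^2$. This gives $\frac{\d}{\d t}\alpha_N\le 8\|w\|_2\|u_t\|_\infty(\alpha_N+N^{-1})$, and Gr\"onwall with $\int_0^\infty\|u_s\|_\infty\,\d s<\infty$ concludes; the term $\p_1\q_2V_t\q_1\p_2$ that you omitted has zero imaginary part by exchange symmetry. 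The single factor $\|u_t\|_\infty$ in your term (ii) is indeed attainable, so your fallback is unnecessary, and it is wrong as stated: $\|u_t\|_\infty^{1/2}\lesssim\langle t\rangle^{-3/4}$ and $\|u_t\|_\infty^{2/3}\lesssim\langle t\rangle^{-1}$ are not integrable, so you would need a power strictly greater than $2/3$.
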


The following result is a Lieb--Robinson-type bound for the many-body boson star equation, which will be proved in Section \ref{sec:manybody-max-vel} below.

\begin{theorem}[Many-body maximal velocity estimate]\label{thm:maxvel}
Assume that $w:\bR^3\to\bR$ is even and such that
\begin{equation}
    w=w_1+w_2,\quad \text{where } w_1 \in L^1_{\mathrm{loc}}(\bR^3)\,, \,w_1\ge-\frac{\omega}{|x|} \text{ and } w_2\in L^{\infty}(\bR^3)\,,
\end{equation}
for some $\omega < \omega_c$, where $\omega_c>0$ is a critical constant defined in Remark \ref{rk:maximal}.\eqref{rk:critical}.
Let $X,Y$ be convex subsets of $\bR^3$. 
Let $\Psi_{N,t}$ be the solution to \eqref{eq:NbdQSE} with initial data $\Psi_{N,0}\in L^2(\mathbb{R}^{3N})$ with $\supp \Psi_{N,0} \subset X^N$ and $\| \Psi_{N,0} \|_2=1$.
Let $\gamma^{(1)}_{N,t}$ be the one-particle reduced density matrix associated to $\Psi_{N,t}$, as defined in \eqref{eq:rdm}. Then for all $N\in\bN$ and $t\ge0$, 
\begin{equation}\label{eq:maxvel}
    \Tr \big(\Id_{Y} \gamma^{(1)}_{N,t}\big)  \leq \exp\left(2(t-\dist(X,Y))\right) \,.
\end{equation}
\end{theorem}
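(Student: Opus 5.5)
The plan is a weighted $L^2$ (Combes--Thomas / energy) argument with a \emph{linear} exponential weight in the first particle's coordinate. The interaction contributes nothing to the time derivative of such a weight; the hypothesis on $w$ is only used to make $H_N$ a well-defined self-adjoint operator.

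\textbf{Step 1 (geometry).} Set $d:=\dist(X,Y)$; if $d\le t$ the bound is trivial since $\Tr(\Id_Y\gamma^{(1)}_{N,t})\le 1$. For $d>0$, convexity of $X$ and $Y$ and the separating hyperplane theorem give a unit vector $\nu\in\bR^3$ and $a\in\bR$ with $\nu\cdot x\le a$ on $X$ and $\nu\cdot y\ge a+d$ on $Y$. (Separate the convex sets $X$ and $Y+\overline{B(0,d)}$ up to closures, which suffices for the indicator bounds below.) For $0<\lambda\le 1$ put $f_\lambda(x):=\lambda(\nu\cdot x-a)$ and
\[
G_\lambda(t):=\big\langle \Psi_{N,t},\, e^{2f_\lambda(x_1)}\Psi_{N,t}\big\rangle .
\]
Because $\supp\Psi_{N,0}\subset X^N$ and $f_\lambda\le 0$ on $X$, we have $G_\lambda(0)\le \|\Psi_{N,0}\|^2=1$. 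Because $e^{2f_1}\ge e^{2d}\Id_Y$ and $\Tr(\Id_Y\gamma^{(1)}_{N,t})=\langle\Psi_{N,t},\Id_Y(x_1)\Psi_{N,t}\rangle$ by \eqref{eq:rdm}, we get $\Tr(\Id_Y\gamma^{(1)}_{N,t})\le e^{-2d}G_1(t)$. Symmetry of $\Psi_{N,0}$ is not needed here.

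\textbf{Step 2 (differential inequality).} The weight is a multiplication operator, so it commutes with all $w(x_j-x_k)$ and with the kinetic terms for $j\ge2$. Formally, with $\phi:=e^{f_\lambda(x_1)}\Psi_{N,t}$,
\[
G_\lambda'(t)=\i\langle\Psi,[\omega(p_1),e^{2f_\lambda}]\Psi\rangle
=2\,\mathrm{Im}\,\big\langle \phi,\, e^{f_\lambda}\omega(p_1)e^{-f_\lambda}\phi\big\rangle .
\]
Since $e^{\lambda\nu\cdot x}\,p\,e^{-\lambda\nu\cdot x}=p+\i\lambda\nu$, the conjugated operator is the Fourier multiplier
\[
\omega(\xi+\i\lambda\nu)=\sqrt{z},\qquad z=|\xi|^2+1-\lambda^2+2\i\lambda\,\nu\cdot\xi .
\]
Here $\mathrm{Re}\,z\ge0$, so the principal root is analytic for $\lambda<1$. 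Using $\mathrm{Im}\sqrt z=\mathrm{Im}\,z/(2\,\mathrm{Re}\sqrt z)$ and $\mathrm{Re}\sqrt z\ge\sqrt{\mathrm{Re}\,z}\ge|\xi|$ (at $\lambda=1$), I expect $|\mathrm{Im}\,\omega(\xi+\i\lambda\nu)|\le\lambda\le1$. For $\lambda<1$ the same computation with $\mathrm{Re}\,z\ge |\xi|^2$ gives a bound $\le\lambda$. Hence $G_\lambda'\le 2\lambda G_\lambda\le 2G_\lambda$, and Gr\"onwall gives $G_\lambda(t)\le e^{2t}$. Letting $\lambda\uparrow1$ by monotone convergence yields $G_1(t)\le e^{2t}$, and Step 1 gives \eqref{eq:maxvel}.

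\textbf{Step 3 (rigour; main obstacle).} The weight is unbounded, and $\Psi_{N,t}$ is only known to lie in the form domain $H^{1/2}$. This domain holds because the lower bound $w_1\ge-\omega/|x|$ with $\omega<\omega_c$, via Kato's/Herbst's inequality for $\sqrt{-\Delta}$, and the bounded $w_2$ make $H_N$ a semibounded self-adjoint form-defined operator. So the formal computation must be justified. I would first replace $f_\lambda$ by bounded smooth regularisations $f_{\lambda,\varepsilon}$, for example $f_{\lambda,\varepsilon}=f_\lambda/(1+\varepsilon|f_\lambda|)$ suitably smoothed. These have $|\nabla f_{\lambda,\varepsilon}|\le\lambda$ and $f_{\lambda,\varepsilon}\le f_\lambda$. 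I would compute $G'$ in the form sense, for initial data in the form domain first and then by density and unitarity for general $\Psi_{N,0}$.

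The delicate point is that $e^{f}\omega(p)e^{-f}$ is no longer an exact Fourier multiplier for non-linear $f$. I would show $\mathrm{Im}\,(e^{f}\omega(p)e^{-f})\le \|\nabla f\|_\infty+o_\varepsilon(1)$ using a Helffer--Sj\"ostrand or commutator expansion. Alternatively, I would exploit that $\omega$ extends analytically to the strip $|\mathrm{Im}\,\xi|<1$ with $|\mathrm{Im}\,\omega|\le|\mathrm{Im}\,\xi|$, together with a Combes--Thomas-type argument. The remaining steps would then be Gr\"onwall and removal of $\varepsilon$ and $\lambda$ by monotone convergence.
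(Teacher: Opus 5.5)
\textbf{Steps 1 and 2: same as the paper.} Your Steps 1 and 2 are the paper's proof. You use the same separating-hyperplane exponential weight in $x_1$; yours is placed at the boundary of $X$ rather than midway, which gives the same final constant. The interaction drops out because it commutes with the weight. For a linear weight, the conjugated kinetic energy is a Fourier multiplier whose imaginary part is bounded by $1$; this is exactly the paper's $\|G_0\|_{\op}\le 1$. Then come Gr\"onwall and a density argument from form-domain data.

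\textbf{Step 3: where you diverge, and what you leave open.} You plan to prove a sharp bound for the regularized weights, $\mathrm{Im}(\e^{f}\omega(p)\e^{-f})\le\|\nabla f\|_\infty+o_\varepsilon(1)$. This is plausible for slowly varying $f$ with $\lambda<1$, but it is the hardest estimate in your outline. The imaginary part of the conjugated operator is not the quantization of the imaginary part of its symbol. You would also need operator-norm control by the supremum of the symbol up to $o(1)$.

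\textbf{How the paper avoids this.} It uses only a crude uniform bound $\sup_{\varepsilon}\|G_\varepsilon\|_{\op}\le C$ with an unspecified $C$, together with weak convergence $G_\varepsilon\to G_0$; both are imported from earlier work.
\begin{enumerate}
\item A first Gr\"onwall with this $C$ gives the a priori bound $\Tr(\e^{-\ell_\varepsilon}\tilde\gamma^{(1)}_{N,\tau}\e^{-\ell_\varepsilon})\le\e^{2C\tau-\dist(X,Y)}$, uniformly in $\varepsilon$.
\item This yields $\e^{-\ell(x_1)}\tilde\Psi_{N,\tau}\in L^2$ and a dominating function in $\tau$.
\item One can then let $\varepsilon\to0$ in the integrated identity. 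The result is an exact Duhamel identity for the unregularized linear weight, with $G_0$ in place of $G_\varepsilon$.
\item The sharp constant is used only for $G_0$, i.e.\ only in your explicit Step 2 multiplier computation, followed by a second Gr\"onwall.
\end{enumerate}
With this two-step device the detour through $\lambda<1$ is also unnecessary, since your own computation already gives $|\mathrm{Im}\,\omega(\xi+\i\nu)|\le 1$ at $\lambda=1$.

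\textbf{A minor slip.} The claim $f_{\lambda,\varepsilon}\le f_\lambda$ is false where $f_\lambda<0$, since $f/(1+\varepsilon|f|)>f$ there. What you actually use is $f_{\lambda,\varepsilon}\le 0$ on $X$, which does hold. When removing $\varepsilon$, Fatou's lemma suffices.
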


\begin{remark}\label{rk:maximal}
We remark the following.
    \begin{enumerate}
        \item Note that there is nothing to prove in the case $\dist(X,Y)=0$. Also note that we do \emph{not} make any factorisation assumption on the initial data $\Psi_{N,0}$ in Theorem~\ref{thm:maxvel}. The maximal velocity estimate of Theorem \ref{thm:maxvel} is `sharp' in the sense that the right-hand side of \eqref{eq:maxvel} cannot be replaced by $C\exp\left(2(t-\dist(X,Y))\right)$ or $\exp\left(2C(t-\dist(X,Y))\right)$ for some $C<1$. This was shown in \cite[Remark 1.3]{BFG2025a} in the case $N=1$.
        \item At time $t=0$ and for disjoint sets $X$ and $Y$, estimate \eqref{eq:maxvel} reads $$0=\Tr \big(\Id_{Y} \gamma^{(1)}_{N,t}\big)  \leq \exp\left(2(t-\dist(X,Y))\right)=1,$$ which is obviously not sharp. Hence, our result, as stated, is not optimal for small times, but we actually prove a stronger result in the proof of Theorem \ref{thm:maxvel}: we show that for any unit vector $n\in\bR^3$ it holds
    \begin{equation}
        \Tr \big(\e^{-2n\cdot x} 
        \gamma^{(1)}_{N,t}\big)  \leq \e^{2t} \, \Tr \big(\e^{-2n\cdot x} \gamma^{(1)}_{N,0}\big) ,
    \end{equation}
which is actually sharp at $t=0$. In Theorem \ref{thm:maxvel}, we choose to state a slightly weaker result to make the speed of propagation more visible and the comparison with the literature simpler.
\item \label{rk:critical} The constant $\omega_c$ from the assumptions on $w$ in Theorem \ref{thm:maxvel} is defined as the optimal constant in the following functional inequality: for $\|u\|_2=1$,
\begin{equation}\label{eq:1bodystab}
    \int_{\bR^3}|\sqrt[4]{-\Delta} u|^2-\omega\int_{\bR^3}\int_{\bR^3}\frac{|u(x)|^2|u(y)|^2}{|x-y|}\,\d x \d y\ge 0.
\end{equation}
    By scaling, if $\omega>\omega_c$, then the left-hand side of \eqref{eq:1bodystab} is \emph{not} bounded from below for $\|u\|_2=1$, and thus by considering $u^{\otimes N}$, we deduce that $H_N$ is not bounded from below either. The fact that $\omega<\omega_c$ suffices to ensure lower-boundedness of~$H_N$ for symmetric functions was proved in \cite[Theorem 2]{LiebYau1987}.
    \item Theorem \ref{thm:maxvel} also holds under the following set of assumptions on $w$:\\
Assume that $w:\bR^3\to\bR$ is even and such that 
\begin{equation}
    w=w_1+w_2, \text{ where } w_1 \in L^1_{\mathrm{loc}}, \|w_{1,-}\|_{L^{3,\infty}}\leq \tilde{\omega}_\mathrm{c} \text{ and } w_2\in L^{\infty}(\bR^3)\,,
\end{equation}
where $w_{1,-}\in(0,\infty)$ is the negative part of $w_1$ and the critical constant $\tilde{\omega}_{\mathrm{c}}$ is defined through 
\begin{equation}\label{eq:K}
    \frac{1}{\tilde{\omega}_{\mathrm{c}}}:=\sup_{\substack{\tilde{w}\in L^{3,\infty}\setminus \{0\}\\ \psi \in \dot{H}^{1/2}(\mathbb{R}^3)\setminus \{0\}} } \frac{\big|\int_{\mathbb{R}^3} \tilde{w}|\psi|^2\big|}{\|\tilde{w}\|_{L^{3,\infty} }\|\psi\|^2_{\dot{H}^{1/2}}} \,.
\end{equation}
With this alternate set of assumptions, the only additional things to be shown with respect to what is written in the paper are, first that the Hamiltonian \eqref{eq:Nbody-Hamiltonian} is actually well defined, which is a consequence of the KLMN Theorem and, second, that $\tilde{\omega}_{\mathrm{c}}\in(0,\infty)$, which is a consequence of Hölder inequality and the Sobolev inequality in Lorentz spaces.
    \end{enumerate}
\end{remark}

The next result gathers several statements related to the minimal velocity for the many-body boson star equation and  will be proved in Section \ref{sec:manybody-min-vel} below. Roughly speaking, we show that for large $N$ and $1\ll t\ll \sqrt{N}$ the quantity $\frac{x_1}{t}\Psi_{N,t}$ is roughly equal to $\Theta_1\Psi_{N,t}$, where $\Theta_1$ is the action on the first particle of the operator $\Theta$ defined in \eqref{def:Theta}. 
\begin{theorem}[Many-body minimal velocity estimate]\label{thm:intro-many-body-minvel}
Assume Hypothesis \ref{hyp}, and assume furthermore that 
\begin{equation*}
    \|xu_0\|_2<\infty.
\end{equation*}
Assume that $\Psi_{N,t}$ solves \eqref{eq:NbdQSE} with initial datum $\Psi_{N,0}=u_0^{\otimes N}$. Then the following statements hold. 
\begin{enumerate}
\item \label{thm:minimal-L2bound}
A virial-type estimate holds: there exists $C>0$ depending on $w$ and $u_0$ such that, for all $t>0$ and $N\in\mathbb{N}$
, $N\geq2$,
    \begin{equation}\label{eq:virial-bound-Psi}
    \left\|
        \left(\frac{x_1}{t}-\Theta_1\right)
        \Psi_{N,t}
    \right\|_2^2
\leq
    C\left(\frac{1}{t^2}+
    \frac{1}{t} 
        +
        \frac{t^2}{N}
    \right).
\end{equation}
\item \label{thm:minimal-trace-bound}
Let $f,g\in C_c^\infty(\bR)$ satisfy
\begin{equation*}
    \supp f\cap\supp g=\emptyset.
\end{equation*}
Then there exists $C>0$ depending on $f$, $g$, $w$ and $u_0$ such that, for all $t>0$ and~$N\in\mathbb{N}$,
\begin{equation*}
     \Tr\left[
        g(\Theta^2) \,
        f^2\left(\frac{|x|^2}{t^2}\right)
        g(\Theta^2) \,
        \gamma^{(1)}_{N,t}
    \right]\le C\left(\frac{1}{t^2}
        +\frac{1}{t} 
        +
        \frac{t^2}{N}
    \right).
\end{equation*}

\item \label{thm:minimal-nu-bound}
Assume in addition that the solution $u_t$ to the boson star equation \eqref{eq:BSE} satisfies the following weighted dispersive bounds
\begin{equation}\label{eq:weighted1bodydecay}
    \|xu_t\|_\infty\lesssim 1,
    \qquad
    \|xu_t\|_2\|u_t\|_\infty\lesssim 1,
\end{equation}
uniformly in $t\ge0$. Define the weighted `bad-particle' component
\begin{equation*}
\nu_N(t):=\Tr\left(\left(\Id-|u_t\rangle\langle u_t|\right)x^2\left(\Id-|u_t\rangle\langle u_t|\right)\gamma_{N,t}^{(1)}\right)    .
\end{equation*}
Then there exists $C>0$ independent of $N\in\mathbb{N}$ and $t\ge0$, such that
\begin{equation}\label{eq:bound-nuN}
    \nu_N(t)
    \leq
    Ct^2.
\end{equation}

\item \label{thm:minimal-scattering}
Assume in addition that the solution $u_t$ to the boson star equation \eqref{eq:BSE} scatters in $L^2(\bR^3)$, i.e., assume that there exists $u_+\in L^2(\bR^3)$
such that
\begin{equation}\label{eq:hyp-scattering}
    \lim_{t\to\infty}
    \left\|
        u_t
        -
        \e^{-\i t\sqrt{-\Delta + 1}}u_+
    \right\|_2
    =0.
\end{equation}
Let $v\in(0,1)$ and assume that
\begin{equation}\label{eq:outgoing-assumption}
    \left\|
        \Id_{\{\Theta^2\ge v^2+\varepsilon\}}
        u_+
    \right\|_2
    =1
\end{equation}
for some $\varepsilon\in(0,1-v^2)$. Then
\begin{equation*}
    \lim_{\min\{t, N,  N/t^2\}\to \infty}
    \left\|
        \Id_{\{|x_1|\ge vt\}}
        \Psi_{N,t}
    \right\|_2^2
    =1.
\end{equation*}
\end{enumerate}
\end{theorem}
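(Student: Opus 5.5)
We prove the four items in order; the main tool is the virial functional for the first particle,
\[
    F_N(t):=\Big\| \big(x_1-t\Theta_1\big)\Psi_{N,t}\Big\|_2^2
    =\big\langle \Psi_{N,t},(x_1-t\Theta_1)^2\Psi_{N,t}\big\rangle .
\]
By symmetry of $\Psi_{N,t}$, we have $F_N(t)=\Tr\big[(x-t\Theta)^2\gamma^{(1)}_{N,t}\big]$. Item \eqref{thm:minimal-L2bound} is then equivalent to $F_N(t)\le C(1+t+t^4/N)$.

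\textbf{Item (1).} We differentiate $F_N$ in the Heisenberg picture. For the free part we use $[\sqrt{-\Delta+1},x]=-\i\Theta$, so that
\[
    \tfrac{\d}{\d t}(x-t\Theta)=\i[\sqrt{-\Delta+1},x-t\Theta]-\Theta=0 .
\]
Only the interaction contributes, and it commutes with $x_1$. Hence
\[
    F_N'(t)=\frac{\i}{N-1}\sum_{k\ge2}\big\langle\Psi_{N,t},\big[w(x_1-x_k),(x_1-t\Theta_1)^2\big]\Psi_{N,t}\big\rangle .
\]
Writing $A=x_1-t\Theta_1$, we have $[w,A^2]=[w,A]A+A[w,A]$ and $[w(x_1-x_k),A]=-t[w(x_1-x_k),\Theta_1]$. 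This gives
\[
    |F_N'(t)|\le 2t\,\Big\|\frac1{N-1}\sum_{k\ge 2}[w(x_1-x_k),\Theta_1]\Psi_{N,t}\Big\|_2\,F_N(t)^{1/2}.
\]
To bound the commutator term we compare with the mean field. Replacing $\frac1{N-1}\sum_k w(x_1-x_k)$ by $(w*|u_t|^2)(x_1)$ gives the bounded commutator $[w*|u_t|^2,\Theta_1]$. A Calderón-type commutator estimate for the order-zero operator $\Theta$ bounds it by $\|\nabla(w*|u_t|^2)\|_\infty\lesssim\|w\|_1\|\nabla|u_t|^2\|_\infty$. Alternatively, without derivatives, it is bounded by $\|w*|u_t|^2\|_\infty\le\|w\|_1\|u_t\|_\infty^2\lesssim\langle t\rangle^{-3}$, using \eqref{eq:decay}. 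The fluctuation part is controlled by the number of particles outside $u_t$. Here we reuse the Pickl-type counting functional $\alpha_N(t)\lesssim N^{-1}$ from the proof of Theorem \ref{thm:intro-many-body-mf} (Section \ref{sec:mean-field}), whose uniform-in-time bound relies on \eqref{eq:decay}. This contributes $O(N^{-1/2})$ in norm. Altogether,
\[
    F_N'\lesssim t\big(\langle t\rangle^{-3}+N^{-1/2}\big)F_N^{1/2}.
\]
Integrating $\frac{\d}{\d t}F_N^{1/2}$ and using $F_N(0)=\|xu_0\|_2^2<\infty$ gives $F_N^{1/2}\lesssim 1+\log\langle t\rangle+t^2N^{-1/2}$. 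Squaring and dividing by $t^2$ yields \eqref{eq:virial-bound-Psi}, since the logarithm is absorbed into the $1/t$ term. \emph{This fluctuation estimate is the main obstacle.} One needs a uniform-in-time bound on $\|\frac1{N-1}\sum_k (w(x_1-x_k)-w*|u_t|^2(x_1))\Psi_{N,t}\|$ of order $N^{-1/2}$, and obtaining it requires revisiting the mean-field argument rather than merely quoting its trace-norm conclusion.

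\textbf{Item (2).} Since $\supp f\cap\supp g=\emptyset$, there is $\delta>0$ such that on the support of $f(|x|^2/t^2)g(\Theta^2)$ the quantities $|x|^2/t^2$ and $\Theta^2$ are separated by $\delta$. We write
\[
    f\big(|x|^2/t^2\big)g(\Theta^2)=f\big(|x|^2/t^2\big)\big[g(\Theta^2)-g(|x|^2/t^2)\big].
\]
Using a Helffer–Sjöstrand representation, together with the fact that commutators $[x/t,\Theta]$ are $O(1/t)$, the difference $g(\Theta^2)-g(|x|^2/t^2)$ can be written as a bounded operator times $(x/t-\Theta)$, plus an $O(1/t)$ error. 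The trace is therefore bounded by $C\|(x_1/t-\Theta_1)\Psi_{N,t}\|^2+C/t^2$, and item \eqref{thm:minimal-L2bound} concludes.

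\textbf{Item (3).} Write $q_t=\Id-|u_t\rangle\langle u_t|$, so that $\nu_N(t)=\Tr[q_tx^2q_t\gamma^{(1)}_{N,t}]$. We differentiate $\nu_N$, as in the Pickl method, with the weight $x^2$ inserted. The free flow produces $2\,\mathrm{Re}\,\Tr[q_t x\cdot\Theta q_t\gamma]$, which is bounded by $\nu_N^{1/2}$ using that $\Theta$ is bounded. The interaction terms involve $q_t\,x\,(w*|u_t|^2)\,p_t$ and similar terms. These are controlled by $\|xu_t\|_\infty$ and $\|xu_t\|_2\|u_t\|_\infty$ through \eqref{eq:weighted1bodydecay}, plus fluctuation terms of order $\nu_N^{1/2}$. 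This gives $\nu_N'\lesssim 1+\nu_N^{1/2}$. Since $\nu_N(0)=0$, it follows that $\nu_N\lesssim t^2$.

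\textbf{Item (4).} By symmetry, $\|\Id_{\{|x_1|\ge vt\}}\Psi_{N,t}\|^2=\Tr[\Id_{\{|x|\ge vt\}}\gamma^{(1)}_{N,t}]$. By Theorem \ref{thm:intro-many-body-mf} this trace differs by $O(N^{-1/2})$ from $\|\Id_{\{|x|\ge vt\}}u_t\|^2$, and by \eqref{eq:hyp-scattering} the latter differs by $o(1)$ from $\|\Id_{\{|x|\ge vt\}}\e^{-\i t\sqrt{-\Delta+1}}u_+\|^2$. For the free flow, the usual minimal velocity estimate (the case $w=0$ of item \eqref{thm:minimal-trace-bound}, applied with a smooth $g$ supported in $\{\Theta^2\ge v^2+\varepsilon/2\}$ and equal to $1$ on $\{\Theta^2\ge v^2+\varepsilon\}$, and with $f$ supported in $\{|x|^2\le v^2t^2\}$) together with \eqref{eq:outgoing-assumption} shows that the mass in $\{|x|<vt\}$ tends to $0$. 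Hence the limit equals $1$ as $\min\{t,N,N/t^2\}\to\infty$.
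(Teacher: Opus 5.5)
Items (2) and (3) follow the paper's strategy, and item (4) takes a different but valid route. Item (1), on which (2) rests, has a genuine gap exactly at the step you flagged, and $\alpha_N\lesssim N^{-1}$ alone does not close it. You correctly keep the average $V:=\frac1{N-1}\sum_{k\ge2}V_t(x_1,x_k)$ inside the norm. You then need $\|V\Phi\|_2\lesssim N^{-1/2}$ for $\Phi=\Psi_{N,t}$ \emph{and} for $\Phi=\Theta_1\Psi_{N,t}$. In $\|V\Phi\|_2^2$ the diagonal part is $O(1/N)$. In the off-diagonal part, insert $\p_j+\q_j$ ($j=2,3$) on both sides; since $\p_2V_t(x_1,x_2)\p_2=0$, only terms containing a $\q_2$ and a $\q_3$ survive. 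When these sit on opposite sides, the term is at most $C\alpha_N$. But the term $\langle \q_2\q_3\Phi, V_t(x_1,x_2)V_t(x_1,x_3)\p_2\p_3\Phi\rangle$ has both on one side, and $\alpha_N\lesssim 1/N$ only gives $\|\q_2\q_3\Phi\|_2\lesssim N^{-1/2}$. So your sketch yields $\|V\Phi\|_2\lesssim N^{-1/4}$, i.e.\ a term $t^2/\sqrt N$ instead of $t^2/N$. The missing ingredient is Pickl's weight shifting. Set $\hat m:=\frac1N\sum_{j\ge2}\q_j$, which commutes with $\Theta_1$. For $X=\q_2\q_3V_t(x_1,x_2)V_t(x_1,x_3)\p_2\p_3$ one has $\hat m X=X(\hat m+\frac2N)$. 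Hence the bad term is at most $C\|\hat m^{-1/2}\q_2\q_3\Phi\|_2\,\|(\hat m+\frac2N)^{1/2}\Phi\|_2\lesssim\sqrt{\alpha_N(\alpha_N+N^{-1})}\lesssim N^{-1}$. With this supplied, your route works and is actually sharper than the paper's. The mean-field commutator is $O(\|w\|_1\|u_t\|_\infty^2)=O(\langle t\rangle^{-3})$, so $\int_0^t\tau\langle\tau\rangle^{-3}\,\d\tau$ is bounded and the $1/t$ term in \eqref{eq:virial-bound-Psi} disappears.

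The paper avoids any fluctuation analysis. It reduces to $w_{12}$ by symmetry, does \emph{not} subtract the mean field, and inserts $\p_2+\q_2$. The piece $\|[w_{12},\Theta_1]\p_2\|_{\mathrm{op}}\le 2\|\p_2w_{12}^2\p_2\|_{\mathrm{op}}^{1/2}\le 2\|u_t\|_\infty\|w\|_2$ is small by dispersion alone. The other piece satisfies $\|[w_{12},\Theta_1]\q_2\Psi_{N,t}\|_2\le 2\|w\|_\infty\sqrt{\alpha_N(t)}$. This uses only $\alpha_N\lesssim N^{-1}$, at the cost of $\langle t\rangle^{-3/2}$ decay and hence the $1/t$ term.

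There are two smaller points. In (3), the inequality $\nu_N'\lesssim 1+\nu_N^{1/2}$ with $\nu_N(0)=0$ only gives $\nu_N\lesssim t+t^2$, not $Ct^2$ for small $t$. Every contribution must carry a factor $\nu_N^{1/2}$. The paper obtains this by expanding $\q_1x_1^2\q_1=x_1^2-x_1^2\p_1-\p_1x_1^2+\p_1x_1^2\p_1$ and using that $V_tx_1^2$ and $\p_1V_tx_1^2\p_1$ are self-adjoint. Then only $\langle x_1\q_1\Psi_{N,t},V_tx_1\p_1\Psi_{N,t}\rangle$ and $\langle x_1\p_1V_t\Psi_{N,t},x_1\q_1\Psi_{N,t}\rangle$ survive, which also avoids the uncontrolled $\|x_1\q_1V_t\Psi_{N,t}\|_2$. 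In (4), you reduce to the free flow via Theorem~\ref{thm:intro-many-body-mf} and \eqref{eq:hyp-scattering}, whereas the paper goes through item (2) and needs $t^2/N\to0$. Your route is correct and gives the limit as $\min\{t,N\}\to\infty$, as in Corollary~\ref{cor:many-body-minvel-improved}. However, item (2) with $w=0$ cannot be invoked literally, since $u_+$ is only in $L^2$. Use $\|(x-t\Theta)\e^{-\i t\sqrt{-\Delta+1}}\phi\|_2=\|x\phi\|_2$ for $x\phi\in L^2$, together with a density argument.
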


\begin{remark}\label{rk:hypotheses}
We make the following remarks about Hypothesis \ref{hyp} and Theorems \ref{thm:intro-many-body-mf}--\ref{thm:intro-many-body-minvel}. 
    \begin{enumerate}
    
    \item The estimates from Hypothesis \ref{hyp} are satisfied under suitable smallness and regularity assumptions on $u_0$:

    \begin{itemize}
        \item 
        For potentials $w \in L^1(\mathbb{R}^3) \cap L^\infty(\mathbb{R}^3)$, the global solution satisfies the linear decay rate 
        \begin{equation}
            \|u_t\|_\infty \lesssim \langle t \rangle^{-3/2}
        \end{equation}
         as shown in \cite{ChoOzawa2006,ChoOzawa2007}, provided the initial data $u_0$ is sufficiently small in the weighted Sobolev space $H^s(\mathbb{R}^3) \cap L^1(\mathbb{R}^3)$ (or $H^s(\mathbb{R}^3)  \cap L^2(\mathbb{R}^3,|x|^2 \d x)$) for $s$ sufficiently large. Note that in  Hypothesis \ref{hyp}, we impose $\|u_0\|_2=1$, so the smallness assumption on $u_0$ is equivalent, by scaling, to imposing that the potential $w$ is small. (For the detail, see e.g. \cite[Section 2.1]{Lee2019}.)
        \item The case $w \in L^1(\bR^3)$ was treated in \cite[Theorem 2.10]{BFG2025}. Let $s \ge 5/2$ and~$u_0$ in~$H^{s}(\bR^3)\cap H^{s, 1}(\bR^3)$. 
        Here
        \[
            \|f\|_{H^{s,1}} := \| (1 - \Delta)^{s/2} f \|_{1}.
        \]
        If $\|w \|_{1}\|u_0\|_{H^{s}\cap H^{s, 1}}$ is sufficiently small, then the solution $u_t$ of \eqref{eq:BSE} with initial data $u_0$ satisfies the inequalities
        \begin{equation*}
             \|u_t\|_{\infty} \lesssim \langle t \rangle^{-3/2} \| u_0\|_{H^s\cap H^{s,1}} \quad \|u_t\|_{H^s}\lesssim \|u_0\|_{H^s\cap H^{s,1}}.
        \end{equation*}
    \end{itemize}

    \item 
    
    The weighted dispersive bounds for the solution to the boson star equation \eqref{eq:BSE}
    \begin{equation}
        \|xu_t\|_\infty\lesssim 1,
        \qquad
        \|xu_t\|_2\|u_t\|_\infty\lesssim 1,
    \end{equation}
    used in Theorem \ref{thm:intro-many-body-minvel}\eqref{thm:minimal-nu-bound}, are for instance satisfied in the case $w \in L^1(\bR^3)$ if $u_0   \in H^{s, 1}(\bR^3)$, $\langle x \rangle u_0 \in H^s(\bR^3)$ with $s \ge 5/2$ and $\|w \|_{1}\|u_0\|_{H^{s}\cap H^{s, 1}}$ is sufficiently small. These results are proved in Lemma \ref{lemma:xu_t L infty} and \cite[Lemma 8.1]{BFG2025} respectively.
    
    \item The main ingredient to make our constants in the theorems independent of $t$ is the inequality
      \begin{equation}
                \|u_t\|_\infty \lesssim \langle t \rangle^{-3/2}\in L^1(0,\infty).
            \end{equation}
We remark that any integrable time decay rate, namely              \begin{equation}
                \|u_t\|_\infty \lesssim \langle t \rangle^{-r}
            \end{equation}
            for $r>1$, would be sufficient in order to have time-independent constants. Moreover, one can work with $w$ in suitable $L^p$ spaces, which allows to further generalise the results. We omit this for simplicity of the presentation.

    \item 
    \cite[Theorem 2.12]{BFG2025} provides sufficient conditions on $u_0$ and $w$ ensuring the existence of $u_+$ satisfying~\eqref{eq:hyp-scattering}.
    \end{enumerate}
\end{remark}

 \smallskip

\noindent {\bf One-body boson star equation. } 
In this section, we consider maximal and minimal velocity properties for the one-body boson star equation, which can also be found in~\cite{BFG2025}. We re-prove these results, using the ones stated above on the many-body boson star equation, in Section \ref{sec:BSE-prop-est} below.

\begin{theorem}[One-body maximal velocity estimate {\cite[Theorem 2.4]{BFG2025}}]\label{thm:intro-BSE-maxvel}
Assume Hypothesis \ref{hyp} 
and $\|xu_0\|_{2}<\infty$.
Let $u_t$ solve \eqref{eq:BSE} with normalized initial datum $u_0$ satisfying 
\[
    \supp u_0\subset X,
\]
with $X\subset\mathbb{R}^3$ convex.
Then, for every convex set $Y\subset\mathbb{R}^3$ and every $t\ge0$,
\[
\|\mathbbm{1}_Y u_t\|_2^2
    \leq
    \exp\big(2(t-\dist(X,Y))\big).
\]
\end{theorem}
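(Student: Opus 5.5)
We deduce the one-body bound from the many-body results by passing to the mean-field limit. Take the factorized initial datum $\Psi_{N,0}=u_0^{\otimes N}$. Since $\supp u_0\subset X$, we have $\supp \Psi_{N,0}\subset X^N$ and $\|\Psi_{N,0}\|_2=1$. Moreover, $w\in L^1\cap L^\infty$ under Hypothesis \ref{hyp}, so the decomposition $w=w_1+w_2$ required in Theorem \ref{thm:maxvel} holds with $w_1=0$ and $w_2=w$; the lower bound $w_1\ge -\omega/|x|$ is then trivially satisfied for any $\omega\in(0,\omega_c)$. Theorem \ref{thm:maxvel} therefore gives, for every $N\ge2$ and $t\ge0$,
\begin{equation*}
    \Tr\big(\Id_Y\gamma^{(1)}_{N,t}\big)\le \exp\big(2(t-\dist(X,Y))\big),
\end{equation*}
with a right-hand side independent of $N$.

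Next we compare $\gamma^{(1)}_{N,t}$ with $|u_t\rangle\langle u_t|$ using Theorem \ref{thm:intro-many-body-mf}. Since $\Id_Y$ is a bounded operator of norm at most $1$, the trace-norm bound gives
\begin{equation*}
    \Big|\|\Id_Y u_t\|_2^2-\Tr\big(\Id_Y\gamma^{(1)}_{N,t}\big)\Big|
    =\Big|\Tr\big(\Id_Y(|u_t\rangle\langle u_t|-\gamma^{(1)}_{N,t})\big)\Big|
    \le \Tr\big|\gamma^{(1)}_{N,t}-|u_t\rangle\langle u_t|\big|\le CN^{-1/2}.
\end{equation*}
Combining the two estimates yields $\|\Id_Y u_t\|_2^2\le \exp(2(t-\dist(X,Y)))+CN^{-1/2}$ for every $N$. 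Letting $N\to\infty$ at fixed $t$ gives the claim.

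Uniformity in time of the mean-field bound is not needed here, only convergence for each fixed $t$. The hypothesis $\|xu_0\|_2<\infty$ is likewise not needed in this route; it is harmless. The main point to verify is that the many-body dynamics is well posed with the admissible decomposition of $w$, which is immediate from \eqref{cond-w}. It also requires that Theorem \ref{thm:intro-many-body-mf} indeed applies to this same datum $u_0$, which it does under Hypothesis \ref{hyp}.
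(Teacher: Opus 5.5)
Your proposal is correct and follows the paper's own proof: apply Theorem \ref{thm:maxvel} to $\Psi_{N,0}=u_0^{\otimes N}$, bound $\Tr(\Id_Y|u_t\rangle\langle u_t|)$ by $\Tr(\Id_Y\gamma^{(1)}_{N,t})$ plus the trace-norm error from Theorem \ref{thm:intro-many-body-mf}, and let $N\to\infty$. Your side remarks are also accurate: $w=0+w$ is an admissible decomposition, and the argument uses neither the uniformity in time of the mean-field bound nor the assumption $\|xu_0\|_2<\infty$.
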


\begin{theorem}[One-body minimal velocity estimate {\cite[Theorem 2.16]{BFG2025}}]\label{thm:intro-BSE-minvel}
Assume Hypothesis \ref{hyp}.
Let $u_t$ solve \eqref{eq:BSE}. Let $f,g\in C_c^\infty(\bR)$ satisfy
\[
    \supp f\cap\supp g=\emptyset.
\]
Then
\[
    \lim_{t\to \infty}\left\|
        f\left(\frac{|x|^2}{t^2}\right)
        g(\Theta^2)u_t
    \right\|_2^2
    =0.
\]
\end{theorem}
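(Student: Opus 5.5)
We deduce the statement from Theorem \ref{thm:intro-many-body-minvel}\eqref{thm:minimal-trace-bound} and Theorem \ref{thm:intro-many-body-mf}, passing to the mean-field limit $N\to\infty$ at fixed $t$ and only then letting $t\to\infty$. Set $\Psi_{N,0}=u_0^{\otimes N}$ and write $A_t:=g(\Theta^2)f^2(|x|^2/t^2)g(\Theta^2)$. Since $f,g$ are bounded and real, $A_t$ is a bounded nonnegative operator with $\|A_t\|_{\op}\le\|f\|_\infty^2\|g\|_\infty^2$, uniformly in $t$. Moreover,
\[
\Big\|f\Big(\tfrac{|x|^2}{t^2}\Big)g(\Theta^2)u_t\Big\|_2^2=\langle u_t,A_tu_t\rangle=\Tr\big(A_t\,|u_t\rangle\langle u_t|\big).
\]
This identity puts the one-body quantity into exactly the form bounded in the many-body theorems.

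For fixed $t>0$ and every $N\ge2$, we split
\[
\Tr\big(A_t|u_t\rangle\langle u_t|\big)\le \Tr\big(A_t\gamma^{(1)}_{N,t}\big)+\|A_t\|_{\op}\,\Tr\big|\gamma^{(1)}_{N,t}-|u_t\rangle\langle u_t|\big|.
\]
The first term is at most $C(t^{-2}+t^{-1}+t^2/N)$ by Theorem \ref{thm:intro-many-body-minvel}\eqref{thm:minimal-trace-bound}. The second is at most $C'N^{-1/2}$ by Theorem \ref{thm:intro-many-body-mf}. Letting $N\to\infty$ with $t$ fixed, the left side is independent of $N$, so
\[
\Big\|f\Big(\tfrac{|x|^2}{t^2}\Big)g(\Theta^2)u_t\Big\|_2^2\le C\big(t^{-2}+t^{-1}\big),
\]
and this tends to $0$ as $t\to\infty$. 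Only the order of limits matters here: $N\to\infty$ must come first, which kills the $t^2/N$ term.

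The main obstacle is that Theorem \ref{thm:intro-many-body-minvel} is stated under the extra assumption $\|xu_0\|_2<\infty$, which the present statement does not impose. I would handle this by density. Take $u_0^{(k)}$ normalized with $\|xu_0^{(k)}\|_2<\infty$ and $u_0^{(k)}\to u_0$ in $H^{1/2}$. This requires that the corresponding solutions $u^{(k)}_t$ still satisfy Hypothesis \ref{hyp}; this is plausible in the small-data regime of Remark \ref{rk:hypotheses}, but I do not have a proof of it. The approximation would also need uniform-in-time $L^2$-stability, $\sup_t\|u_t^{(k)}-u_t\|_2\to0$. To prove that stability, I would run a Grönwall argument on the difference equation, using $w\in L^1$, the bound $\|w*(|u|^2-|v|^2)\|_\infty\lesssim\|w\|_1(\|u\|_\infty+\|v\|_\infty)\|u-v\|_2$, and the integrability of $\langle t\rangle^{-3/2}$.

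With this stability in hand, the conclusion transfers to $u_0$ by a uniform operator bound:
\[
\big|\langle u_t,A_tu_t\rangle-\langle u_t^{(k)},A_tu_t^{(k)}\rangle\big|\le2\|A_t\|_{\op}\|u_t-u^{(k)}_t\|_2.
\]
We then take $\limsup_{t\to\infty}$ first and $k\to\infty$ afterwards. If a uniform-in-$k$ constant in \eqref{eq:decay} is not available, the statement should simply be read under the assumption $\|xu_0\|_2<\infty$, and the first two paragraphs give the complete proof.
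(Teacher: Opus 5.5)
Your argument is correct and is essentially the paper's proof: bound $\langle u_t,A_tu_t\rangle$ by $\Tr(A_t\gamma^{(1)}_{N,t})$ plus $\|A_t\|_{\op}$ times the trace-norm error of Theorem~\ref{thm:intro-many-body-mf}, then send $N\to\infty$ before $t\to\infty$. The paper phrases this along sequences with $t=o(N^{1/2})$, and carries out exactly your fixed-$t$ limit, with the rate $C(t^{-2}+t^{-1})$, in the proof of Corollary~\ref{cor:many-body-minvel-improved}.

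Your caveat is well founded: the paper likewise invokes Theorem~\ref{thm:intro-many-body-minvel}\eqref{thm:minimal-trace-bound}, which needs $\|xu_0\|_2<\infty$, without that assumption appearing in the statement. Your density patch does not remove it under the bare Hypothesis~\ref{hyp}. The uniform-in-time $L^2$ stability is not the obstacle. Your $L^\infty$ bound should carry $\|w\|_2$ rather than $\|w\|_1$. Splitting the nonlinearity as $(w*|u^{(k)}|^2)(u-u^{(k)})+(w*(|u|^2-|u^{(k)}|^2))u$ gives $\sup_t\|u_t-u^{(k)}_t\|_2\le\e^{2\|w\|_2\int_0^\infty\|u_s\|_\infty\,\d s}\|u_0-u^{(k)}_0\|_2$, with no decay needed for $u^{(k)}$. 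The real obstacle is that each $u^{(k)}$ must itself satisfy Hypothesis~\ref{hyp}, which the hypothesis on $u$ alone does not provide. So both your argument and the paper's establish the statement under the additional assumption $\|xu_0\|_2<\infty$.
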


\subsection{Improved many-body minimal velocity estimate}
\label{subsec:many-body-minvel-improved}

We now use the one-body estimate Theorem \ref{thm:intro-BSE-minvel} and \eqref{eq:alphaNest} to improve the many-body minimal velocity estimate.

\begin{corollary}[Improved many-body minimal velocity estimate]
\label{cor:many-body-minvel-improved}
Under the assumptions of Theorem~\ref{thm:intro-many-body-minvel}\eqref{thm:minimal-trace-bound}, there exists $C>0$ depending on $f$, $g$, $w$ and $u_0$ such that, for all $t>0$ and $N\ge2$,
\begin{equation}\label{eq:many-body-minvel-improved}
    \Tr\left(g(\Theta^2)f^2\left(\frac{|x|^2}{t^2}\right)g(\Theta^2)\gamma^{(1)}_{N,t}\right)\le C\left(\frac{1}{t^2}+\frac{1}{t}+\frac{1}{N}\right).
\end{equation}
If, in addition, the assumptions of Theorem~\ref{thm:intro-many-body-minvel}\eqref{thm:minimal-scattering} hold, then
\begin{equation}\label{eq:many-body-scattering-improved}
    \lim_{\substack{\min(t,N)}\to\infty}\left\|\Id_{\{|x_1|\ge vt\}}\Psi_{N,t}\right\|_2^2=1.
\end{equation}
No relation between $t$ and $N$ is required in this limit.
\end{corollary}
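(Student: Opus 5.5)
Write $A_t:=g(\Theta^2)f^2(|x|^2/t^2)g(\Theta^2)$. This is a bounded, nonnegative one-body operator with $\|A_t\|_{\op}\le\|f\|_\infty^2\|g\|_\infty^2$, uniformly in $t$. The plan is to split the trace into a mean-field part and a fluctuation part:
\[
\Tr\big(A_t\gamma^{(1)}_{N,t}\big)=\langle u_t,A_tu_t\rangle+\Tr\Big(A_t\big(\gamma^{(1)}_{N,t}-|u_t\rangle\langle u_t|\big)\Big).
\]
The first term equals $\|f(|x|^2/t^2)g(\Theta^2)u_t\|_2^2$, which is a purely one-body quantity. Theorem~\ref{thm:intro-BSE-minvel} only says it tends to zero. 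Its proof (or the estimate \eqref{eq:alphaNest} referred to before the statement, combined with the $N\to\infty$ limit of Theorem~\ref{thm:intro-many-body-minvel}\eqref{thm:minimal-trace-bound}) should give the rate $C(t^{-2}+t^{-1})$. Concretely, I would apply Theorem~\ref{thm:intro-many-body-minvel}\eqref{thm:minimal-trace-bound} for fixed $t$ and let $N\to\infty$. By Theorem~\ref{thm:intro-many-body-mf}, $\gamma^{(1)}_{N,t}\to|u_t\rangle\langle u_t|$ in trace norm and $A_t$ is bounded, so the left side converges to $\langle u_t,A_tu_t\rangle$. The bound passes to the limit with the $t^2/N$ term removed, which gives $\langle u_t,A_tu_t\rangle\le C(t^{-2}+t^{-1})$ for all $t>0$.

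The second term is where the $t^2/N$ loss has to be avoided. The naive bound $\|A_t\|\,\Tr|\gamma-|u\rangle\langle u||\le CN^{-1/2}$ is not good enough, since we want $1/N$. Instead I would write $p=|u_t\rangle\langle u_t|$, $q=\Id-p$, and expand
\[
\Tr(A_t\gamma)-\langle u_t,A_tu_t\rangle=\Tr(qA_tq\gamma)+2\Re\Tr(pA_tq\gamma)-\langle u_t,A_tu_t\rangle\big(1-\Tr(p\gamma)\big).
\]
The first and last terms are bounded by $C\Tr(q\gamma)=C\alpha_N(t)$, where $\alpha_N(t)=\langle\Psi_{N,t},q_1\Psi_{N,t}\rangle$. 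I expect \eqref{eq:alphaNest} to be the estimate $\alpha_N(t)\le C/N$ uniformly in $t$; this is the Pickl-type counting functional bound, which presumably underlies Theorem~\ref{thm:intro-many-body-mf} since $\Tr|\gamma-p|\lesssim\sqrt{\alpha_N}$. The cross term is the delicate one. By Cauchy--Schwarz it is only $O(\sqrt{\alpha_N})=O(N^{-1/2})$. To do better, I would use $|\Tr(pA_tq\gamma)|\le\|A_tu_t\|_2\,\|q\gamma p\|$ together with $\|A_tu_t\|_2\le C\|f(|x|^2/t^2)g(\Theta^2)u_t\|_2\le C(t^{-2}+t^{-1})^{1/2}$. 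AM-GM then gives $|\Tr(pA_tq\gamma)|\le C(t^{-2}+t^{-1})+C\alpha_N$. Altogether this yields \eqref{eq:many-body-minvel-improved}.

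For \eqref{eq:many-body-scattering-improved}, I would rerun the proof of Theorem~\ref{thm:intro-many-body-minvel}\eqref{thm:minimal-scattering} with the improved bound. That proof presumably inserts partitions of unity in $\Theta^2$ and $|x|^2/t^2$, uses the outgoing assumption \eqref{eq:outgoing-assumption} through the mean-field approximation $\|\Id_{\{\Theta^2\ge v^2+\varepsilon\}}u_t\|\to1$, and controls the incompatible region by part \eqref{thm:minimal-trace-bound}. There, the only place where $t^2/N$ enters is the trace bound; mean-field errors are $O(N^{-1/2})$ uniformly in $t$ by Theorem~\ref{thm:intro-many-body-mf}. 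Replacing that bound by \eqref{eq:many-body-minvel-improved} removes any coupling between $t$ and $N$.

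The main obstacle is confirming that the one-body rate, and hence \eqref{eq:alphaNest}, is available quantitatively, with $\alpha_N\lesssim1/N$ rather than only $N^{-1/2}$ for the trace norm. If \eqref{eq:alphaNest} is only an $N^{-1/2}$ statement, the cross-term argument fails, and I would fall back on the $N^{-1/2}$ rate in \eqref{eq:many-body-minvel-improved}. That rate is still enough for the limit \eqref{eq:many-body-scattering-improved}.
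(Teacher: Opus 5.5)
Your proposal is correct and follows essentially the paper's route. The one-body rate $\langle u_t,A_tu_t\rangle\le C(t^{-2}+t^{-1})$ is obtained exactly as you describe, by letting $N\to\infty$ at fixed $t$ in Theorem~\ref{thm:intro-many-body-minvel}\eqref{thm:minimal-trace-bound}, and the $p+q$ splitting is then combined with $\alpha_N(t)\le C/N$, which Proposition~\ref{lemma: bound alpha} gives uniformly in $t$, so no fallback is needed. The paper does the splitting at the vector level, $\|B\Psi_{N,t}\|_2^2\le 2\|B\p_1\Psi_{N,t}\|_2^2+2\|B\q_1\Psi_{N,t}\|_2^2$ with $B=f(|x_1|^2/t^2)g(\Theta_1^2)$; this absorbs your cross term and your implicit bound $\|q\gamma p\|_{\mathrm{op}}\le\sqrt{\alpha_N}$, and the same splitting applied to $(\Id-g(\Theta_1^2))\Psi_{N,t}$ handles the scattering limit.
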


\begin{remark}
Under the same assumptions, the one-body moment estimate from Appendix~\ref{sec:one-body-minvel-direct} gives
\[
    \sup_{t\ge0}\|(x-t\Theta)u_t\|_2<\infty.
\]
Applying the functional calculus estimate used in Subsection~\ref{subsec:min-vel-est}, we obtain
\[
    \left\|f\left(\frac{|x|^2}{t^2}\right)g(\Theta^2)u_t\right\|_2^2\le\frac{C}{t^2}.
\]
Using $\p_1+\q_1=\Id$ and \eqref{eq:alphaNest} as above, we obtain
\[
    \Tr\left(g(\Theta^2)f^2\left(\frac{|x|^2}{t^2}\right)g(\Theta^2)\gamma^{(1)}_{N,t}\right)\le C\left(\frac{1}{t^2}+\frac{1}{N}\right).
\]
\end{remark}

\subsection{History and Comparison with Previous Works}\label{sec:hist}

The boson star equation \eqref{eq:BSE}
is the semi-relativistic Hartree equation arising as the effective dynamics of pseudo-relativistic bosons. The case of an attractive Newtonian interaction potential $w(x)=-\kappa |x|^{-1}$ in three dimensions, was considered in various works in the literature.  The static mean-field problem goes back to the work of Lieb and Yau \cite{LiebYau1987}, where the Chandrasekhar theory was obtained as a limit of a many-particle system in the mean-field regime with semi-relativistic kinetic energy. The time-dependent mean-field derivation of the boson star equation was proved by Elgart and Schlein \cite{ElgartSchlein2007}, and quantitative convergence rates were later obtained in \cite{Lee2013}. Our proof uses a counting argument in the spirit of Pickl's method \cite{Pickl2011}, which has been developed further to treat singular interactions and quantitative convergence rates \cite{KnowlesPickl2010,ChenLee2011,ChenLeeLee2018}. Since the estimates in the present paper are used for large times, we also rely on ideas related to mean-field convergence with improved time dependence and uniform-in-time bounds, as studied for instance in \cite{Lee2019,DietzeLee2022,NamRademacherSoffer2025}.

The analysis of the semi-relativistic Hartree equation has also been studied from the viewpoint of nonlinear dispersive equations. Global well-posedness in the energy space, including the critical Newtonian interaction, was proved in \cite{Lenzmann2007}, while low-regularity local well-posedness was obtained in \cite{HerrLenzmann2014}. Solitary waves and their orbital stability were studied in \cite{FrohlichJonssonLenzmann2007}. For short-range Hartree interactions, global existence and scattering were proved in \cite{ChoOzawa2006,ChoOzawa2007,ChoOzawaSasaki2008,HerrTesfahun2015}. 
In the long-range case, usual scattering has to be replaced by modified scattering; for kernels $w(x)=\lambda |x|^{-\alpha}$, $1\leq \alpha<3$, this was studied in \cite{Pusateri2014}.
We also recall the complementary variational literature on boson-star models, in particular on ground-state collapse and many-body blow-up profiles in the presence of external potentials \cite{Nguyen2017BSE,Nguyen2019ManyBodyBSE}.

Propagation estimates form another line of work which is important for the present paper. In many-body scattering theory, maximal and minimal velocity bounds are standard tools; see for instance \cite{SigalSoffer1987,SigalSoffer1990,Graf1990,Skibsted1991,HunzikerSigalSoffer1999}. In quantum many-body dynamics on lattices, the analogous principle of an effective light cone is expressed by Lieb--Robinson bounds, originally proved for quantum spin systems in \cite{LiebRobinson1972}; see also \cite{NachtergaeleSims2006,ChenLucasYin2023}. Extensions to bosonic systems are substantially more delicate because of unbounded local Hilbert spaces and unbounded local energies. Recent works have developed Lieb--Robinson and maximal-particle-transport bounds for Bose--Hubbard-type systems and related bosonic models \cite{FaupinLemmSigal2022,KuwaharaSaito2021,YinLucas2022,KuwaharaVuSaito2024}. Closest in spirit to the mean-field aspect of the present work is \cite{LemmRademacherZhang2024}, where ballistic propagation bounds for fluctuations around a condensate are used to obtain a local enhancement of the mean-field approximation on the lattice.
We also mention the recent Oberwolfach mini-workshop report \cite{GiacomelliLeeLemm2025}, which highlights current interactions between the mathematical analysis of quantum many-body systems in scaling regimes and the study of strongly correlated systems through Lieb--Robinson-type propagation bounds.

The aim of the present paper is different in spirit. We prove velocity localization estimates for the boson star equation from the underlying many-body dynamics. More precisely, we first establish suitable maximal and minimal velocity bounds for the many-body boson star equation, and then pass to the Hartree dynamics using quantitative mean-field convergence estimates based on Pickl's counting method. In this way, velocity localization for the effective equation is obtained as a consequence of velocity localization at the many-body level.

\subsection{Strategy of the Proof}\label{subsec:strategy}

We briefly explain the strategy of the proofs. More details are given in the corresponding sections below.

\subsubsection*{Proof strategy for Theorem \ref{thm:intro-many-body-mf} (Mean-field convergence)}
We use Pickl's counting functional
\[
    \alpha_N(t)
    =
    \|\q_1(t)\Psi_{N,t}\|_2^2,
\]
where $\q_1(t)$ is the projection on the orthogonal complement of $|u_t\rangle\langle u_t|$ acting on the first particle. The quantity $\alpha_N(t)$ measures the fraction of particles outside the condensate. We prove the estimate
\begin{equation}\label{eq:alpha_intro}
    \alpha_N(t)\le\frac{C}{N}
\end{equation}
using a Grönwall argument. Thanks to the inequality
\begin{equation}
    \|u_t\|_\infty\le C\langle t\rangle^{-3/2},
\end{equation}
which holds from Hypothesis \ref{hyp}, one can show that the constant in \eqref{eq:alpha_intro} is independent of $t$. Then Theorem \ref{thm:intro-many-body-mf} follows from the trace norm estimate (e.g. in \cite[Lemma 2.3]{KnowlesPickl2010})
\[
    \Tr\left|
        \gamma^{(1)}_{N,t}
        -
        |u_t\rangle\langle u_t|
    \right|
    \lesssim
    \alpha_N(t)^{1/2}.
\]

\subsubsection*{Proof strategy for Theorem \ref{thm:maxvel} (Many-body maximal velocity estimate)}
The proof follows along the lines of \cite[Theorem 2.4]{BFG2025}, but working in the setting of one-particle density matrices. It relies on approximations of the operator 
\begin{equation}
    \e^{\ell_\varepsilon(x)} \langle \nabla \rangle \e^{-\ell_\varepsilon(x)}-\e^{-\ell_\varepsilon(x)} \langle \nabla \rangle \e^{\ell_\varepsilon(x)}
\end{equation}
in operator norm uniformly in $\varepsilon$, where $\ell_\varepsilon(x)$ denotes a smoothened version with small parameter $\varepsilon>0$ of a signed distance to a hyperplane separating $X$ and $Y$.

\subsubsection*{Proof strategy for Theorem \ref{thm:intro-many-body-minvel} (Many-body minimal velocity estimate)}
We consider 
\begin{equation}
       \mathcal M_N(t)
    :=
    \langle \Psi_{N,t}, |x_1-t\Theta_1|^2\Psi_{N,t}\rangle.
\end{equation}
The quantity $\mathcal{M}_N(t)/t^{2}$ controls the mismatch between the \emph{asymptotic ballistic position variable} $x_1/t$ and the \emph{velocity observable} $\Theta_1$. Recall that $\Theta_1$ is the operator $\Theta$ defined in \eqref{def:Theta} acting on the first variable. 

Using the mean-field estimate from Theorem \ref{thm:intro-many-body-mf}, which controls the number of particles outside
the condensate, and an additional estimate for the interaction terms coupling
the condensate and its orthogonal complement, we obtain Theorem \ref{thm:intro-many-body-minvel}\eqref{thm:minimal-L2bound}.

Theorem \ref{thm:intro-many-body-minvel}\eqref{thm:minimal-trace-bound} is obtained from Theorem \ref{thm:intro-many-body-minvel}\eqref{thm:minimal-L2bound} following a similar approach as in \cite{BFG2025}.

The proof of Theorem \ref{thm:intro-many-body-minvel}\eqref{thm:minimal-nu-bound} is based on a Grönwall argument for $\nu_N(t)$ and uses Theorem \ref{thm:intro-many-body-mf}. Applying  Theorem \ref{thm:intro-many-body-mf} and \ref{thm:intro-many-body-minvel}\eqref{thm:minimal-nu-bound} we conclude by proving Theorem \ref{thm:intro-many-body-minvel}\eqref{thm:minimal-scattering}

\subsubsection*{Proof strategy for Theorem \ref{thm:intro-BSE-maxvel} and Theorem \ref{thm:intro-BSE-minvel} (One-body maximal and minimal velocity estimate)}
In Theorem \ref{thm:intro-BSE-maxvel} and Theorem \ref{thm:intro-BSE-minvel}, we transfer the $N$-body information from Theorem \ref{thm:maxvel} and Theorem \ref{thm:intro-many-body-minvel}
to the (one-body) boson star equation. Since the many-body estimates are formulated in terms of the one-particle reduced density matrix, the trace norm convergence from Theorem \ref{thm:intro-many-body-mf} allows us to pass to the limit $N\to\infty$. In this way, the maximal and minimal velocity estimates for the $N$-body system yield the corresponding propagation estimates, obtained in \cite{BFG2025}, for the solution $u_t$ of the boson star equation. 

\subsection{Notation}

For $d \in \bN$, we denote by $\bR^d$ the $d$-dimensional Euclidean space. For a measurable set $S \subseteq \bR^d$ and a function $f : S \to \bC$, we write
\[
    \|f\|_{L^p(S)} := \bigg( \int_S |f(x)|^p \, \d x \bigg)^{1/p},
    \qquad 1 \le p < \infty,
\]
with the usual modification for $p=\infty$. When $S = \bR^3$, we simply write $\|f\|_{p}$ to denote $\|f\|_{L^p(\bR^3)}$.
Given a measurable set $Y \subset \bR^3$, we denote by $\Id_Y$ the multiplication operator by the indicator function of~$Y$. For a bounded linear operator $O : L^2(S_1) \to L^2(S_2)$, we denote its operator norm by $\|O\|_{\op}$.
For a trace-class operator $\gamma$ on $L^2(\bR^3)$, we write $\Tr \gamma$ for its trace and $\|\gamma\|_{\Tr} := \Tr |\gamma|$ for its trace norm. The rank-one operator associated with $u \in L^2(\bR^3)$ is denoted by $|u\rangle\langle u|$.

Throughout the paper, the letter $C$ denotes a positive constant whose value may change from line to line. The dependence on parameters is stated explicitly when relevant. The symbol $a\lesssim b$ denotes the inequality $a\le C \cdot b$ for some positive constant $C$. We also use the notation $\langle a \rangle = (1+|a|^2)^{1/2}, a \in \bR^n$. 

\subsection{Organization of the Paper}\label{subsec:organization}
The paper is organized as follows: 
in Section~\ref{sec:mean-field}, we use Pickl's counting functional for the proportion of particles outside the condensate to prove estimates on the convergence of the one-particle reduced density matrix to the projection onto the solution to \eqref{eq:BSE} uniformly in time. 
In Section~\ref{sec:manybody-max-vel}, we provide the proof of the many-body maximal velocity estimate.
In Section~\ref{sec:manybody-min-vel}, we prove the many-body minimal velocity estimate. The main ingredient is a moment bound, which controls the mismatch between the asymptotic position variable $x_1/t$ and the velocity observable $\Theta_1$ on mean-field time scales.
Finally, in Section~\ref{sec:BSE-prop-est}, we transfer the many-body maximal and minimal velocity estimates to the effective boson star equation by using the trace norm convergence of the one-particle reduced density matrix. We also prove the improved many-body minimal velocity estimate. 
Appendix \ref{sec:one-body-minvel-direct} contains an alternative one-body proof of the minimal velocity estimate, 
while in Appendix \ref{app: t decay} we prove that the assumption of Theorem \ref{thm:intro-many-body-minvel}\eqref{thm:minimal-nu-bound} is satisfied for suitable $w$ and $u_0$. 

\subsection*{Acknowledgments}
We would like to thank Phan Thành Nam for helpful comments. This research was funded, in whole or in part, by l’Agence Nationale de la Recherche (ANR),
project ANR-22-CE92-0013.

C. Dietze would like to thank Jacques Smulevici for a helpful discussion. 

J. Lee was partially supported by the Global-Learning \& Academic research institution for Master's-PhD students, and Postdocs (G-LAMP) Program of the National Research Foundation of Korea (NRF) grant funded by the Ministry of Education (No. RS-2025-25442355), the NRF grant funded by the Korea government (MSIT) (No. RS-2026-25589357), and a grant from Kyung Hee University in 2026 (KHU-20262263).

\subsection*{AI declaration}
For the present work, we used AI-tools 
GPT-6.1 Sol
to correct grammar and spelling, and to modify Figure \ref{fig:propagation}.

\vspace{3em}

\section{Mean-Field Approximation}\label{sec:mean-field}

In this section we prove the mean-field approximation for the $N$-body dynamics \eqref{eq:NbdQSE} for all times $t\ge 0$, namely we prove Theorem \ref{thm:intro-many-body-mf}. The main quantity is the number of particles outside the condensate, measured by a projection-counting functional $\alpha_N(t)$ associated with the condensate wave function $u_t$.

\subsection{Counting Functional and Convergence of Reduced Densities}\label{subsec:alphaN}

Let $u_t$ be the solution to the one-body boson star equation \eqref{eq:BSE}. We define the time-dependent orthogonal projectors acting on the one-particle Hilbert space $L^2(\bR^3)$ by
\begin{equation}\label{eq:projectors}
    \p_t^u := |u_t\rangle\langle u_t|,
    \qquad
    \q_t^u := 1-\p_t^u.
\end{equation}
These operators are lifted to the $N$-particle Hilbert space $L^2(\bR^{3N})$ by acting on the $j$-th particle. We denote the corresponding operators by
\begin{equation}\label{eq:pqdef}
    \p_j := (\p_t^u)_j,
    \qquad
    \q_j := (\q_t^u)_j.
\end{equation}
Since $\|u_t\|_2=1$ for every $t$ (because $\|u_0\|_2=1$ by assumption and the mass is conserved, see \cite[Lemma 4.2]{Lenzmann2007}) then $\p_j$ and $\q_j$ are orthogonal projectors satisfying
\[
    \p_j^2=\p_j,
    \qquad
    \q_j^2=\q_j,
    \qquad
    \p_j\q_j=0.
\]
We define the projection-counting operator
\begin{equation}\label{def:nhatu}
    \hat n^u
    :=
    \frac1N \sum_{j=1}^{N}\q_j,
\end{equation}
and the projection-counting functional
\begin{equation}\label{def:alphaN}
    \alpha_N(t)
    :=
    \langle \Psi_{N,t},\widehat n^u\Psi_{N,t}\rangle
    :=
    \frac1N\sum_{j=1}^N
    \langle \Psi_{N,t},\q_j\Psi_{N,t}\rangle .
\end{equation}
For bosonic wave functions, symmetry gives the simpler expression
\begin{equation}\label{eq:alpha_symmetry}
    \alpha_N(t)
    =
    \|\q_1\Psi_{N,t}\|_2^2.
\end{equation}
Thus $\alpha_N(t)$ measures the fraction of particles outside the condensate $u_t$.

The convergence of the reduced density matrix follows from the convergence of $\alpha_N(t)$. More precisely, the trace norm distance is controlled by the square root of the counting functional in the following way
\begin{equation}\label{eq:trace-dist-alpha}
    \Tr\left|
        \gamma^{(1)}_{N,t}
        -
        |u_t\rangle\langle u_t|
    \right|
    \leq
    \sqrt{8\alpha_N(t)}.
\end{equation}
We recall how to obtain inequality \eqref{eq:trace-dist-alpha} (see e.g. \cite[Lemma 2.3]{KnowlesPickl2010}). Since $\gamma^{(1)}_{N,t}$ is a non-negative density matrix and $|u_t\rangle\langle u_t|$ is a rank-one projector, we have
\[
    \Tr\left|
        \gamma^{(1)}_{N,t}
        -
        |u_t\rangle\langle u_t|
    \right|
    =
    2\left\|
        \gamma^{(1)}_{N,t}
        -
        |u_t\rangle\langle u_t|
    \right\|_\op
    \le
    2\left\|
        \gamma^{(1)}_{N,t}
        -
        |u_t\rangle\langle u_t|
    \right\|_{\mathrm{HS}},
\]
where $\|\cdot\|_{\mathrm{HS}}$ is the Hilbert-Schmidt norm. One can then compute
\[
    \left\|
        \gamma^{(1)}_{N,t}
        -
        |u_t\rangle\langle u_t|
    \right\|_{\mathrm{HS}}^2
    =
    \Tr\left(\gamma^{(1)}_{N,t}
        -
        |u_t\rangle\langle u_t|\right)^2
    =
    2-2\langle u_t, \gamma^{(1)}_{N,t} u_t \rangle
    =
    2\alpha_N(t),
\]
if $\|u_t\|_2= \|\Psi_{N,t}\|_2=1$ for all $t$.
Therefore, proving Theorem \ref{thm:intro-many-body-mf} reduces to proving
\begin{equation}\label{eq:alphaNest}
    \alpha_N(t)\le\frac{C}{N}
\end{equation}
for a constant $C>0$ that is independent of $t$ and $N$.

\subsection{Proof strategy for the estimate \eqref{eq:alphaNest} on \texorpdfstring{$\alpha_N(t)$}{alpha_N(t)}}

We briefly recall the standard estimate for $\alpha_N(t)$. Let $H_N$ be the microscopic Hamiltonian in \eqref{eq:Nbody-Hamiltonian}, and let
\begin{equation}
\label{def: h_j}
        h_j^{\MF}(t)
    :=
    \sqrt{-\Delta_j+1}
    +
    (w*|u_t|^2)(x_j)
\end{equation}
be the effective mean-field Hamiltonian acting on particle $j$. Differentiating $\alpha_N(t)$ and using the equations of motion for $\Psi_{N,t}$ and $u_t$, one obtains
\begin{align}
    \frac{\d}{\d t}\alpha_N(t)
    &=
    \i\langle \Psi_{N,t},[H_N,\q_1]\Psi_{N,t}\rangle
    +
    \langle \Psi_{N,t},\partial_t\q_1\Psi_{N,t}\rangle
    \nonumber \\
    &=
    \i\bigg\langle
        \Psi_{N,t},
        \Big[
            H_N-\sum_{j=1}^N h_j^{\MF}(t),
            \q_1
        \Big]
        \Psi_{N,t}
    \bigg\rangle .
    \label{comp:alpha-prime}
\end{align}
Here the kinetic terms cancel, and only the difference between the microscopic interaction and the mean-field interaction remains.

The estimate is then obtained by inserting the decomposition
\[
    1=\p_j+\q_j
\]
around the interaction terms and classifying the resulting contributions according to the number of $\q$-projectors. The leading condensate contribution cancels with the mean-field potential. Formally, this cancellation is reflected in the identity
\[
    \p_2 \, w(x_1-x_2) \, \p_2
    =
    (w*|u_t|^2)(x_1) \, \p_2.
\]
The remaining terms contain at least one $\q$-projector and are controlled by $\alpha_N(t)$ and the mean-field scaling.

A Gr\"onwall estimate yields
\begin{equation}\label{eq:alpha-gronwall}
    \left|
        \frac{\d}{\d t}\alpha_N(t)
    \right|
    \leq
    C(t)\left(\alpha_N(t)+\frac1N\right),
\end{equation}
where $C(t)$ is integrable in time under the decay estimate on $u_t$ in \eqref{eq:decay}. Hence,
\begin{equation}\label{eq:alpha-bound}
    \alpha_N(t)
    \leq
    \left(\alpha_N(0)+\frac1N\right)
    \exp\left(\int_0^t C(s)\,\d s\right)
    =\frac1N\exp\left(\int_0^t C(s)\,\d s\right)
    ,
\end{equation}
where we used in the last step that the initial data is factorized, namely $\Psi_{N,0}=u_0^{\otimes N}$, so that $\alpha_N(0)=0$. In our setting, we can control the constant $C(s)$ such that 
\begin{equation}
    \int_0^\infty C(s)\,\d s<\infty,
\end{equation}
which gives a time-independent constant in \eqref{eq:alphaNest}. 

\subsection{Proof of \eqref{eq:alphaNest}}
We start by giving an expression for $\frac{\d}{\d t}\alpha_N(t)$. 

\begin{lemma}
\label{lemma: expression alpha'}With the assumptions and notations of Theorem~\ref{thm:intro-many-body-mf} and $\alpha_N(t)$ defined as in~\eqref{def:alphaN}, it holds that
\begin{equation*}
   \frac{\d}{\d t} \alpha_N(t)= -2 \Im \langle \Psi_{N,t} , \p_1 V_t(x_1, x_2) \q_1 \Psi_{N,t}\rangle
\end{equation*}
where
\begin{equation}\label{def:Vt}
    V_t(x_1, x_2) := w(x_1 - x_2) -w*|u_t|^2(x_1).
\end{equation}
\end{lemma}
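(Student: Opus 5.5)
We start from the formula \eqref{comp:alpha-prime}: differentiate $\alpha_N(t)=\langle\Psi_{N,t},\q_1\Psi_{N,t}\rangle$ using $\i\partial_t\Psi_{N,t}=H_N\Psi_{N,t}$ and $\i\partial_t \p_1 = [h_1^{\MF}(t),\p_1]$. The latter follows from \eqref{eq:BSE}, because $\i\partial_t|u_t\rangle\langle u_t| = h^{\MF}|u_t\rangle\langle u_t| - |u_t\rangle\langle u_t| h^{\MF}$ holds with $h^{\MF}$ self-adjoint. This gives
\[
\frac{\d}{\d t}\alpha_N(t)=\i\big\langle\Psi_{N,t},[H_N-h_1^{\MF}(t),\q_1]\Psi_{N,t}\big\rangle .
\]
Here one can add or remove $h_j^{\MF}$ for $j\ge2$ freely, since these terms commute with $\q_1$. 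In $H_N-h_1^{\MF}$ the kinetic term of particle $1$ cancels, and all other kinetic terms and all interactions $w(x_j-x_k)$ with $j,k\ge2$ commute with $\q_1$. What remains in the commutator is
\[
\Big[\tfrac{1}{N-1}\sum_{k=2}^N w(x_1-x_k)-(w*|u_t|^2)(x_1),\ \q_1\Big].
\]
By bosonic symmetry of $\Psi_{N,t}$, each of the $N-1$ terms $w(x_1-x_k)$ contributes equally, so the bracket reduces to $[V_t(x_1,x_2),\q_1]$ with $V_t$ as in \eqref{def:Vt}.

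Next we write $\q_1=\Id-\p_1$, so that $[V_t,\q_1]=-[V_t,\p_1]$. Inserting $\Id=\p_1+\q_1$ on both sides gives
\[
[V_t,\p_1]=\q_1V_t\p_1-\p_1V_t\q_1 .
\]
Since $V_t$ is self-adjoint (both $w$ and $u_t$ give real multiplication operators), we have $\langle\Psi,\q_1V_t\p_1\Psi\rangle=\overline{\langle\Psi,\p_1V_t\q_1\Psi\rangle}$. Hence
\[
\i\langle\Psi,[V_t,\q_1]\Psi\rangle=-\i\left(\bar z - z\right)=-2\Im z,
\qquad z=\langle\Psi,\p_1V_t\q_1\Psi\rangle ,
\]
which is the claimed identity.

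The only delicate point is justification. $\Psi_{N,t}$ need not lie in the domain $H^1$, and $u_t$ is only $C^1$ with values in $H^{-1/2}$, so the differentiation should be done at the level of forms on $H^{1/2}$. The key observation is that the unbounded kinetic parts cancel exactly. Because $w\in L^\infty$ and $u_t\in L^2$, the operator $V_t$ is bounded, so all remaining terms are bounded operators. If a fully rigorous argument is needed, I would first take $\Psi_{N,0}$ and $u_0$ in the domains $H^1$, check the derivative there, and then extend by density. This uses the continuity of both flows and the integrated form $\alpha_N(t)-\alpha_N(0)=\int_0^t(\cdots)\,\d s$, whose right-hand side is continuous in the data.
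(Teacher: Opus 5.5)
Your proposal is correct and follows the paper's route. You differentiate $\alpha_N$ using $\partial_t\q_1=-\i[h_1^{\MF}(t),\q_1]$, reduce the commutator to $[V_t(x_1,x_2),\q_1]$ by bosonic symmetry, and then insert $\Id=\p_1+\q_1$ to isolate $-2\Im\langle\Psi_{N,t},\p_1V_t(x_1,x_2)\q_1\Psi_{N,t}\rangle$. The differences are cosmetic: you write $[V_t,\q_1]=-(\q_1V_t\p_1-\p_1V_t\q_1)$, whereas the paper writes $-2\Im\langle\Psi_{N,t},V_t\q_1\Psi_{N,t}\rangle$ and drops the real $\q_1V_t\q_1$ part. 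Your remark on form-level justification also covers a point the paper leaves implicit.
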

\begin{proof}
Since $u_t$ solves the Hartree equation
\[
    \i\partial_t u_t=h^{\MF}(t)u_t,
    \qquad
    h^{\MF}(t)=\sqrt{-\Delta+1}+(w*|u_t|^2),
\]
the time-dependent projector $\q_1$ satisfies
\[
    \partial_t\q_1=-\i[h^{\MF}_1(t),\q_1],
\]
with $h^{\MF}_1(t)$ defined in \eqref{def: h_j}. 
Hence, differentiating $\alpha_N$, we obtain \eqref{comp:alpha-prime}. 
Then, since $\q_1$ only acts on the first particle and since we apply the operators to symmetric functions
\begin{align}
    [ H_N - h^{\MF}_1 , \q_1 ] & = \, \Big[\frac{1}{N-1}\sum_{k=2}^Nw(x_1 - x_k) -  w*|u_t|^2(x_1) ,\,  \q_1  \Big]\nonumber\\
    & =  \Big[w(x_1 - x_2) -w*|u_t|^2(x_1)  , \q_1\Big]. \label{comp: [H - h, q1]}
\end{align}
Given the definition \eqref{def:Vt}, we therefore get
\begin{align*}
    \frac{\d}{\d t}\alpha_N(t) &= \i \langle \Psi_{N,t}, [V_t(x_1, x_2), \q_1] \Psi_{N,t} \rangle =-2\Im \langle \Psi_{N,t},V_t(x_1, x_2) \q_1 \Psi_{N,t}\rangle \\
    &= -2\Im \langle \Psi_{N,t},\p_1V_t(x_1, x_2) \q_1 \Psi_{N,t}\rangle,
\end{align*}
where we used the relation $1=\p_1 + \q_1$. 
\end{proof}

As explained above, proving the mean-field convergence reduces to bounding $\alpha_N(t)$, which we do in the following proposition. The proof is analogous to \cite[Lemma 3.2]{Pickl2011}, we nevertheless include it for completeness. 

\begin{proposition}
\label{lemma: bound alpha}
Assume that the assumptions of Theorem~\ref{thm:intro-many-body-mf} are satisfied. Let~$\alpha_N(t)$ be defined as in~\eqref{def:alphaN}, and
\begin{equation}\label{eq:defCs}
    C(s):=8\|u_s\|_{\infty}\|w\|_{2}.
\end{equation}
Then
\begin{equation}
     \alpha_N(t) \le\frac{1}{N}\exp\left(\int_0^t C(s)\, \d s\right).
\end{equation}
\end{proposition}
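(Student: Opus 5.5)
Starting from Lemma~\ref{lemma: expression alpha'}, I insert $1=\p_2+\q_2$ on both sides of $V_t(x_1,x_2)$ in $\langle \Psi_{N,t},\p_1 V_t\q_1\Psi_{N,t}\rangle$ and use that $\p_1$ and $\q_1$ are orthogonal. This leaves three types of terms, according to the number of $\q$'s:
\[
\p_1\p_2 V_t\q_1\p_2,\qquad \p_1\p_2 V_t\q_1\q_2 \ (\text{and its mirror } \p_1\q_2V_t\q_1\p_2),\qquad \p_1\q_2 V_t\q_1\q_2 .
\]

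\textbf{Term $\p_1\p_2V_t\q_1\p_2$.} This term vanishes. The reason is $\p_2 w(x_1-x_2)\p_2=(w*|u_t|^2)(x_1)\p_2$, hence $\p_2V_t\p_2=0$.

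\textbf{Term $\p_1\q_2V_t\q_1\q_2$.} The function $w*|u_t|^2(x_1)$ commutes with $\q_2$ and is killed by $\p_1\cdots\q_1$, so only $w(x_1-x_2)$ survives. I bound it by Cauchy--Schwarz:
\[
|\langle\Psi,\p_1\q_2 w_{12}\q_1\q_2\Psi\rangle|\le \|w_{12}\p_1\q_2\Psi\|\,\|\q_1\q_2\Psi\|.
\]
For the first factor I use $\|w_{12}\p_1\|_{\op}\le \|w\|_2\|u_t\|_\infty$, since $\p_1 w(x_1-x_2)^2\p_1=(|w|^2*|u_t|^2)(x_2)\p_1\le \|w\|_2^2\|u_t\|_\infty^2$. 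This gives the bound $\|w\|_2\|u_t\|_\infty\,\alpha_N$, using $\|\q_2\Psi\|^2=\alpha_N$ and $\|\q_1\q_2\Psi\|\le\|\q_1\Psi\|$.

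\textbf{Terms with exactly two $\q$'s.} Take $\p_1\p_2 V_t\q_1\q_2$. Again only the $w_{12}$ part survives, because $\p_2\cdot\q_2=0$ for the mean-field part. This is the term that lacks a factor of $\alpha_N$ and is the main obstacle. I would use Pickl's symmetrization trick:
\[
\langle\Psi,\p_1\p_2w_{12}\q_1\q_2\Psi\rangle=\frac{1}{N-1}\sum_{k\ge2}\langle\Psi,\p_1\p_k w_{1k}\q_1\q_k\Psi\rangle .
\]
Moving $\q_1$ to the left, this equals $\langle \q_1\Psi,\sum_k \q_k w_{1k}\p_1\p_k\Psi\rangle/(N-1)$, which I bound by Cauchy--Schwarz as $\|\q_1\Psi\|\,\|\sum_k \q_kw_{1k}\p_1\p_k\Psi\|/(N-1)$.

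Expanding the square of the norm, the $N-1$ diagonal terms are each bounded by $\|w_{1k}\p_1\|_{\op}^2\le \|w\|_2^2\|u_t\|_\infty^2$. In the off-diagonal terms $k\neq l$ I commute $\q_k$ and $\q_l$ to the outside:
\[
\langle \Psi,\p_1\p_k\q_l w_{1k}\q_k w_{1l}\q_l\p_1\p_l\Psi\rangle .
\]
Each is bounded by $\|w_{1k}\p_1\|\,\|w_{1l}\p_1\|\,\|\q_l\Psi\|\,\|\q_k\Psi\|\le \|w\|_2^2\|u_t\|_\infty^2\alpha_N$. Hence
\[
\Big\|\sum_k \q_kw_{1k}\p_1\p_k\Psi\Big\|^2\le \|w\|_2^2\|u_t\|_\infty^2\big(N+N^2\alpha_N\big),
\]
and the two-$\q$ term is at most $C\|w\|_2\|u_t\|_\infty\sqrt{\alpha_N}\,(N^{-1/2}+\sqrt{\alpha_N})\le C\|w\|_2\|u_t\|_\infty(\alpha_N+1/N)$.

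\textbf{Conclusion.} Collecting the terms and tracking the numerical constants (factor $2$ from $-2\Im$, two mirrored terms, and Young's inequality $2\sqrt{\alpha_N/N}\le\alpha_N+1/N$) gives
\[
\Big|\tfrac{\d}{\d t}\alpha_N\Big|\le 8\|u_t\|_\infty\|w\|_2\,\big(\alpha_N+\tfrac1N\big).
\]
Note that $w\in L^2$ because $w\in L^1\cap L^\infty$. Grönwall's inequality for $\alpha_N+1/N$, with $\alpha_N(0)=0$, then gives $\alpha_N(t)+1/N\le \frac1N\exp\int_0^tC(s)\,\d s$, which is the claim.

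The delicate step is the off-diagonal bookkeeping in the two-$\q$ term: one has to arrange the operators so that every $\p$ sits next to a $w$, to obtain the $\|u_t\|_\infty$ factor, while every $\q$ sits next to $\Psi$, to obtain the $\alpha_N$ factor. Justifying the differentiation requires $\Psi\in H^{1/2}$ or a density argument, which I would only sketch.
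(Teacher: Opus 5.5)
\textbf{A false cancellation, and its fix.} Your route is sound in outline, but one step is false as written. In the term $\p_1\q_2V_t\q_1\q_2$ you drop the mean-field part because it is ``killed by $\p_1\cdots\q_1$''. It is not killed. The operator $\p_1\,(w*|u_t|^2)(x_1)\,\q_1$ vanishes only if $(w*|u_t|^2)u_t$ is parallel to $u_t$, which fails in general. The mean-field part does drop out of $\p_1\p_2V_t\q_1\q_2$ and $\p_1\q_2V_t\q_1\p_2$, because there $\p_2$ meets $\q_2$. In $\p_1\q_2V_t\q_1\q_2$, however, particle~2 carries $\q_2$ on both sides.

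The fix is to keep the full $V_t$ and use $\|V_t\p_1\|_{\op}^2=\|\p_1V_t^2\p_1\|_{\op}\le 4\|w\|_2^2\|u_t\|_\infty^2$, as in \eqref{eq:estimate-p-Vt2-p}. This bounds the term by $2\|w\|_2\|u_t\|_\infty\alpha_N$ instead of $\|w\|_2\|u_t\|_\infty\alpha_N$.

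\textbf{Bookkeeping.} Three smaller points need attention.
\begin{enumerate}
\item The term $\p_1\q_2V_t\q_1\p_2$ is not a mirror of $\p_1\p_2V_t\q_1\q_2$, since it has one $\q$ on each side. It needs no symmetrization and is directly $\le\|w\|_2\|u_t\|_\infty\alpha_N$. The paper instead notes that its imaginary part vanishes, by taking the adjoint and exchanging $1\leftrightarrow 2$.
\item To get exactly the constant $8$ for all $N\ge2$, count exactly: there are $N-1$ diagonal and $(N-1)(N-2)$ off-diagonal terms. This gives $|\langle\Psi,\p_1\p_2w_{12}\q_1\q_2\Psi\rangle|\le\|w\|_2\|u_t\|_\infty\sqrt{\alpha_N(\frac{1}{N-1}+\alpha_N)}\le\|w\|_2\|u_t\|_\infty(\alpha_N+\frac1N)$. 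Your cruder count $N+N^2\alpha_N$ over $(N-1)^2$ leaves factors $N/(N-1)$ that push the constant above $8$ for small $N$.
\item Your displayed off-diagonal summand misplaces the $\q$'s. It should read $\langle\q_l\Psi,\p_1\p_kw_{1k}w_{1l}\p_1\p_l\q_k\Psi\rangle$; your bound is the correct one for this expression.
\end{enumerate}
With these corrections the total is $2\|w\|_2\|u_t\|_\infty(4\alpha_N+\frac1N)\le 8\|w\|_2\|u_t\|_\infty(\alpha_N+\frac1N)$, as claimed.

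\textbf{Comparison with the paper.} The paper treats the dangerous term differently. It never splits $\p_1V_t\q_1\q_2$ according to $\p_2$ or $\q_2$ on the left. Instead it writes the term as $\langle(\hat n^u)^{1/2}V_t\p_1\Psi,(\hat n^u)^{-1/2}\q_1\q_2\Psi\rangle$. The weight $(\hat n^u)^{-1/2}$ costs only $\frac{N}{N-1}\alpha_N$ on the $\q_1\q_2$ side. On the other side, $\hat n^u=\frac1N\sum_j\q_j$ contributes $2/N$ from $j=1,2$ and $\alpha_N$ from $j\ge3$, since those $\q_j$ commute through $V_t\p_1$. A single Cauchy--Schwarz therefore covers both your two-$\q$ and three-$\q$ terms with the full $V_t$, and no case analysis of the mean-field part is needed.

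Your explicit sum over $k$, with its diagonal/off-diagonal expansion, is the same mechanism unpacked: commuting $\q_k,\q_l$ outward in the off-diagonal terms is exactly what the $\q_j$ with $j\ge3$ do inside $\hat n^u$. Your version avoids defining the inverse weight on $\mathrm{Ran}\,\q_1$ and the operator inequality behind $\|(\hat n^u)^{-1/2}\q_1\q_2\Psi\|^2\le\frac{N}{N-1}\alpha_N$. The cost is heavier bookkeeping and a constant that is more sensitive to how the terms are counted.
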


\begin{proof}
We prove the inequality 
\begin{equation}
\label{eq: bound alpha'}
    \frac{\d}{\d t}\alpha_N(t)\le C(t)\left(\frac{1}{N}+\alpha_N(t)  \right) ,
\end{equation}
the statement then follows applying Grönwall inequality to $f(t) = \alpha_N(t)  + N^{-1} $ and the property $\alpha_N(0)=0$, due to the factorization of the initial data $\Psi_{N,0} =u_0^{\otimes N}$. To obtain \eqref{eq: bound alpha'}, we use the expression given in Lemma \ref{lemma: expression alpha'} and insert $\p_2+ \q_2=1$
\begin{align*}
    \frac{\d}{\d t}\alpha_N(t) =& -2 \Im \langle \Psi_{N,t} , \p_1 V_t(x_1, x_2) \q_1 (\p_2 + \q_2)\Psi_{N,t}\rangle\\
    =& -2  \Im \langle \Psi_{N,t} , \p_1 (\p_2 + \q_2) V_t(x_1, x_2) \q_1 \p_2 \Psi_{N,t}\rangle\\
    & - 2 \Im \langle \Psi_{N,t} , \p_1 V_t(x_1, x_2) \q_1  \q_2\Psi_{N,t}\rangle.
\end{align*}
In this expression we first have 
\begin{equation*}
    \Im \langle \Psi_{N,t} , \p_1 \q_2 V_t(x_1, x_2) \q_1 \p_2 \Psi_{N,t}\rangle = 0,
\end{equation*}
since $ \p_1 \q_2 V_t(x_1, x_2) \q_1 \p_2$ is invariant by taking the adjoint and exchanging the first and second particles thanks to the symmetry of $w$ and of the wave function. 

Next, we remark that 
\begin{equation*}
    \p_2 w(x_1-x_ 2) \p_2= \braket{u_t|w(x_1-\cdot )} { u_t}  \p_2 = w*|u_t|(x_1) \p_2
\end{equation*}
and therefore $\p_2V_t(x_1, x_2) \p_2=0 $. This implies 
\begin{equation*}
    \Im \langle \Psi_{N,t} , \p_1 \p_2 V_t(x_1, x_2) \q_1 \p_2 \Psi_{N,t}\rangle = 0.
\end{equation*}
Note that $\hat{n}^u$, defined in \eqref{def:nhatu}, preserves $\mathrm{Ran}(q_1)$ and $q_1\hat{n}^uq_1\ge N^{-1}q_1$. Therefore $\hat{n}^u$ is invertible on $\mathrm{Ran}(q_1)\subset L^2(\bR^{3N})$ and we have
\begin{align}
    \frac{\d}{\d t}\alpha_N(t) = & -2  \Im \langle (\hat{n}^u)^{1/2} V_t(x_1, x_2)\p_1 \Psi_{N,t} , (\hat{n}^u)^{-1/2} \q_1  \q_2\Psi_{N,t}\rangle\nonumber\\
    \leq & \, 2 \|(\hat{n}^u)^{1/2} V_t(x_1, x_2)\p_1 \Psi_{N,t}\|_{2} \|(\hat{n}^u)^{-1/2} \q_1  \q_2\Psi_{N,t}\|_{2}. \label{comp:bound alpha'}
\end{align}
Since $[\q_j,\hat{n}^u] =0 $ and $\langle\Psi, \hat{n}^u\Psi\rangle\le\frac{N-1}{N}\langle \q_2\Psi,\q_1 \q_2 \Psi\rangle$ for $\Psi\in L^2_{\mathrm{sym}}(\bR^{3N})$, we first have 
\begin{align}
    \|(\hat{n}^u)^{-1/2} \q_1  \q_2\Psi_{N,t}\|_{2} ^2
    &=
    \langle (\hat{n}^u)^{-1/2}q_2\Psi_{N,t}, \q_1 (\hat{n}^u)^{-1/2} \q_2 \Psi_{N,t}\rangle \nonumber \\
    &\le
    \frac{N}{N-1}\langle \Psi_{N,t},   \q_2 \Psi_{N,t}\rangle 
    \le
    2\alpha_N(t).
\end{align}
Then, by definition of $\hat{n}^u$ and since $\q_1 + \q_2 \le 2 \,\mathrm{Id}$ as quadratic forms we obtain 
\begin{align}
    \|(\hat{n}^u)^{1/2} V_t(x_1, x_2)\p_1 \Psi_{N,t}\|_{2} ^2 
    =& \frac{1}{N}\langle  \Psi_{N,t}, \p_1  V_t(x_1, x_2)(\q_1 + \q_2)V_t(x_1, x_2) \p_1 \Psi_{N,t} \rangle \nonumber\\
    &+\frac{(N-2)}{N} \langle  \Psi_{N,t},  \p_1  V_t(x_1, x_2)\q_3 V_t(x_1, x_2) \p_1 \Psi_{N,t} \rangle\nonumber\\
    \le & \frac{2}{N} \langle  \Psi_{N,t}, \p_1  V_t(x_1, x_2)^2\p_1 \Psi_{N,t} \rangle \nonumber\\
    & + \langle \q_3\Psi_{N,t},  \p_1  V_t(x_1, x_2)^2 \p_1 \q_3\Psi_{N,t} \rangle\nonumber\\
    \le &\|\p_1  V_t(x_1, x_2)^2\p_1 \|_{\op} \left( \frac{2}{N} +\alpha_N(t) \right), \label{eq:bound sqrt n}
\end{align}
where in the last inequality we used the symmetry of $\Psi_{N,t}$ to write $\|\q_3 \Psi_{N,t} \|_{2}^2=\alpha_N(t)$. Note that 
\begin{equation}\label{eq:p1vt}
    \p_1 V_t(x_1,x_2)^2 \p_1
    =
    \ell_t(x_2)\p_1,
\end{equation}
where
\[
    \ell_t(x_2)
    =
    \int_{\mathbb R^3}
        |u_t(x_1)|^2
        |V_t(x_1,x_2)|^2
    \,\dd x_1 .
\]
Recall the definition of $V_t$ given in \eqref{def:Vt}.
Since $w\in L^1(\bR^3) \cap L^\infty (\bR^3)$ and $\|u_t\|_2=1$, we have
\begin{equation}
\label{comp: sup V x_2}
    \sup_{x_2}\|V_t(\cdot,x_2)\|_{2}
    \le
    \|w\|_{2}+\|w*|u_t|^2\|_{2}
    \le
    2\|w\|_{2} .
\end{equation}
Thus, we obtain by \eqref{eq:p1vt}
\begin{multline}
    \|\p_1  V_t(x_1, x_2)^2\p_1 \|_{\op}=\|\ell_t(x_2)\p_1\|_{\op}\le \|\ell_t\|_\infty 
    \le \|u_t\|_{\infty}^2 \sup_{x_2} \|V_t(\cdot, x_2)\|_{2}^2 \le 4\|u_t\|_{\infty}^2\|w\|_{2}^2. \label{eq:estimate-p-Vt2-p}
\end{multline}
Combining this with \eqref{comp:bound alpha'}-\eqref{eq:bound sqrt n} we obtain 
\begin{align*}
  \frac{\d}{\d t}   \alpha_N(t) \le& 4 \|u_t\|_{\infty}\|w\|_{2} \sqrt{2\alpha_N(t)\left( \frac{2}{N} + \alpha_N(t)\right)}\\
    \le & 8 \|u_t\|_{\infty}\|w\|_{2} \left(\frac{1}{N} +\alpha_N(t)\right).
\end{align*}
This implies \eqref{eq: bound alpha'} and thereby concludes the proof.
\end{proof}

\begin{proof}[Proof of Theorem \ref{thm:intro-many-body-mf}]
Combining \eqref{eq:trace-dist-alpha} and Proposition \ref{lemma: bound alpha}, we get
\begin{equation}
    \Tr\left|
        \gamma^{(1)}_{N,t}-|u_t\rangle\langle u_t|
    \right|\le \sqrt{\frac{8}{N}}\exp\left(\frac{1}{2}\int_0^t C(s)\, \d s\right),
\end{equation}
where $C(s)$ was defined in \eqref{eq:defCs}. By Hypothesis \ref{hyp}, we know that 
\begin{equation}
    \|u_t\|_{\infty}\lesssim\langle t\rangle^{-3/2},
\end{equation}
so that
\begin{equation}
    \int_0^\infty C(s)\, \d s<\infty.
\end{equation}
This proves 
\begin{equation}
    \Tr\left|
        \gamma^{(1)}_{N,t}-|u_t\rangle\langle u_t|
    \right|\le \frac{C}{\sqrt{N}},
\end{equation}
with a constant $C>0$ that is independent of $t$ and $N$, and this completes the proof of Theorem \ref{thm:intro-many-body-mf}.
\end{proof}

\vspace{1em}

\section{Many-Body Maximal Velocity Estimate}\label{sec:manybody-max-vel}

The goal of this section is to prove the many-body maximal velocity estimate stated in Theorem \ref{thm:maxvel}. Under suitable assumptions on the interaction potential $w$, we will prove that if the initial many-body wave function $\Psi_{N,0}\in L^2(\mathbb{R}^{3N})$ is supported in a convex set~$X$, and we consider another convex set $Y$ that is disjoint of $X$, then the proportion of the particles that are at time $t$ in the set $Y$ is bounded by $\e^{2(t-\dist(X,Y))}$. Namely, it holds
\begin{equation}\label{eq:maxvel1}
    \Tr \big(\Id_{Y} \gamma^{(1)}_{N,t}\big)  \leq \e^{2(t-\dist(X,Y))} \,,
\end{equation}
where $\gamma^{(1)}_{N,t}$ is the one-particle reduced density matrix associated to $\Psi_{N,t}$. In particular, this shows that if $t$ is much smaller than the distance of $X$ and $Y$, then the proportion of particles at time $t$ that are in $Y$ is exponentially small.  We underline the fact that unlike in the other sections of this paper, no factorisation assumption on $\Psi_{N,0}$ is needed in Theorem \ref{thm:maxvel}.

\smallskip

\subsection{Proof strategy for Theorem \ref{thm:maxvel}}
The proof of Theorem \ref{thm:maxvel} is an adaptation to one-particle reduced density matrices of the proof of Theorem~2.4 in \cite{BFG2025}, following a general strategy that originates from \cite{SigalWu}. Let us introduce some notations that will be useful for the proof of Theorem~\ref{thm:maxvel}, for some fixed convex sets $X$ and $Y$ with $\dist(X,Y)>0$. 

\begin{figure}[h!]
\centering
\begin{tikzpicture}[>=stealth, thick, scale=1.05, line cap=round, line join=round]

  \def\xa{-4.2}
  \def\ya{ 4.2}
  \def\r{ 1.35}

  \coordinate (X)  at (\xa,0);
  \coordinate (Y)  at (\ya,0);
  \coordinate (XR) at (\xa+\r,0);
  \coordinate (YL) at (\ya-\r,0);
  \coordinate (xzero) at (0,0.9);
  \coordinate (xpoint) at (1.45,0);

  \draw (0,-2.1) -- (0,2.1);
  \node[above] at (0,2.1) {$\mathcal P$};

  \fill (xzero) circle (1.2pt);
  \node[right] at (xzero) {$x_0$};

  \draw (0,0) -- (xpoint);
  \fill (xpoint) circle (1.2pt);
  \node[right] at (xpoint) {$x$};
  \node[below] at ($(0,0)!0.55!(xpoint)$) {$\ell(x)$};

  \draw[->] (0,0.55) -- (-1.15,0.55);
  \node[above] at (-0.6,0.55) {$n$};

  \begin{scope}
    \clip (X) ++(0,\r) arc[start angle=90, end angle=-90, radius=\r] -- cycle;
    \foreach \s in {-2.0,-1.75,...,2.0}
      \draw[red!65] ($(X)+(-1.4,\s)$) -- ($(X)+(1.4,\s+1.4)$);
  \end{scope}
  \draw[red!80, thick] (X) ++(0,\r) arc[start angle=90, end angle=-90, radius=\r] -- cycle;
  \node[red!80] at ($(X)+(0.45,0)$) {$X$};

  \begin{scope}
    \clip (Y) ++(0,\r) arc[start angle=90, end angle=270, radius=\r] -- cycle;
    \foreach \s in {-2.0,-1.75,...,2.0}
      \draw[blue!55] ($(Y)+(-1.4,\s)$) -- ($(Y)+(1.4,\s+1.4)$);
  \end{scope}
  \draw[blue!75, thick] (Y) ++(0,\r) arc[start angle=90, end angle=270, radius=\r] -- cycle;
  \node[blue!75] at ($(Y)+(-0.45,0)$) {$Y$};

  \node[align=center] at (-4.05,2.05)
    {$\ell(x)\geq \dfrac{\dist(X,Y)}{2}$};
  \draw[->, red!70] (-4.0,1.65) to[out=-110,in=110] ($(X)+(0.2,0.95)$);

  \node[align=center] at (4.05,2.05)
    {$\ell(x)\leq -\dfrac{\dist(X,Y)}{2}$};
  \draw[->, blue!65] (4.0,1.65) to[out=-70,in=70] ($(Y)+(-0.2,0.95)$);

\end{tikzpicture}
\caption{Schematic illustration of the maximal velocity estimate. The hyperplane $\mathcal P$ is the separating hyperplane passing through $x_0$, and $\ell(x)$ denotes the signed distance from $\mathcal P$ in the chosen normal direction. The red region represents $X$, while the blue region represents $Y$.}
\label{fig:propagation}
\end{figure}

Let $\mathcal{P}$ denote the plane at distance $\dist(X,Y)/2$ from both $X$ and $Y$, $x_0\in \mathcal{P}$, and~$n$ be the unit vector orthogonal to~$\mathcal{P}$ such that~$\ell(x)=n\cdot(x-x_0)\geq \dist(X,Y)/2$ for~$x\in X$ and~$\ell(y)\leq -\dist(X,Y)/2$ for~$y\in Y$.
We also introduce an approximation of~$\ell(x)$, defined by~$\ell_\varepsilon(x)=f_\varepsilon(\ell(x))$ where~$f_\varepsilon(r)=\frac{1}{\varepsilon}f(\varepsilon r)$ and $f\in C^\infty(\bR)$ satisfies $f(r)=r$ on~$[-1,1]$, $0\leq f'\leq 1$ and $f'$ is compactly supported. Note that $f$ is bounded.  Finally, we define two operators:
\begin{equation}\label{def:Geps}
    G_\varepsilon := \Im(\e^{\ell_\varepsilon(x)}\sqrt{-\Delta + 1} \, \e^{-\ell_\varepsilon(x)})
\end{equation}
on $H^1(\bR^3)$ and
\begin{equation}\label{def:G0}
    G_0 :=\mathcal{F} \Im(f(\xi)) \mathcal{F}^{-1},
\end{equation} where $\mathcal{F}$ denotes the Fourier transform and, for $\xi\in \mathbb{R}^3$, $f(\xi):= \sqrt{(\xi+ \i n)^2+1}=\sqrt{\xi^2 + 2 \i n\cdot\xi}$. 
Here we defined the operators in \eqref{def:Geps} and \eqref{def:G0} using quadratic forms, namely for an operator $A$, we consider the operator $\Im(A)$ defined via the corresponding quadratic form
\begin{equation}
    \psi\mapsto\frac{1}{2\i}\langle\psi,\left(A-A^{*}\right)\psi\rangle
\end{equation}
and in our case this is densely defined. 

In the following lemma we summarise some results from \cite[Lemmata 3.3, 3.4 and 3.5]{BFG2025a} which were also used in \cite{BFG2025}: 
\begin{lemma}\label{lem:rappels-BFG2a}
For all $\varepsilon>0$, $G_\varepsilon$ extends to a bounded operator on $L^2(\bR^3)$, with
\begin{equation}\label{eq:bound-Geps}
    \sup_{\varepsilon>0}\|G_\varepsilon\|_{\op}<C
\end{equation}
for some universal constant $C>0$. Moreover, it holds
\begin{equation}\label{eq:bound-G0}
    \|G_0\|_{\op} \leq 1
\end{equation}
and $G_\varepsilon \xrightarrow[\varepsilon\to0]{}G_0$ weakly as bounded operators on $L^2(\bR^3)$.
\end{lemma}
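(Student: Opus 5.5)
The plan is to prove the three assertions of Lemma~\ref{lem:rappels-BFG2a} separately, working in Fourier space for $G_0$ and by commutator/pseudodifferential expansion for $G_\varepsilon$. Throughout, write $\langle \nabla\rangle:=\sqrt{-\Delta+1}$.

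\textbf{The bound on $G_0$.} The operator $G_0$ is the Fourier multiplier with symbol $\Im\sqrt{\xi^2+2\i n\cdot\xi}$, using the principal branch of the square root. This branch is well defined because $\Re(\xi^2+2\i n\cdot \xi+\ldots)$ is handled as follows: $z=(\xi+\i n)^2+1=|\xi|^2+2\i n\cdot\xi$ never lies on $(-\infty,0)$, since if $n\cdot\xi=0$ then $z=|\xi|^2\ge0$. For $z=a+\i b$ with $a\ge 0$, one has
\[
|\Im\sqrt z|^2=\tfrac12(|z|-a).
\]
Hence the task reduces to checking
\[
\sqrt{|\xi|^4+4(n\cdot\xi)^2}-|\xi|^2\le 2 .
\]
Squaring, this is equivalent to $4(n\cdot\xi)^2\le 4|\xi|^2+4$, which holds since $|n|=1$. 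This gives $|\Im f(\xi)|\le 1$, and Plancherel then yields $\|G_0\|_{\op}\le1$.

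\textbf{The uniform bound on $G_\varepsilon$.} Here I would use the Kato-type integral representation
\[
\langle\nabla\rangle=\frac1\pi\int_0^\infty s^{-1/2}\,\frac{-\Delta+1}{-\Delta+1+s}\,\d s .
\]
Conjugating gives
\[
\e^{\ell_\varepsilon}(-\Delta)\e^{-\ell_\varepsilon}=-\Delta+2\nabla\ell_\varepsilon\cdot\nabla+\Delta\ell_\varepsilon-|\nabla\ell_\varepsilon|^2 .
\]
Since $|\nabla\ell_\varepsilon|\le 1$ and $|\partial^\alpha\ell_\varepsilon|\lesssim\varepsilon^{|\alpha|-1}\le C$ for $\varepsilon\le1$, the perturbation $B_\varepsilon$ is a first-order operator with uniformly bounded coefficients. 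Expanding the conjugated resolvent $(-\Delta+1+s+B_\varepsilon)^{-1}$ by the second resolvent identity, the zeroth-order term is self-adjoint and drops out of $\Im$. The first-order term has anti-self-adjoint part essentially $\nabla\ell_\varepsilon\cdot\nabla$ sandwiched between resolvents. Its $s$-integral is bounded by
\[
\int_0^\infty s^{-1/2}\,\|\nabla(-\Delta+1+s)^{-1}\|\,\d s\lesssim\int_0^\infty s^{-1/2}(1+s)^{-1/2}\,\d s ,
\]
which diverges logarithmically at infinity. So this crude estimate is not enough: one has to keep the exact symmetrized structure, either by using the symbol-level identity that gives $G_0$, or by summing the resolvent series using $|\nabla\ell_\varepsilon|\le1<$ the spectral gap. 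Alternatively, I would treat $\e^{\ell_\varepsilon}\langle\nabla\rangle\e^{-\ell_\varepsilon}$ as a pseudodifferential operator with symbol $\sqrt{(\xi+\i\nabla\ell_\varepsilon(x))^2+1}$ plus remainders. The principal part is bounded via the $G_0$-type computation (with $|\nabla\ell_\varepsilon|\le1$), and the remainders involve $\nabla^2\ell_\varepsilon=O(\varepsilon)$, so they are bounded by Calderón--Vaillancourt uniformly in $\varepsilon$. I expect this step to be the main obstacle: getting uniformity in $\varepsilon$ while keeping the complex symbol away from its branch cut when $|\nabla\ell_\varepsilon|$ reaches $1$. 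That is why the normalization $0\le f'\le1$ (which gives $|\nabla\ell_\varepsilon|\le 1$) is essential.

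\textbf{Weak convergence $G_\varepsilon\to G_0$.} By the uniform bound \eqref{eq:bound-Geps}, it suffices to test against $\phi,\psi\in C_c^\infty$. On a fixed compact set $K$, for $\varepsilon$ small one has $\ell_\varepsilon=\ell$ exactly, because $f(r)=r$ on $[-1,1]$. Therefore $\e^{\ell}\langle\nabla\rangle\e^{-\ell}$ acting on $\e^{\ell}$-weighted test functions is the Fourier multiplier $\sqrt{(\xi+\i n)^2+1}$, after conjugating by the modulation $\e^{n\cdot x}$ and using analytic continuation in $\xi$. The contributions from outside $K$ are handled by the off-diagonal decay of the kernel of $\langle\nabla\rangle$, which is exponential at rate $1$, together with the cutoffs. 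This pairs $\phi,\psi$ to give $\langle\phi,G_0\psi\rangle$ in the limit. Everything here is a direct consequence of \cite[Lemmata 3.3--3.5]{BFG2025a}, and in the paper one can simply cite them.
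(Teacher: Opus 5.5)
\textbf{There is a genuine gap: the uniform bound \eqref{eq:bound-Geps} is not established.} This bound is the real content of the lemma, and your density argument for weak convergence also depends on it. The paper does not prove it either: it imports it from \cite[Lemmata 3.3--3.5]{BFG2025a}. So falling back on that citation, as your last sentence does, is exactly what the paper does. But the argument you sketch does not establish it. After noting that the crude resolvent estimate diverges logarithmically, you only list remedies, and as stated they fail.

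\textbf{The Neumann-series remedy.} It invokes ``$|\nabla\ell_\varepsilon|\le1<$ the spectral gap''. But the bottom of the spectrum of $-\Delta+1$ is exactly $1$, and $|\nabla\ell_\varepsilon|=1$ on the whole slab $\{|\ell|\le1/\varepsilon\}$ because $f'=1$ on $[-1,1]$. You are therefore precisely at the borderline. Already for linear $\ell$, $B_\varepsilon(-\Delta+1+s)^{-1}$ is the multiplier $(2\i n\cdot\xi-1)/(|\xi|^2+1+s)$. Its supremum tends to $2/\sqrt3>1$ as $s\to0$, so the series diverges for small $s$.

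\textbf{The pseudodifferential remedy.} It fails for the same reason. On that slab, $(\xi+\i\nabla\ell_\varepsilon(x))^2+1=|\xi|^2+2\i n\cdot\xi$ vanishes at $\xi=0$. Hence the symbol $\sqrt{(\xi+\i\nabla\ell_\varepsilon(x))^2+1}$ and its imaginary part have unbounded $\xi$-derivatives near $\xi=0$. Calder\'on--Vaillancourt then applies neither to the principal part nor to the remainders of the symbolic expansion. Moreover, the pointwise bound $|\Im a(x,\xi)|\le1$ borrowed from the $G_0$ computation does not by itself bound an operator whose symbol depends on $x$.

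\textbf{What is correct.} Your proof of \eqref{eq:bound-G0} is complete and correct. The weak-convergence step is fine once \eqref{eq:bound-Geps} is known, and it is simpler than you make it. For $\phi,\psi\in C_c^\infty$ supported in $K$ and $\varepsilon$ small, the weights $\e^{\pm\ell_\varepsilon}$ act only on $\phi$ and $\psi$. Hence $\langle\phi,\e^{\ell_\varepsilon}\langle\nabla\rangle\e^{-\ell_\varepsilon}\psi\rangle=\langle\e^{\ell}\phi,\langle\nabla\rangle\e^{-\ell}\psi\rangle$ exactly, and a Paley--Wiener contour shift gives the multiplier $\sqrt{(\xi+\i n)^2+1}$. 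No off-diagonal kernel decay is needed.

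\textbf{A workable route for the uniform bound.} Work at the kernel level. With $K$ the kernel of $\langle\nabla\rangle$, the operator $\e^{\ell_\varepsilon}\langle\nabla\rangle\e^{-\ell_\varepsilon}-\langle\nabla\rangle$ has kernel $K(x-y)\big(\e^{\ell_\varepsilon(x)-\ell_\varepsilon(y)}-1\big)$. Since $0\le f'\le1$, one has $|\ell_\varepsilon(x)-\ell_\varepsilon(y)|\le|n\cdot(x-y)|$.
\begin{itemize}
\item For $|x-y|\ge1$: the bound $|K(z)|\lesssim|z|^{-5/2}\e^{-|z|}$ yields the $\varepsilon$-independent Schur majorant $|z|^{-5/2}\e^{-(|z|-|n\cdot z|)}$. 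This is integrable on $\{|z|\ge1\}$, because the exponential gain is lost only in a paraboloidal region around the $n$-axis.
\item For $|x-y|\le1$: expand $\e^a-1=a+O(a^2)$. The linear term differs from $[\ell_\varepsilon,\langle\nabla\rangle]$ by an integrable kernel, and that commutator is bounded uniformly since $\nabla\ell_\varepsilon$ and $\nabla^2\ell_\varepsilon$ are. The quadratic term has an integrable $O(|z|^{-2})$ kernel.
\end{itemize}
This bounds $\e^{\ell_\varepsilon}\langle\nabla\rangle\e^{-\ell_\varepsilon}-\langle\nabla\rangle$, and hence $G_\varepsilon$, uniformly in $\varepsilon$.
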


\subsection{Proof of Theorem \ref{thm:maxvel}}
Let us now give the proof of Theorem \ref{thm:maxvel}. 
\begin{proof}[Proof of Theorem \ref{thm:maxvel}]

Let us first consider an initial data $\tilde{\Psi}_{N,0}$ in the form domain  of~$H_N$, that is $H^{1/2}(\mathbb{R}^{3N})$, satisfying $\Id_{X^N}\tilde{\Psi}_{N,0} = \tilde{\Psi}_{N,0}$ and $\|\tilde{\Psi}_{N,0}\|_2=1$. We first prove the result for $\tilde{\Psi}_{N,0}$ and then approximate ${\Psi}_{N,0}$ by a suitable $\tilde{\Psi}_{N,0}$.

\smallskip 

 Let $\tilde{\Psi}_{N,t}=\e^{-\i tH_N} \tilde{\Psi}_{N,0}$ and its reduced one-particle density matrix $\tilde{\gamma}^{(1)}_{N,t}$, we have
\begin{equation}\label{eq:CS-gamma}
    \Tr\big(\Id_{Y}\tilde{\gamma}_{N,t}^{(1)}\big) \leq \|\e^{2\ell_\varepsilon}\Id_{Y}\|_{\op} \, \|\e^{-\ell_\varepsilon(x)} \tilde{\gamma}_{N,t}^{(1)} \e^{-\ell_\varepsilon(x)}\|_{\Tr}.
\end{equation}
For $\varepsilon$ sufficiently small, the operator norm is bounded by $\e^{-\dist(X,Y)}$, and since the operator in the trace norm is nonnegative we have
\begin{equation}\label{eq:CS-gamma2}
    \Tr\big( \Id_{Y}\tilde{\gamma}_{N,t}^{(1)}\big) \leq \e^{-\dist(X,Y)} \, \Tr\big( \e^{-\ell_\varepsilon(x)} \,  \tilde{\gamma}_{N,t}^{(1)} \, \e^{-\ell_\varepsilon(x)}\big)
\end{equation}
for $\varepsilon$ sufficiently small. \\

\noindent {\bf Derivation of an integral expression for $\Tr  \big(\e^{-\ell_\varepsilon(x)} \tilde{\gamma}_{N,t}^{(1)}\e^{-\ell_\varepsilon(x)}\big)$. }
To compute the second factor, we use the fundamental theorem of calculus:
\begin{align*}
    \Tr  \big(\e^{-\ell_\varepsilon(x)} \tilde{\gamma}_{N,t}^{(1)}\e^{-\ell_\varepsilon(x)}\big) 
    &=\int_{\bR^3}\int_{\bR^{3(N-1)}} \e^{-2\ell_\varepsilon(x)}|\tilde{\Psi}_{N,t}(x,Z)|^{2}\,\d{Z}\, \d{x}\\
    &=\int_{\bR^3}\int_{\bR^{3(N-1)}} \e^{-2\ell_\varepsilon(x)}|\tilde{\Psi}_{N,0}(x,Z)|^{2}\d{Z} \, \d{x}\\
    &\quad +\int_{0}^{t}\partial_{\tau} \int_{\bR^3}\int_{\bR^{3(N-1)}} \e^{-2\ell_\varepsilon(x)}|\tilde{\Psi}_{N,\tau}(x,Z)|^{2}\,\d{Z} \, \d{x} \,\d{\tau}.
\end{align*}
The second term can be rewritten as
\begin{align*}
&\int_{0}^{t}\partial_{\tau} \int_{\bR^3}\int_{\bR^{3(N-1)}} \e^{-2\ell_\varepsilon(x)}|\tilde{\Psi}_{N,\tau}(x,Z)|^{2}\,\d{Z} \, \d{x} \, \d\tau\\
    &=2\Re\int_{0}^{t}\left\langle{\e^{-\ell_\varepsilon(x_{1})}\tilde{\Psi}_{N,\tau}},{\e^{-\ell_\varepsilon(x_{1})} \i\left(\sum_{j=1}^{N} \sqrt{1-\Delta_j }+\frac{1}{N-1}\sum_{1\leq j<k\leq N} \!\! w(x_{j}-x_{k})\right) \tilde{\Psi}_{N,\tau}}\right\rangle \, \d\tau\\
    &=2\Re\int_{0}^{t}\left\langle{\e^{-\ell_\varepsilon(x_{1})}\tilde{\Psi}_{N,\tau}},{\i\left(\sum_{j=1}^{N}\e^{-\ell_\varepsilon(x_{1})}\sqrt{1-\Delta_j}\, \e^{\ell_\varepsilon(x_{1})}\right)\e^{-\ell_\varepsilon(x_{1})}\tilde{\Psi}_{N,\tau}}\right\rangle\,\d\tau\\
    &=2\Re\int_{0}^{t}\left\langle{\e^{-\ell_\varepsilon(x_{1})}\tilde{\Psi}_{N,\tau}},{\i\left(\e^{-\ell_\varepsilon(x_{1})}\sqrt{1-\Delta_1}\,\e^{\ell_\varepsilon(x_{1})}\right)\e^{-\ell_\varepsilon(x_{1})}\tilde{\Psi}_{N,\tau}}\right\rangle\,\d\tau\\
    &=2\int_{0}^{t}\left\langle{\e^{-\ell_\varepsilon(x_{1})}\tilde{\Psi}_{N,\tau}},\Im\left(\e^{-\ell_\varepsilon(x_{1})}\sqrt{1-\Delta_1}\,\e^{\ell_\varepsilon(x_{1})}\right)\e^{-\ell_\varepsilon(x_{1})}\tilde{\Psi}_{N,\tau}\right\rangle\,\d\tau\\
    &=2\int_{0}^{t} 
    \Tr  \big(\e^{-\ell_\varepsilon(x)} \tilde{\gamma}_{N,\tau}^{(1)}\e^{-\ell_\varepsilon(x)} G_{\varepsilon} \big) \,\d{\tau}.
\end{align*}
Here we used in the third equality that $\sqrt{-\Delta_j + 1}$ commutes with $\e^{-\ell_\varepsilon(x_{1})}$ for all $j\neq 1$, so for those $j$, we have 
\begin{equation}
    \e^{-\ell_\varepsilon(x_{1})}\sqrt{-\Delta_j + 1}\, \e^{\ell_\varepsilon(x_{1})}=\sqrt{-\Delta_j + 1},
\end{equation}
which is a self-adjoint operator and thus does not contribute due to the $\i$ in the scalar product and the $\Re$ in front of it. 

We obtain
\begin{equation}\label{eq:gamma-tilde-eps}
    \Tr  \big(\e^{-\ell_\varepsilon(x)} \tilde{\gamma}_{N,t}^{(1)}\e^{-\ell_\varepsilon(x)}\big) 
    =
    \Tr  \big(\e^{-\ell_\varepsilon(x)} \tilde{\gamma}_{N,0}^{(1)}\e^{-\ell_\varepsilon(x)}\big) +2\int_{0}^{t} 
    \Tr  \big(\e^{-\ell_\varepsilon(x)} \tilde{\gamma}_{N,\tau}^{(1)}\e^{-\ell_\varepsilon(x)} G_{\varepsilon} \big) \,\d{\tau}.
\end{equation}

\noindent {\bf Grönwall argument for $\Tr  \big(\e^{-\ell_\varepsilon(x)} \tilde{\gamma}_{N,t}^{(1)}\e^{-\ell_\varepsilon(x)}\big)$. }
Using the condition on the support of $\tilde{\Psi}_{N,0}$ and the normalization \eqref{eq:tr-gamma-normalization}, we have
\begin{equation}\label{eq:initalcondest}
   \Tr  \big(\e^{-\ell_\varepsilon(x)} \tilde{\gamma}_{N,0}^{(1)}\e^{-\ell_\varepsilon(x)}\big)= \int_{\bR^3}\int_{\bR^{3(N-1)}} \e^{-2\ell_\varepsilon(x)}|\tilde{\Psi}_{N,0}(x,Z)|^{2}\d Z\d x\\
    \le \e^{-\dist(X,Y)} 
\end{equation}
for every $\varepsilon>0$ and we also get 
\begin{equation}\label{eq:initalcondest-0}
   \Tr  \big(\e^{-\ell(x)} \tilde{\gamma}_{N,0}^{(1)}\e^{-\ell(x)}\big)=   \int_{\bR^3}\int_{\bR^{3(N-1)}} \e^{-2\ell(x)}|\tilde{\Psi}_{N,0}(x,Z)|^{2}\d Z\d x\\
    \le \e^{-\dist(X,Y)} .
\end{equation}
Moreover, since by \eqref{eq:bound-Geps} we have $\sup_{\varepsilon}\|G_{\varepsilon}\|_{\op}\le C$ and since $\e^{-\ell_\varepsilon(x)} \tilde{\gamma}_{N,\tau}^{(1)}\e^{-\ell_\varepsilon(x)}$ is a positive operator, we estimate
\begin{equation}\label{eq:Gepstermest}
    \Tr  \big(\e^{-\ell_\varepsilon(x)} \tilde{\gamma}_{N,\tau}^{(1)}\e^{-\ell_\varepsilon(x)} G_{\varepsilon} \big) 
    \le C  \Tr  \big(\e^{-\ell_\varepsilon(x)} \tilde{\gamma}_{N,\tau}^{(1)}\e^{-\ell_\varepsilon(x)}  \big) .
\end{equation}
Combining \eqref{eq:Gepstermest} and \eqref{eq:gamma-tilde-eps}, we get 
\begin{equation}\label{eq:gamma-tilde-eps2}
    \Tr  \big(\e^{-\ell_\varepsilon(x)} \tilde{\gamma}_{N,t}^{(1)}\e^{-\ell_\varepsilon(x)}\big) 
    \le
    \Tr  \big(\e^{-\ell_\varepsilon(x)} \tilde{\gamma}_{N,0}^{(1)}\e^{-\ell_\varepsilon(x)}\big) +2C\int_{0}^{t} 
    \Tr  \big(\e^{-\ell_\varepsilon(x)} \tilde{\gamma}_{N,\tau}^{(1)}\e^{-\ell_\varepsilon(x)}  \big) \,\d{\tau}.
\end{equation}

By the Gr\"onwall inequality we have 
\begin{equation}\label{eq:gamma-tilde-eps3}
    \Tr  \big(\e^{-\ell_\varepsilon(x)} \tilde{\gamma}_{N,t}^{(1)}\e^{-\ell_\varepsilon(x)}\big) 
    \le
    \Tr  \big(\e^{-\ell_\varepsilon(x)} \tilde{\gamma}_{N,0}^{(1)}\e^{-\ell_\varepsilon(x)}\big) e^{2Ct}\le \e^{2Ct-\dist(X,Y)}, 
\end{equation}
where we used \eqref{eq:initalcondest} in the second inequality. \\

\noindent {\bf A priori bound for $\Tr  \big(\e^{-\ell(x)} \tilde{\gamma}_{N,t}^{(1)}\e^{-\ell(x)}\big) $. }
Now note that 
\begin{align}\label{eq:lepssplit}
    \Tr  \big(\e^{-\ell_\varepsilon(x)} \tilde{\gamma}_{N,t}^{(1)}\e^{-\ell_\varepsilon(x)}\big)
    &= \int_{\ell_\varepsilon(x)\geq0}\int_{\bR^{3(N-1)}}\e^{-2\ell_\varepsilon(x)}|\tilde{\Psi}_{N,t}(x,Z)|^{2}\,\d Z\d x \;\nonumber\\
    &\qquad+\, \int_{\ell_\varepsilon(x)<0} \int_{\bR^{3(N-1)}} \e^{-2\ell_\varepsilon(x)}|\tilde{\Psi}_{N,t}(x,Z)|^{2}\,\d Z\d x .
\end{align}
Using the convergence $\ell_\varepsilon(x)\to \ell(x)$ as $\varepsilon\to0$ for every $x\in\bR$, we apply the dominated convergence theorem for the first term in \eqref{eq:lepssplit}, using that $\|{\Psi}_{N,t}\|_2=1$, and the monotone convergence theorem for the second term in \eqref{eq:lepssplit}. We then find that for every $t\ge0$, we have
\begin{equation}\label{eq:eps0convtr}
    \lim_{\varepsilon\to0} \Tr  \big(\e^{-\ell_\varepsilon(x)} \tilde{\gamma}_{N,t}^{(1)}\e^{-\ell_\varepsilon(x)}\big) = \Tr  \big(\e^{-\ell(x)} \tilde{\gamma}_{N,t}^{(1)}\e^{-\ell(x)}\big).
\end{equation}
Moreover, from \eqref{eq:gamma-tilde-eps3} and \eqref{eq:eps0convtr}, we deduce that
\begin{equation}\label{eq:lpsiest}
      \int_{\mathbb{R}^3} \int_{\bR^{3(N-1)}} \e^{-2\ell(x)}|\tilde{\Psi}_{N,t}(x,Z)|^{2}\,\d Z\d x=\Tr  \big(\e^{-\ell(x)} \tilde{\gamma}_{N,t}^{(1)}\e^{-\ell(x)}\big) 
  \le \e^{2Ct-\dist(X,Y)}.
\end{equation}
The estimate serves as an a priori bound in the following. \\

\noindent {\bf Domination of the integral in  \eqref{eq:gamma-tilde-eps}.
}
We would like to let $\varepsilon\to 0$ in \eqref{eq:gamma-tilde-eps}. 
For all~$\tau\ge0$, applying \eqref{eq:Gepstermest} and \eqref{eq:gamma-tilde-eps3} we get
\begin{equation}
    \big|\Tr  \big(\e^{-\ell_\varepsilon(x)} \tilde{\gamma}_{N,\tau}^{(1)}\e^{-\ell_\varepsilon(x)} G_{\varepsilon} \big)\big|
    \le C\Tr  \big(\e^{-\ell_\varepsilon(x)} \tilde{\gamma}_{N,\tau}^{(1)}\e^{-\ell_\varepsilon(x)}  \big)
    \le C \e^{2C\tau-\dist(X,Y)},
\end{equation}
 Now, since for every $t\ge0$, we have
\begin{equation}
    \e^{2C\tau-\dist(X,Y)}\in L^1_\tau([0,t]),
\end{equation}
we can use dominated convergence in the integral in \eqref{eq:gamma-tilde-eps} when taking the limit $\varepsilon\to0$. Thus, in order to prove
\begin{equation}\label{eq:gamma-tilde-0}
    \Tr  \big(\e^{-\ell(x)} \tilde{\gamma}_{N,t}^{(1)}\e^{-\ell(x)}\big) 
    =
    \Tr  \big(\e^{-\ell(x)} \tilde{\gamma}_{N,0}^{(1)}\e^{-\ell(x)}\big) +2\int_{0}^{t} 
    \Tr  \big(\e^{-\ell(x)} \tilde{\gamma}_{N,\tau}^{(1)}\e^{-\ell(x)} G_{0} \big) \,\d{\tau},
\end{equation}
it remains to prove that for every $\tau\ge0$, we have 
\begin{equation}\label{eq:unterintconvgoal}
    \lim_{\varepsilon\to0}\Tr  \big(\e^{-\ell_\varepsilon(x)} \tilde{\gamma}_{N,\tau}^{(1)}\e^{-\ell_\varepsilon(x)} G_{\varepsilon} \big)
    =
    \Tr  \big(\e^{-\ell(x)} \tilde{\gamma}_{N,\tau}^{(1)}\e^{-\ell(x)} G_{0} \big).
\end{equation}

\noindent {\bf Proof of \eqref{eq:unterintconvgoal}. }
By the cyclicity of the trace and the triangle inequality, we have 
\begin{equation}\label{eq:eps0bigdifference}
\begin{split}
    \big|\Tr&  \big(\e^{-\ell_\varepsilon(x)} \tilde{\gamma}_{N,\tau}^{(1)}\e^{-\ell_\varepsilon(x)} G_{\varepsilon} \big) -
        \Tr  \big(\e^{-\ell(x)} \tilde{\gamma}_{N,\tau}^{(1)}\e^{-\ell(x)} G_{0} \big)\big| \\
    &= \big|\Tr  \big( \tilde{\gamma}_{N,\tau}^{(1)}\e^{-\ell_\varepsilon(x)} G_{\varepsilon} \e^{-\ell_\varepsilon(x)}\big) -
        \Tr  \big( \tilde{\gamma}_{N,\tau}^{(1)}\e^{-\ell(x)} G_{0}\e^{-\ell(x)} \big)\big|\\
    &\le 
    \big|
    \Tr  \big( \tilde{\gamma}_{N,\tau}^{(1)}\e^{-\ell_\varepsilon(x)} G_{\varepsilon} (\e^{-\ell_\varepsilon(x)}-\e^{-\ell(x)})\big)\big|
    +
    \big| \Tr  \big( \tilde{\gamma}_{N,\tau}^{(1)}(\e^{-\ell_\varepsilon(x)}-\e^{-\ell(x)}) G_{\varepsilon} \e^{-\ell(x)}\big)\big|
    \\
    &\qquad +
    \big|
    \Tr  \big( \tilde{\gamma}_{N,\tau}^{(1)}\e^{-\ell(x)} (G_\varepsilon-G_{0})\e^{-\ell(x)} \big)\big|.
\end{split}
\end{equation}

\smallskip

For the first term in \eqref{eq:eps0bigdifference}, since $\tilde{\gamma}_{N,\tau}^{(1)}$ is the projection onto the function $\tilde{\Psi}_{N,t}$, we have
\begin{align*}
    \big|
    \Tr & \big( \tilde{\gamma}_{N,\tau}^{(1)}\e^{-\ell_\varepsilon(x)} G_{\varepsilon} (\e^{-\ell_\varepsilon(x)}-\e^{-\ell(x)})\big)\big|\\
    &=  \int_{\mathbb{R}^3} \int_{\bR^{3(N-1)}} \overline{\tilde{\Psi}_{N,t}(x,Z)} \e^{-\ell_\varepsilon(x)}
    \left( G_\varepsilon\left((\e^{-\ell_\varepsilon(x)}-\e^{-{\ell(x)}})\tilde{\Psi}_{N,t}\right) \right)(x,Z)
    \,\d Z\d x\\
    &\le \sqrt{\int_{\mathbb{R}^3} \int_{\bR^{3(N-1)}} \big|\tilde{\Psi}_{N,t}(x,Z) \e^{-\ell_\varepsilon(x)}\big|^2
    \,\d Z\d x}\\
    &\qquad
    \times\sqrt{\int_{\mathbb{R}^3} \int_{\bR^{3(N-1)}} \big|\left( G_\varepsilon\left((\e^{-\ell_\varepsilon(x)}-\e^{-\ell(x)})\tilde{\Psi}_{N,t}\right)\right)(x,Z)\big|^2
    \,\d Z\d x}\\
    &\le \e^{C\tau-\frac{1}{2}\dist(X,Y)}\|G_{\varepsilon}\|_{\op}\|(\e^{-\ell_\varepsilon(x)}-\e^{-\ell(x)})\tilde{\Psi}_{N,\tau}\|_2.
\end{align*}
Here we used \eqref{eq:gamma-tilde-eps3}. Note that
\begin{equation}\label{eq:lepslpsitozero}
    \lim_{\varepsilon\to0}
    \|(\e^{-\ell_\varepsilon(x)}-\e^{-\ell(x)})\tilde{\Psi}_{N,\tau}\|_2=0
\end{equation}
by the dominated convergence theorem with dominating function
\begin{equation}
    2(1+\e^{-\ell(x)})^2|\tilde{\Psi}_{N,\tau}(x,Z)|^2.
\end{equation}
This shows that the first term in \eqref{eq:eps0bigdifference} goes to zero as $\varepsilon\to0$.

\smallskip

For the second term in \eqref{eq:eps0bigdifference}, we perform a similar computation:
\begin{align}
\begin{split}
    \big| \Tr & \big( \tilde{\gamma}_{N,\tau}^{(1)}(\e^{-\ell_\varepsilon(x)}-\e^{-\ell(x)}) G_{\varepsilon} \e^{-\ell(x)}\big)\big|\\
    &=  \Big | \int_{\mathbb{R}^3} \int_{\bR^{3(N-1)}} \overline{\tilde{\Psi}_{N,t}(x,Z)} (\e^{-\ell_\varepsilon(x)}-\e^{-\ell(x)})
    \left( G_\varepsilon(\e^{-\ell(x)}\tilde{\Psi}_{N,t} )\right)(x,Z)
    \,\d Z\d x \Big |\\
    &\le \|(\e^{-\ell_\varepsilon(x)}-\e^{-\ell(x)})\tilde{\Psi}_{N,\tau}\|_2\|G_{\varepsilon}\|_{\op}\|\e^{-\ell(x)}\tilde{\Psi}_{N,t}\|_2.
\end{split}
\end{align}
By \eqref{eq:lepslpsitozero} and \eqref{eq:bound-Geps}, we find that the second term in \eqref{eq:eps0bigdifference} also goes to zero as $\varepsilon\to0$.

\smallskip

For the third term in \eqref{eq:eps0bigdifference}, note that 
\begin{align}
\begin{split}
    \lim_{\varepsilon\to0}&\big|
    \Tr  \big( \tilde{\gamma}_{N,\tau}^{(1)}\e^{-\ell(x)} (G_\varepsilon-G_{0})\e^{-\ell(x)} \big)\big|\\
    &=  \lim_{\varepsilon\to0} \left|\int_{\mathbb{R}^3} \int_{\bR^{3(N-1)}} \overline{\tilde{\Psi}_{N,t}(x,Z)} \e^{-\ell(x)}
    \left( \left(G_\varepsilon-G_0\right)\left(\e^{-\ell(x)}\tilde{\Psi}_{N,t} \right)\right)(x,Z)
    \,\d Z\d x \right|\\
    &=0.
\end{split}
\end{align}
Here we used the weak convergence of $G_\epsilon$ to $G_0$ that was proved in Lemma \ref{lem:rappels-BFG2a} on the one-particle level, but which extends to these operators considered as operators acting only on the first variable, and we also used that 
$\e^{-\ell(x_1)}\tilde{\Psi}_{N,t}\in L^2(\bR^{3N}) $ by \eqref{eq:lpsiest}. 

\smallskip

To sum up, we have now proved \eqref{eq:unterintconvgoal}, which was sufficient to prove \eqref{eq:gamma-tilde-0}. \\

\noindent {\bf Grönwall argument for $\Tr  \big(\e^{-\ell(x)} \tilde{\gamma}_{N,t}^{(1)}\e^{-\ell(x)}\big)$. }
Next, recall from Lemma \ref{lem:rappels-BFG2a} that 
\begin{equation}
    \|G_{0}\|_{\op}\le1.
\end{equation}
Thus, for every $\tau\ge0$, by positivity of $\e^{-\ell(x)} \tilde{\gamma}_{N,\tau}^{(1)}\e^{-\ell(x)}$, we have
\begin{equation}
     \Tr  \big(\e^{-\ell(x)} \tilde{\gamma}_{N,\tau}^{(1)}\e^{-\ell(x)} G_{0} \big)\le  \Tr  \big(\e^{-\ell(x)} \tilde{\gamma}_{N,\tau}^{(1)}\e^{-\ell(x)}  \big).
\end{equation}
Combining this with \eqref{eq:gamma-tilde-0}, we find that 
\begin{equation}\label{eq:withoutG0beforeGronwall}
    \Tr  \big(\e^{-\ell(x)} \tilde{\gamma}_{N,t}^{(1)}\e^{-\ell(x)}\big) 
    \le
    \Tr  \big(\e^{-\ell(x)} \tilde{\gamma}_{N,0}^{(1)}\e^{-\ell(x)}\big) +2\int_{0}^{t} 
    \Tr  \big(\e^{-\ell(x)} \tilde{\gamma}_{N,\tau}^{(1)}\e^{-\ell(x)}  \big) \,\d{\tau}.
\end{equation}
We now apply Grönwall's inequality to \eqref{eq:withoutG0beforeGronwall}, and then use \eqref{eq:initalcondest-0} to obtain
\begin{equation}\label{eq:gamma-tilde-0-est-Gro}
    \Tr  \big(\e^{-\ell(x)} \tilde{\gamma}_{N,t}^{(1)}\e^{-\ell(x)}\big) 
    \le
    \Tr  \big(\e^{-\ell(x)} \tilde{\gamma}_{N,0}^{(1)}\e^{-\ell(x)}\big)\e^{2t}
    \le \e^{2t-\dist(X,Y)}.
\end{equation}

\smallskip

\noindent {\bf Conclusion for $\tilde\Psi_{N,0}\in H^{1/2}(\bR^{3N})$. }
Now recall from \eqref{eq:CS-gamma2} that for $\varepsilon$ sufficiently small, we have
\begin{equation}\label{eq:CS-gamma3}
    \Tr\big( \Id_{Y}\tilde{\gamma}_{N,t}^{(1)}\big) \leq \e^{-\dist(X,Y)} \, \Tr\big( \e^{-\ell_\varepsilon(x)} \tilde{\gamma}_{N,t}^{(1)} \e^{-\ell_\varepsilon(x)}\big),
\end{equation}
and by \eqref{eq:eps0convtr}, we have
\begin{equation}\label{eq:eps0convtr1}
    \lim_{\varepsilon\to0} \Tr  \big(\e^{-\ell_\varepsilon(x)} \tilde{\gamma}_{N,t}^{(1)}\e^{-\ell_\varepsilon(x)}\big) = \Tr  \big(\e^{-\ell(x)} \tilde{\gamma}_{N,t}^{(1)}\e^{-\ell(x)}\big).
\end{equation}
Thus, combining \eqref{eq:gamma-tilde-0-est-Gro}, \eqref{eq:CS-gamma3} and \eqref{eq:eps0convtr1}, we obtain 
\begin{equation}
    \Tr\big( \Id_{Y}\tilde{\gamma}_{N,t}^{(1)}\big)\le \e^{2\left(t-\dist(X,Y)\right)}.
\end{equation}
This concludes the proof of Theorem \ref{thm:maxvel} for $\tilde\Psi_{N,0}\in H^{1/2}(\bR^{3N})$. \\

\noindent {\bf Approximation argument. }
It remains to deduce the general result using  an approximation argument. To this end, let us now consider $\Psi_{N,0}\in L^2_{\mathrm{sym}}(\bR^{3N})$ with $\Id_{X^N}\Psi_{N,0} = \Psi_{N,0}$ and $\|\Psi_{N,0}\|_2=1$. For any $\delta>0$, there exists $\tilde{\Psi}_{N,0}\in H^{1/2}_{\mathrm{sym}}(\bR^{3N})$ with $\Id_{X^N}\tilde{\Psi}_{N,0} = \tilde{\Psi}_{N,0}$, $\|\Psi_{N,0} - \tilde{\Psi}_{N,0}\|_{2} < \delta$ and $\|\tilde\Psi_{N,0}\|_2=1$. 

Thus, denoting by $\tilde\gamma_{N,t}^{(1)}$ and $\gamma_{N,t}^{(1)}$ the reduced density matrices of $\tilde{\Psi}_{N,t}$ and $\Psi_{N,t}$ respectively, it holds
\begin{align*}
    \|\Id_Y  \gamma_{N,t}^{(1)}\|_{\mathrm{Tr}} 
    &\leq  \|\Id_Y  (\tilde\gamma_{N,t}^{(1)}-\gamma_{N,t}^{(1)})\|_{\mathrm{Tr}}+ \|\Id_Y  \tilde{\gamma}_{N,t}^{(1)}\|_{\mathrm{Tr}}  \\
    &\le \|\tilde\gamma_{N,t}^{(1)}-\gamma_{N,t}^{(1)}\|_{\mathrm{Tr}}+\e^{2\left(t-\dist(X,Y)\right)}\\
    & \leq (\|\Psi_{N,0}\|_{2}+\|\tilde{\Psi}_{N,0}\|_{2}) \|\Psi_{N,0}-\tilde{\Psi}_{N,0}\|_{2} + \e^{2(t-\dist(X,Y))} \\
    & \leq 2 \delta + \e^{2(t-\dist(X,Y))} .
\end{align*}
As this is true for $\delta$ arbitrarily small, we get the result.
\end{proof}

\section{Many-Body Minimal Velocity Estimate}\label{sec:manybody-min-vel}

In this section we prove a many-body minimal velocity estimate as stated in Theorem \ref{thm:intro-many-body-minvel}.

We recall the velocity observable, defined on $\bR^3$,
\[
    \Theta
    :=
    \frac{-\i\nabla}{\sqrt{-\Delta+1}},
    \qquad
    \Theta_j
    :=
    \frac{-\i\nabla_j}{\sqrt{-\Delta_j+1}},
\]
and define the one-particle observable
\begin{equation}\label{def:many-body-pc-observable}
    A_j(t):=x_j-t\Theta_j.
\end{equation}
We also define the corresponding one-particle moment by
\begin{equation}\label{def:many-body-pc-moment}
    \mathcal{M}_N(t)
    :=
    \langle \Psi_{N,t}, |A_1(t)|^2\Psi_{N,t}\rangle .
\end{equation}
Equivalently,
\begin{equation}\label{eq:pc-moment-asymptotic-velocity}
    \frac{1}{t^2}\mathcal{M}_N(t)
    =
    \left\|
        \left(\frac{x_1}{t}-\Theta_1\right)\Psi_{N,t}
    \right\|_2^2 .
\end{equation}

\subsection{Proof of Theorem \ref{thm:intro-many-body-minvel}\eqref{thm:minimal-L2bound}}
\label{subsec:MomentBound}

In this section, we prove the moment estimate Theorem \ref{thm:intro-many-body-minvel}\eqref{thm:minimal-L2bound}
\begin{equation}\label{eq:asymptotic-velocity-bound}
    \left\|
        \left(\frac{x_1}{t}-\Theta_1\right)
        \Psi_{N,t}
    \right\|_2^2
 \leq
    C\left(\frac{1}{t^2}+
        \frac{1}{t} 
        +
        \frac{t^2}{N}
    \right).
\end{equation}
This will also be the main input for the proof of  Theorem \ref{thm:intro-many-body-minvel}\eqref{thm:minimal-trace-bound} in the following subsection. Our proof of Theorem \ref{thm:intro-many-body-minvel}\eqref{thm:minimal-L2bound} uses the mean-field estimate \eqref{eq:alphaNest} from Section~\ref{sec:mean-field}
\[
    \alpha_N(t)\leq \frac{C}{N}.
\]
In Appendix \ref{sec:one-body-minvel-direct}, we rewrote the proof of the corresponding statement of Theorem \ref{thm:intro-many-body-minvel}\eqref{thm:minimal-L2bound} on the one-body level for the boson star equation \eqref{eq:BSE}. At this point, we would like to mention that this proof idea differs significantly from the arguments used in   \cite[Proposition 8.7]{BFG2025}.

\begin{proof}[Proof of Theorem \ref{thm:intro-many-body-minvel}\eqref{thm:minimal-L2bound}]
Since $A_1(t)=x_1-t\Theta_1$ solves the free Heisenberg equation
\[
    \partial_t A_1(t)
    +
    \i\big[\sqrt{-\Delta_1+1},A_1(t)\big]
    =
    0,
\]
only the interaction part of the Hamiltonian contributes to the time derivative of $\mathcal{M}_N(t)$, which was defined in \eqref{def:many-body-pc-moment}. Using the symmetry of $\Psi_{N,t}$ and the normalization $(N-1)^{-1}$ in the interaction (see Appendix \ref{sec:one-body-minvel-direct}) we obtain
\begin{equation}\label{eq:beta-derivative}
    \frac{\d}{\d t}\mathcal{M}_N(t)
    =
    2t\,\Re
    \left\langle
        A_1(t)\Psi_{N,t},
        \i[w_{12},\Theta_1]\Psi_{N,t}
    \right\rangle,
\end{equation}
where
\[
    w_{12}:=w(x_1-x_2).
\]

We insert the decomposition $1=\p_2+\q_2$ with $\p_2, \q_2$ defined in \eqref{eq:pqdef} 
and we get 
\begin{equation}\label{eq:pq-decomposition-comm}
    [w_{12},\Theta_1]\Psi_{N,t}
    =
    [w_{12},\Theta_1]\p_2\Psi_{N,t}
    +
    [w_{12},\Theta_1]\q_2\Psi_{N,t}.
\end{equation}
Since $\Theta_1$ acts on the first particle, while $\p_2$ and $\q_2$ act on the second particle, we have
\[
    [\Theta_1,\p_2]=[\Theta_1,\q_2]=0.
\]
We now estimate these three terms in \eqref{eq:pq-decomposition-comm} separately.

For the first term, since $\p_2$ commutes with $\Theta_1$,
\[
    [w_{12},\Theta_1]\p_2
    =
    [w_{12}\p_2,\Theta_1],
\]
and
\begin{equation}\label{eq:bound-commutator-w12p2-Theta}
    \|[w_{12}\p_2,\Theta_1]\|_{\mathrm{op}}
\leq 2\|w_{12}\p_2\|_{\mathrm{op}}
= 2 \|\p_2 w_{12}^2 \p_2\|^{1/2}_{\mathrm{op}}
\leq 2 \|u_t\|_{\infty} \|w\|_2
\end{equation}
since, by the same method as in \eqref{eq:estimate-p-Vt2-p},
\[
\|\p_2 w_{12}^2 \p_2\|_{\mathrm{op}}
= \|\p_2 (w^2 * |u_t|^2) \p_2\|_{\mathrm{op}}
\leq \|w^2 * |u_t|^2\|_{\infty}
\leq \|w\|_2^2 \|u_t\|^2_\infty.
\]

Consequently,
\begin{equation}\label{eq:mixed-contribution}
    \left|
    \left\langle
        A_1(t)\Psi_{N,t},
        \i[w_{12},\Theta_1]\p_2\Psi_{N,t}
    \right\rangle
    \right|
    \leq
    C\|w\|_2\langle t\rangle^{-3/2}
    \sqrt{\mathcal{M}_N(t)}.
\end{equation}
Finally, we estimate the last term in \eqref{eq:pq-decomposition-comm}. Since $w\in L^\infty(\bR^3)$ and $\|\Theta_1\|_{\op}\leq1$,
\[
    \|[w_{12},\Theta_1]\|_{\op}
    \leq
    2\|w\|_\infty.
\]
Thus, by the Cauchy--Schwarz inequality,
\begin{align}
    \left|
    \left\langle
        A_1(t)\Psi_{N,t},
        \i[w_{12},\Theta_1]\q_2\Psi_{N,t}
    \right\rangle
    \right|
    &\leq
    2\|w\|_\infty
    \|A_1(t)\Psi_{N,t}\|_2
    \|\q_2\Psi_{N,t}\|_2
    \nonumber\\
    &=
    2\|w\|_\infty
    \sqrt{\mathcal{M}_N(t)}
    \sqrt{\alpha_N(t)},
    \label{eq:fluctuation-contribution-raw}
\end{align}
where $\alpha_N(t)$ is defined in \eqref{eq:alpha_symmetry}. 
Using the uniform mean-field bound, see Proposition \ref{lemma: bound alpha},
\[
    \alpha_N(t)\leq \frac{C}{N},
\]
we get
\begin{equation}\label{eq:fluctuation-contribution}
    \left|
    \left\langle
        A_1(t)\Psi_{N,t},
        \i[w_{12},\Theta_1]\q_2\Psi_{N,t}
    \right\rangle
    \right|
    \leq
    \frac{C\|w\|_\infty}{\sqrt N}
    \sqrt{\mathcal{M}_N(t)}.
\end{equation}

Combining \eqref{eq:beta-derivative}-\eqref{eq:bound-commutator-w12p2-Theta}, \eqref{eq:mixed-contribution}, and \eqref{eq:fluctuation-contribution}, we obtain
\[
    \frac{\d}{\d t}\mathcal{M}_N(t)
    \leq
    C\|w\|_2t\langle t\rangle^{-3/2}
    \sqrt{\mathcal{M}_N(t)}
    +
    C\|w\|_\infty\frac{t}{\sqrt N}
    \sqrt{\mathcal{M}_N(t)}.
\]
Thus, for all $t>0$ and $\varepsilon>0$,
\[
    \frac{\d}{\d t}\sqrt{\mathcal{M}_N(t)+\varepsilon}
    \leq
    C\|w\|_2t\langle t\rangle^{-3/2}
    +
    C\|w\|_\infty\frac{t}{\sqrt N}.
\]
Integrating from $0$ to $t$ and then letting $\varepsilon\to0$ gives
\[
    \sqrt{\mathcal{M}_N(t)}
    \leq
    \sqrt{\mathcal{M}_N(0)}
    +
    C\|w\|_2\int_0^t \tau\langle\tau\rangle^{-3/2}\,\d\tau
    +
    \frac{C\|w\|_\infty}{\sqrt N}
    \int_0^t \tau\,\d\tau .
\]
The first integral is bounded by $C\langle t\rangle^{1/2}$, and the second is $t^2/2$. Moreover, since $\Psi_{N,0}=u_0^{\otimes N}$ and $A_1(0)=x_1$,
\[
    \mathcal{M}_N(0)=\|xu_0\|_2^2.
\]
Hence
\begin{equation}
    \sqrt{\mathcal{M}_N(t)}
    \leq
    C\left(\|xu_0\|_2
        +\|w\|_2 t^{1/2}
        +
        \frac{\|w\|_\infty t^2}{\sqrt N}
    \right),
\end{equation}
and taking the square gives
\begin{equation}
    \mathcal{M}_N(t)
    \leq
    C\left(\|xu_0\|_2^2
        +\|w\|_2^2 t
        +
        \frac{\|w\|_\infty^2 t^4}{N}
    \right).
 \end{equation}
 Finally, we have
\begin{equation}
    \left\|
        \left(\frac{x_1}{t}-\Theta_1\right)
        \Psi_{N,t}
    \right\|_2^2
    =
    \frac{1}{t^2}\mathcal{M}_N(t)\leq
    C\left(\frac{1}{t^2}+
    \frac{1}{t} 
        +
        \frac{ t^2}{N}
    \right),
\end{equation}
where $C>0$ depends on $w$ and $u_0$. This gives \eqref{eq:asymptotic-velocity-bound} and completes the proof of Theorem \ref{thm:intro-many-body-minvel}\eqref{thm:minimal-L2bound}.
\end{proof}

\subsection{Proof of Theorem \ref{thm:intro-many-body-minvel}\eqref{thm:minimal-trace-bound}}\label{subsec:min-vel-est}
We now deduce the many-body minimal velocity estimate stated in Theorem \ref{thm:intro-many-body-minvel}\eqref{thm:minimal-trace-bound} from the moment bound from Theorem \ref{thm:intro-many-body-minvel}\eqref{thm:minimal-L2bound}.

\begin{proof}[Proof of Theorem \ref{thm:intro-many-body-minvel}\eqref{thm:minimal-trace-bound}]
By definition of $\gamma^{(1)}_{N,t}$ we have
\[
    \Tr\left[
        g(\Theta^2)
        f^2\left(\frac{|x|^2}{t^2}\right)
        g(\Theta^2)
        \gamma^{(1)}_{N,t}
    \right]
    =
    \left\|
        f\left(\frac{|x_1|^2}{t^2}\right)
        g(\Theta_1^2)
        \Psi_{N,t}
    \right\|_2^2.
\]
The functions $f$ and $g$ have disjoint supports. Hence this observable localizes to the region where the asymptotic position $x/t$ and the velocity observable $\Theta$ are incompatible.

Using the Helffer-Sj\"ostrand functional calculus, it is proven in \cite[Proposition 8.8]{BFG2025} (see, more precisely, the last equation in the proof of Proposition 8.8 in \cite{BFG2025}) that,
\[
    \left\|
        \left(g\left(\frac{|x|^2}{t^2}\right)
        -g(\Theta^2)\right)
        \varphi
    \right\|_2
    \le
    C_g t^{-1} \|\varphi\|_2 + C_g\left\|\left(\Theta-\frac{x}{t}\right)\varphi\right\|_2 ,
\]
for all $\varphi\in L^2(\mathbb{R}^3)$ such that $|x|\varphi\in L^2(\mathbb{R}^3)$. Since $f$ and $g$ have disjoint supports $f(|x|^2/t^2)g(|x|^2/t^2)=0$. Hence, by the preceding functional calculus estimate and since $\|\Psi_{N,t}\|_2=1$, we obtain
\[
\begin{aligned}
    \left\|
        f\left(\frac{|x_1|^2}{t^2}\right)
        g(\Theta_1^2)
        \Psi_{N,t}
    \right\|_2^2
    &=
    \left\|
        f\left(\frac{|x_1|^2}{t^2}\right)\left(g\left(\frac{|x_1|^2}{t^2}\right)-
        g(\Theta_1^2)\right)
        \Psi_{N,t}
    \right\|_2^2\\
    &\le  
    C_{f,g}t^{-2}
    +
    C_{f,g} \left\|
        \left(\frac{x_1}{t}-\Theta_1\right)\Psi_{N,t}
    \right\|_2^2.
\end{aligned}
\]
Applying \eqref{eq:asymptotic-velocity-bound},
this establishes Theorem \ref{thm:intro-many-body-minvel}\eqref{thm:minimal-trace-bound}.
\end{proof}

\subsection{Proof of Theorem \ref{thm:intro-many-body-minvel}\eqref{thm:minimal-nu-bound}}\label{subsec:nuN}

In this section, we provide a weighted estimate for the number of particles $\nu_N(t)$ outside the condensate $u_t$. Note that the definition of $\nu_N(t)$ given in the statement of Theorem \ref{thm:intro-many-body-minvel}\eqref{thm:minimal-nu-bound} is equivalent to 
\begin{equation}\label{def:nuN}
    \nu_N(t)
    :=
    \|x_1\q_1\Psi_{N,t}\|_2^2,
\end{equation}
where $\q_1$ was defined in \eqref{eq:pqdef}.

\begin{remark}\phantom{}
\begin{enumerate}

\item 
The normalized quantity $t^{-2}\nu_N(t)$ measures the second moment of the asymptotic position variable $x_1/t$ for the part of the wave function outside the condensate. In a scattering regime, $x_1/t$ may be interpreted as an asymptotic velocity, or group-velocity-type variable. 
The estimate
\begin{equation}
    \nu_N(t)\le Ct^2
\end{equation}
from  Theorem \ref{thm:intro-many-body-minvel}\eqref{thm:minimal-nu-bound}
shows that the weighted bad-particle component spreads at most linearly in time. Equivalently, the asymptotic-velocity-type quantity~$t^{-2}\nu_N(t)$ remains uniformly controlled.

\item Also note that Theorem \ref{thm:intro-many-body-minvel}\eqref{thm:minimal-L2bound} does \emph{not} imply \ref{thm:intro-many-body-minvel}\eqref{thm:minimal-nu-bound} because the estimate 
\begin{equation}
    \left\|
        \left(\frac{x_1}{t}-\Theta_1\right)
        \Psi_{N,t}
    \right\|_2^2
\leq
    C\left(\frac{1}{t^2}
        +\frac{1}{t} 
        +
        \frac{t^2}{N}
    \right)
\end{equation}
contains a $ t^2/N$ term, which would yield a $ t^4/N$ term after multiplying by $t^2$.
\end{enumerate}
\end{remark}

\begin{proof}[Proof of Theorem \ref{thm:intro-many-body-minvel}\eqref{thm:minimal-nu-bound}]
We use the alternative definition of $\nu_N(t)$ given in \eqref{def:nuN}.
Since~$u_t$ solves the Hartree equation
\[
    \i\partial_t u_t=h^{\MF}(t)u_t,
    \qquad
    h^{\MF}(t)=\sqrt{-\Delta+1}+ (w*|u_t|^2),
\]
the time-dependent projector $\q_1$ satisfies
\[
    \partial_t\q_1=-\i[h^{\MF}_1(t),\q_1].
\]
Using also $\i\partial_t\Psi_{N,t}=H_N\Psi_{N,t}$, we obtain
\begin{align}
    \frac{\d}{\d t}\nu_N(t)
    &=
    \langle \Psi_{N,t},
        \i[H_N,\q_1x_1^2\q_1]\Psi_{N,t}
    \rangle
    +
    \langle \Psi_{N,t},
        \partial_t(\q_1x_1^2\q_1)\Psi_{N,t}
    \rangle
    \nonumber\\
    &=
    \langle \Psi_{N,t},
        \i[H_N-h^{\MF}_1,\q_1]x_1^2\q_1\Psi_{N,t}
    \rangle
    +
    \langle \Psi_{N,t},
        \i\q_1x_1^2[H_N-h^{\MF}_1,\q_1]\Psi_{N,t}
    \rangle
    \nonumber\\
    &\quad
    +
    \langle \q_1\Psi_{N,t},
        \i[H_N,x_1^2]\q_1\Psi_{N,t}
    \rangle .
    \label{eq:nuN-derivative-decomposition}
\end{align}

We first estimate the last term. Since the interaction potential  commutes with $x_1^2$, by passing to Fourier space, we have
\[
    \i[H_N,x_1^2]
    =
    \i[\sqrt{-\Delta_1+1},x_1^2]
    =
    x_1\cdot \Theta_{1}+\Theta_{1}\cdot x_1,
\]
where we recall the definition
\[
    \Theta_{1}:=\frac{-\i\nabla_{1}}{\sqrt{-\Delta_1 + 1}}.
\]
In particular $\|\Theta_{1}\|_{\op}\le 1$ and therefore
\begin{align}
    \left|
    \langle \q_1\Psi_{N,t},
        \i[H_N,x_1^2]\q_1\Psi_{N,t}
    \rangle
    \right|
    &\le
    2\|x_1\q_1\Psi_{N,t}\|_2
      \|\Theta_{1}\q_1\Psi_{N,t}\|_2
    \nonumber\\
    &\le
    2\sqrt{\nu_N(t)}\,\|\q_1\Psi_{N,t}\|_2
    =
    2\sqrt{\nu_N(t)}\,\sqrt{\alpha_N(t)} ,
    \label{eq:nuN-kinetic-term}
\end{align}
with $\alpha_N(t)$ defined in \eqref{eq:alpha_symmetry}. 

Next, we estimate the terms containing commutators with $\q_1$. As computed in \eqref{comp: [H - h, q1]} it holds 
\[
    [H_N-h^{\MF}_1,\q_1]
    =
    [V_t(x_1,x_2),\q_1],
\]
with $V_t(x_1,x_2)$ defined in \eqref{def:Vt}. 
Using that $[x_1^2,V_t(x_1,x_2)]=0$, 
 the first two terms in~\eqref{eq:nuN-derivative-decomposition} become
\begin{multline}
    \langle \Psi_{N,t},
        \i[H_N-h^{\MF}_1,\q_1]x_1^2\q_1\Psi_{N,t}
    \rangle
    +
    \langle \Psi_{N,t},
        \i\q_1x_1^2[H_N-h^{\MF}_1,\q_1]\Psi_{N,t}
    \rangle
    \\
    =-
    2\,
    \Im
    \langle
        \Psi_{N,t},
        V_t(x_1,x_2)\q_1x_1^2\q_1\Psi_{N,t}
    \rangle .
    \label{eq:nuN-interaction-term}
\end{multline}
We now expand
\[
    \q_1x_1^2\q_1
    =
    x_1^2-x_1^2\p_1-\p_1x_1^2+\p_1x_1^2\p_1.
\]
Since $V_t$ is real-valued, we have
\[
    \Im\langle \Psi_{N,t},V_t(x_1,x_2)x_1^2\Psi_{N,t}\rangle=0.
\]
Thus, inserting $1=\p_1 + \q_1$, it holds 
\begin{align*}
    &\Im \langle \Psi_{N,t}, \, V_t(x_1, x_2) \q_1 x_1^2 \q_1 \Psi_{N,t}\rangle\\
    &=   - \Im \langle \Psi_{N,t}, \,  (\p_1 +\q_1) V_t(x_1, x_2) x_1^2 \p_1 \Psi_{N,t}\rangle
    - \Im \langle \Psi_{N,t}, \, V_t(x_1, x_2)  \p_1 x_1^2 (\p_1 +\q_1) \Psi_{N,t} \rangle \\
    &\qquad+\Im \langle \Psi_{N,t}, \, V_t(x_1, x_2)  \p_1 x_1^2 \p_1 \Psi_{N,t} \rangle \\
    &= - \Im \langle \Psi_{N,t}, \,  \q_1 V_t(x_1, x_2) x_1^2 \p_1 \Psi_{N,t}\rangle- \Im \langle \Psi_{N,t}, \,   V_t(x_1, x_2) \p_1 x_1^2 \q_1 \Psi_{N,t}\rangle,
\end{align*}
where we used that $ \Im \langle \Psi_{N,t}, \,   \p_1 V_t(x_1, x_2)  x_1^2 \p_1 \Psi_{N,t}\rangle=0$. 
By Cauchy-Schwarz inequality, we obtain
\begin{align}
    \left|
    \Im
    \langle
        \Psi_{N,t},
        V_t(x_1,x_2)\q_1x_1^2\q_1\Psi_{N,t}
    \rangle
    \right|
    &\le
    \sqrt{\nu_N(t)}
    \Big(
        \|V_t(x_1,x_2)x_1\p_1\Psi_{N,t}\|_2
        \nonumber\\
        &\hspace{4cm}
        +
        \|x_1\p_1V_t(x_1,x_2)\Psi_{N,t}\|_2
    \Big).
    \label{eq:nuN-basic-interaction-bound}
\end{align}

As we prove in Lemma \ref{lem:interaction-weighted-q} below, there exists a constant $C>0$, depending only on $w$ and $u_0$, such that
\begin{equation}\label{eq:Vtx1est}
    \|V_t(x_1,x_2)x_1\p_1\Psi_{N,t}\|_2
    +
    \|x_1\p_1V_t(x_1,x_2)\Psi_{N,t}\|_2
    \le C
\end{equation}
for all $t\ge0$ and all $N$. 

Therefore, from \eqref{eq:nuN-interaction-term},
\eqref{eq:nuN-basic-interaction-bound} and \eqref{eq:Vtx1est}, we get
\begin{equation}
    \left|
    \langle \Psi_{N,t},
        \i[H_N-h^{\MF}_1,\q_1]x_1^2\q_1\Psi_{N,t}
    \rangle
    +
    \langle \Psi_{N,t},
        \i\q_1x_1^2[H_N-h^{\MF}_1,\q_1]\Psi_{N,t}
    \rangle
    \right|
    \le
    C\sqrt{\nu_N(t)} .
    \label{eq:nuN-interaction-final}
\end{equation}

Combining \eqref{eq:nuN-derivative-decomposition},
\eqref{eq:nuN-kinetic-term}, and \eqref{eq:nuN-interaction-final}, we find
\[
    \frac{\d}{\d t}\nu_N(t)
    \le
    2\sqrt{\nu_N(t)}\,\sqrt{\alpha_N(t)}
    +
    C\sqrt{\nu_N(t)}.
\]
By the uniform-in-time mean-field estimate, see Proposition \ref{lemma: bound alpha}, it holds
\[
    \alpha_N(t)\le \frac{C}{N},
\]
and hence
\[
    \frac{\d}{\d t}\nu_N(t)
    \le
    C\sqrt{\nu_N(t)}
    \left(N^{-1/2}+1\right)\le C\sqrt{\nu_N(t)} .
\]
Thus, for every $t\ge0$ with $\nu_N(t)>0$,
\[
    \frac{\d}{\d t}\sqrt{\nu_N(t)}
    =
    \frac{1}{2\sqrt{\nu_N(t)}}\, \frac{\d}{\d t}\nu_N(t)
    \le
   C.
\]
Since the initial datum is factorized, i.e., $\q_1\Psi_{N,0}=0$, we have $\nu_N(0)=0$ and therefore  Gr\"onwall's inequality implies
\begin{equation}
    \sqrt{\nu_N(t)}\le Ct.
\end{equation}
Taking the square proves \eqref{eq:bound-nuN}.
\end{proof}

In order to complete the proof of Theorem \ref{thm:intro-many-body-minvel}\eqref{thm:minimal-nu-bound}, it remains to show \eqref{eq:Vtx1est}. This is done in the following lemma.

\begin{lemma}[Interaction estimate]\label{lem:interaction-weighted-q}
Assuming that Hypothesis \ref{hyp} and the bounds in~\eqref{eq:weighted1bodydecay} hold, there exists $C>0$, independent of $N$ and $t$, such that
\[
    \|V_t(x_1,x_2)x_1\p_1\Psi_{N,t}\|_2
    +
    \|x_1\p_1V_t(x_1,x_2)\Psi_{N,t}\|_2
    \le C.
\]
\end{lemma}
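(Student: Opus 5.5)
The plan is to use that $\p_1=|u_t\rangle\langle u_t|$ acts on the first variable as a rank-one operator. Then each of the two terms factorizes into a one-body quantity involving $u_t$ or $xu_t$ and a partial inner product of $\Psi_{N,t}$, which has norm at most one. The key inputs are the uniform bound $\sup_{x_2}\|V_t(\cdot,x_2)\|_2\le 2\|w\|_2$ from \eqref{comp: sup V x_2}, together with the two weighted bounds in \eqref{eq:weighted1bodydecay}. No Gr\"onwall argument or mean-field input is needed, and in particular no use of $\alpha_N(t)$.

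\textbf{First term.} Write $x_1\p_1\Psi_{N,t}=(xu_t)(x_1)\,\phi_t(x_2,\dots,x_N)$, where
\[
\phi_t(x_2,\dots,x_N):=\int_{\bR^3}\overline{u_t(y)}\,\Psi_{N,t}(y,x_2,\dots,x_N)\,\d y .
\]
By Cauchy--Schwarz, $\|\phi_t\|_{L^2(\bR^{3(N-1)})}\le\|u_t\|_2\|\Psi_{N,t}\|_2=1$. Then
\[
\|V_t(x_1,x_2)x_1\p_1\Psi_{N,t}\|_2^2=\int|\phi_t(x_2,Z')|^2\Big(\int_{\bR^3}|V_t(x_1,x_2)|^2|x_1u_t(x_1)|^2\,\d x_1\Big)\d x_2\,\d Z' .
\]
We bound the inner integral by $\|xu_t\|_\infty^2\sup_{x_2}\|V_t(\cdot,x_2)\|_2^2\le 4\|w\|_2^2\|xu_t\|_\infty^2$. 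This is uniformly bounded in $t$ by the first bound in \eqref{eq:weighted1bodydecay}, exactly as in \eqref{eq:estimate-p-Vt2-p} but with $u_t$ replaced by $xu_t$.

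\textbf{Second term.} Here $x_1\p_1V_t\Psi_{N,t}=(xu_t)(x_1)\otimes\big(\langle u_t|_1V_t\Psi_{N,t}\big)$, so
\[
\|x_1\p_1V_t\Psi_{N,t}\|_2=\|xu_t\|_2\,\big\|\langle u_t|_1V_t\Psi_{N,t}\big\|_2\le\|xu_t\|_2\,\|\p_1V_t\|_{\op}.
\]
We use $\|\p_1V_t\|_{\op}=\|V_t\p_1\|_{\op}=\|\p_1V_t^2\p_1\|_{\op}^{1/2}$, which by \eqref{eq:estimate-p-Vt2-p} is at most $2\|u_t\|_\infty\|w\|_2$. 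Hence this term is bounded by $2\|w\|_2\,\|xu_t\|_2\|u_t\|_\infty$, which is uniformly bounded by the second bound in \eqref{eq:weighted1bodydecay}.

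The only delicate point is the second term. Here one must not place the weight $x$ on $\Psi_{N,t}$, since $x_1\Psi_{N,t}$ is not controlled uniformly in $N$ and $t$. Instead, the product structure of $\p_1$ moves the weight onto $u_t$ and pairs it with the $L^\infty$ decay of $u_t$; this is precisely why the mixed hypothesis $\|xu_t\|_2\|u_t\|_\infty\lesssim1$ is assumed. One also checks that $V_t(\cdot,x_2)$ is real, so taking adjoints in $\|\p_1V_t\|_{\op}$ is legitimate. Summing the two bounds gives the claim with $C$ depending only on $\|w\|_2$ and the constants in \eqref{eq:weighted1bodydecay}.
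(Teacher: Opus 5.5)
Your proof is correct. For the first term it coincides with the paper's argument: factor out $\p_1$, then bound the $x_1$-integral by $\|xu_t\|_\infty^2\sup_{x_2}\|V_t(\cdot,x_2)\|_2^2\le 4\|w\|_2^2\|xu_t\|_\infty^2$.

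For the second term you use the same reduction as the paper, $\|x_1\p_1V_t\Psi_{N,t}\|_2=\|xu_t\|_2\,\|\p_1V_t\Psi_{N,t}\|_2$, but you bound the last factor differently. You write $\|\p_1V_t\|_{\op}=\|\p_1V_t^2\p_1\|_{\op}^{1/2}\le 2\|w\|_2\|u_t\|_\infty$ via \eqref{eq:estimate-p-Vt2-p}. The paper instead asserts $\|\p_1V_t\Psi_{N,t}\|_2^2\le\|u_t\|_\infty^2\int|V_t|^2|\Psi_{N,t}|^2$ and then uses $\sup|V_t|\le 2\|w\|_\infty$.

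That intermediate inequality amounts to $|\langle u_t,g\rangle|\le\|u_t\|_\infty\|g\|_2$ in the $x_1$ variable. This fails on $\bR^3$: take $g=u_t$ with $\|u_t\|_\infty<1$. Cauchy--Schwarz in $x_1$ only yields the factor $\|u_t\|_2=1$. To extract $\|u_t\|_\infty$ you must put $V_t(\cdot,x_2)$ in $L^2$, i.e.\ pay $\|w\|_2$ rather than $\|w\|_\infty$.

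So your operator-norm route is the sound version of this step. It gives the bound $2\|w\|_2\|xu_t\|_2\|u_t\|_\infty$, which is still uniform in $t$ and $N$ by \eqref{eq:weighted1bodydecay} since $w\in L^1\cap L^\infty\subset L^2$. The lemma's statement is therefore unaffected.
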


\begin{proof}

For the first term, we have
\[
\begin{aligned}
    \|V_t(x_1,x_2)x_1\p_1\Psi_{N,t}\|_2^2
    &=
    \langle\Psi_{N,t},
        \p_1 x_1 V_t(x_1,x_2)^2 x_1\p_1
        \Psi_{N,t}
    \rangle                                      \\
    &\le
    \|\p_1 x_1 V_t(x_1,x_2)^2 x_1\p_1\|_{\op}.
\end{aligned}
\]
Moreover, we set
\[
    \p_1 x_1 V_t(x_1,x_2)^2 x_1\p_1
    =
    h_t(x_2)\p_1,
\]
where
\[
    h_t(x_2)
    =
    \int_{\mathbb R^3}
        |x_1u_t(x_1)|^2
        |V_t(x_1,x_2)|^2
    \,\dd x_1 .
\]
Therefore we have
\[
    \|h_t\|_\infty
    \le
    \|xu_t\|_\infty^2
    \sup_{x_2}\|V_t(\cdot,x_2)\|_2^2
    \le 4
    \|w\|_2^2\|xu_t\|_\infty^2 ,
\]
where we used \eqref{comp: sup V x_2}.
Hence
\[
    \|V_t(x_1,x_2)x_1\p_1\Psi_{N,t}\|_2
    \le 2
    \|w\|_2\|xu_t\|_\infty .
\]

For the second term, using $\p_1x_1^2\p_1=\|xu_t\|_2^2\p_1$, we obtain
\[
\begin{aligned}
    \|x_1\p_1V_t(x_1,x_2)\Psi_{N,t}\|_2^2
    &=
    \langle\Psi_{N,t},
        V_t(x_1,x_2)\p_1x_1^2\p_1V_t(x_1,x_2)
        \Psi_{N,t}
    \rangle                                      \\
    &=
    \|xu_t\|_2^2
    \|\p_1V_t(x_1,x_2)\Psi_{N,t}\|_2^2 .
\end{aligned}
\]
Furthermore,
\[
\begin{aligned}
    \|\p_1V_t(x_1,x_2)\Psi_{N,t}\|_2^2
    &\le
    \|u_t\|_\infty^2
    \int |V_t(x_1,x_2)|^2|\Psi_{N,t}(x_1,\dots,x_N)|^2
    \,\dd x_1\cdots\dd x_N                         \\
    &\le
    \|u_t\|_\infty^2
    \sup_{x_1,x_2}|V_t(x_1,x_2)|^2
    \|\Psi_{N,t}\|_2^2                              \\
    &\le 4
    \|w\|_\infty^2\|u_t\|_\infty^2 .
\end{aligned}
\]
Thus
\[
    \|x_1\p_1V_t(x_1,x_2)\Psi_{N,t}\|_2
    \le 2
    \|w\|_\infty\|xu_t\|_2\|u_t\|_\infty .
\]

By Hypothesis \ref{hyp} on the Hartree solution, it holds 
\[
    \|xu_t\|_\infty\lesssim 1,
    \qquad
    \|xu_t\|_2\|u_t\|_\infty\lesssim 1.
\]
Consequently,
\[
    \|V_t(x_1,x_2)x_1\p_1\Psi_{N,t}\|_2
    +
    \|x_1\p_1V_t(x_1,x_2)\Psi_{N,t}\|_2
    \le C,
\]
where $C$ is independent of $N$ and $t$.
\end{proof}

\subsection{Proof of Theorem \ref{thm:intro-many-body-minvel}\eqref{thm:minimal-scattering}}
\label{subsec:scattering}

Theorem \ref{thm:intro-many-body-minvel}\eqref{thm:minimal-scattering} combines the one-body scattering theory for the boson star equation with the many-body minimal velocity estimate proved in Theorem \ref{thm:intro-many-body-minvel}\eqref{thm:minimal-trace-bound}. The basic observation is that the asymptotic velocity operator $\Theta$ commutes with the free Hamiltonian, so that the outgoing velocity distribution is conserved by the free dynamics. The convergence of the reduced density matrix then transfers this information to the many-body evolution, while the minimal velocity estimate identifies the asymptotic velocity with the rescaled position observable.

\begin{proof}[Proof of Theorem \ref{thm:intro-many-body-minvel}\eqref{thm:minimal-scattering}]

Let $\gamma_{N,t}^{(1)}$ denote the reduced one-particle density matrix associated to $\Psi_{N,t}$.
By Theorem~\ref{thm:intro-many-body-mf}, there exists a constant $C>0$ such that
\begin{equation}\label{eq:trace-convergence-proof}
    \Tr \left|
        \gamma_{N,t}^{(1)}
        -
        |u_t\rangle\langle u_t|
    \right|
    \le
    \frac{C}{\sqrt N},
\end{equation}
uniformly in time.

 The scattering assumption implies
\[
\lim_{t\to\infty}\left\|
\e^{\i t\sqrt{-\Delta + 1}}u_t -u_+\right\|_2=0
\]
and since $\Theta$ commutes with $\sqrt{-\Delta + 1}$ we have
\[
\lim_{t\to\infty}
\left\|
    \Id_{\{\Theta^2\ge v^2+\varepsilon\}}
    u_t
\right\|_2
=
\left\|
    \Id_{\{\Theta^2\ge v^2+\varepsilon\}}
    u_+
\right\|_2
=1.
\]
Equivalently,
\[
\lim_{t\to\infty}
\left\|
    \Id_{\{\Theta^2<v^2+\varepsilon\}}
    u_t
\right\|_2
=0.
\]
Hence,
\begin{equation}\label{eq:velocity-tail}
    \limsup_{t\to\infty}
    \Tr\!\left(
        \Id_{\{\Theta^2<v^2+\varepsilon\}}
        \gamma_{N,t}^{(1)}
    \right)
    \le
    \frac{C}{\sqrt N},
\end{equation}
by \eqref{eq:trace-convergence-proof}.

Choose $f,g\in C_c^\infty(\bR)$ such that $0\le f \le 1, 0 \le g \le 1$,
\[
f(s)=1
\quad\text{for }{s\in[0,v^2]},
\qquad
g(s)=1
\quad\text{for }{s\in[ v^2+\varepsilon,1]},
\]
and
\[
\supp f\cap\supp g=\emptyset.
\]
Then
\[
\Id_{\{|x_1|\le vt\}}
\le
f\!\left(\frac{|x_1|^2}{t^2}\right),
\]
and therefore
\begin{align}
\left\|
    \Id_{\{|x_1|\le vt\}}
    \Psi_{N,t}
\right\|_2^2
\le
\Tr\!\left(
    f^2\!\left(\frac{|x|^2}{t^2}\right)
    \gamma_{N,t}^{(1)}
\right)
&=
\Tr\!\left(
    g(\Theta^2)
    f^2\!\left(\frac{|x|^2}{t^2}\right)
    g(\Theta^2)
    \gamma_{N,t}^{(1)}
\right)
\nonumber\\
&\quad+
\Tr\!\left(
    (\Id-g(\Theta^2))
    f^2\!\left(\frac{|x|^2}{t^2}\right)
    g(\Theta^2)
    \gamma_{N,t}^{(1)}
\right)
\nonumber\\
&\quad+
\Tr\!\left(
    f^2\!\left(\frac{|x|^2}{t^2}\right)
    (\Id-g(\Theta^2))
    \gamma_{N,t}^{(1)}
\right).
\label{eq:decomposition-proof}
\end{align}
The first term can be bounded by Theorem \ref{thm:intro-many-body-minvel}\eqref{thm:minimal-trace-bound}.
For the remaining terms, note that
\[
\Id-g(\Theta^2)
\le
\Id_{\{\Theta^2<v^2+\varepsilon\}}.
\]
Since
\(
0\le f^2\le1
\)
and
\(
0\le g\le1,
\)
both terms are bounded by
\[
\Tr\!\left(
    \Id_{\{\Theta^2<v^2+\varepsilon\}}
    \gamma_{N,t}^{(1)}
\right).
\]
Hence we have 
\begin{align*}
\left\|
    \Id_{\{|x_1|\le vt\}}
    \Psi_{N,t}
\right\|_2^2 \le  & C\left(\frac{1}{t^2}
        +\frac{1}{t} 
        +
        \frac{ t^2}{N}
    \right) +2\Tr\!\left(
    \Id_{\{\Theta^2<v^2+\varepsilon\}}
    \gamma_{N,t}^{(1)}
\right).
\end{align*}
Applying \eqref{eq:velocity-tail}, we have that for all $\delta>0$ there exists $t_1(\delta)>0$ such that for all $t>t_1(\delta)$ it holds 
\begin{equation*}
    2\Tr\!\left(
    \Id_{\{\Theta^2<v^2+\varepsilon\}}
    \gamma_{N,t}^{(1)}
\right) \le 2  \limsup_{t\to\infty}
    \Tr\!\left(
        \Id_{\{\Theta^2<v^2+\varepsilon\}}
        \gamma_{N,t}^{(1)}
    \right) + \delta
    \le
    \frac{C}{\sqrt N} + \delta,
\end{equation*}
which yields 
\begin{align*}
    \left\|
    \Id_{\{|x_1|\le vt\}}
    \Psi_{N,t}
\right\|_2^2 \le  & C\left(\frac{1}{t^2}
        +\frac{1}{t} 
        +
        \frac{ t^2}{N} 
    \right) + \frac{C t}{\sqrt{N}} + \delta
\end{align*}
for all $t>t_1(\delta)$.
Since we assume $t=o(N^{1/2})$ as $t, N\to \infty$, there exist $t_2(\delta), N(\delta)>0$ such that for all  $t>t_2(\delta)$ and $N>N(\delta)$ all the remaining terms in the right-hand side are smaller than $\delta$. We conclude that for all $\delta>0$ there exist $T(\delta), N(\delta)>0$ such that 
\begin{equation*}
      \left\|
    \Id_{\{|x_1|\le vt\}}
    \Psi_{N,t}
\right\|_2^2 \le 5\, \delta \qquad \forall\, t>T(\delta), \ N>N(\delta).
\end{equation*}
Consequently,
\[
\lim_{
\substack{
t\to\infty,\\
N\to\infty,\\
t^2/N\to0
}}
\left\|
    \Id_{\{|x_1|\le vt\}}
    \Psi_{N,t}
\right\|_2
=0.
\]
Since
\[
\left\|
    \Id_{\{|x_1|\ge vt\}}
    \Psi_{N,t}
\right\|_2^2
=
1-
\left\|
    \Id_{\{|x_1|\le vt\}}
    \Psi_{N,t}
\right\|_2^2,
\]
the desired conclusion follows.
\end{proof}

\vspace{3em}

\section{Propagation Estimates for the One-body Boson Star Equation}\label{sec:BSE-prop-est}

In this section, we transfer the many-body propagation estimates to the effective boson star equation. The basic idea is simple: we apply the many-body estimates to factorized initial data $\Psi_{N,0}=u_0^{\otimes N}$ and then use the convergence of the one-particle reduced density matrix
\[
    \gamma^{(1)}_{N,t}
    \longrightarrow
    |u_t\rangle\langle u_t|,
\]
which holds in trace norm.

\subsection{Transfer from Many-Body Estimates to the one-body boson star equation}\label{subsec:Nbd21bd}

Let $u_t$ be the solution to the boson star equation \eqref{eq:BSE}, and let $\Psi_{N,t}$ be the solution to the~$N$-body equation \eqref{eq:NbdQSE} with factorized initial datum $\Psi_{N,0}=u_0^{\otimes N}$. Assuming Hypothesis \ref{hyp}, we can apply Theorem \ref{thm:intro-many-body-mf} to get
the mean-field convergence estimate
\begin{equation}\label{eq:transfer-mf-convergence}
    \Tr\left|
        \gamma^{(1)}_{N,t}
        -
        |u_t\rangle\langle u_t|
    \right|
    \leq
    C N^{-1/2}
\end{equation}
uniformly in time.

The following elementary observation will be used repeatedly. If $B$ is a bounded one-particle operator, then
\begin{equation}
    \left|
    \Tr\left[
        B\left(
            \gamma^{(1)}_{N,t}
            -
            |u_t\rangle\langle u_t|
        \right)
    \right]
    \right|
    \leq
    \|B\|_{\op}
    \Tr\left|
        \gamma^{(1)}_{N,t}
        -
        |u_t\rangle\langle u_t|
    \right|
    \leq
    C\|B\|_{\op}N^{-1/2}.
    \label{eq:transfer-bounded-observable}
\end{equation}
In particular, if a many-body estimate gives a bound for $\Tr (B\gamma^{(1)}_{N,t}) $,
then the same bound holds for 
\[
    \Tr ( B|u_t\rangle\langle u_t|  )
    =
    \langle u_t,Bu_t\rangle
\]
after taking $N\to\infty$, provided that $B$ is bounded.

\subsection{Proof of Theorem \ref{thm:intro-BSE-maxvel} for the maximal velocity}\label{subsec:1bd-max-vel}

We first transfer the many-body maximal velocity estimate. 
\begin{proof}[Proof of Theorem \ref{thm:intro-BSE-maxvel}]
Let $X,Y\subset\bR^3$ be convex sets with $\dist(X,Y)>0$. Assume that the initial datum $u_0$ is normalized and satisfies
\[
    \supp u_0\subset X.
\]
Then the factorized initial datum satisfies
\[
    \supp\Psi_{N,0}\subset X^N.
\]
Applying the many-body maximal velocity estimate, see Theorem \ref{thm:maxvel}, gives
\begin{equation}\label{eq:many-body-maxvel-transfer}
    \Tr\big(\Id_Y\gamma^{(1)}_{N,t}\big)
    \leq
    \exp\big(2(t-\dist(X,Y))\big).
\end{equation}
On the other hand,
\begin{equation*}
    \int_Y |u_t(x)|^2\,\d x
    =
    \Tr\big(\Id_Y |u_t\rangle\langle u_t|\big)
    \leq
    \Tr\big(\Id_Y \gamma^{(1)}_{N,t}\big)
    +
    \Tr\left|
        \gamma^{(1)}_{N,t}
        -
        |u_t\rangle\langle u_t|
    \right|.
\end{equation*}
Using \eqref{eq:transfer-mf-convergence} and \eqref{eq:many-body-maxvel-transfer}, we obtain
\[
    \int_Y |u_t(x)|^2\,\d x
    \leq
    \exp\big(2(t-\dist(X,Y))\big)
    +
    C N^{-1/2}.
\]
Letting $N\to\infty$ yields the maximal velocity estimate for the boson star equation:
\begin{equation}\label{eq:BSE-maxvel}
    \int_Y |u_t(x)|^2\,\d x
    \leq
    \exp\big(2(t-\dist(X,Y))\big).
\end{equation}
\end{proof}

\subsection{Proof of Theorem \ref{thm:intro-BSE-minvel} for the minimal velocity}\label{subsec:1bd-min-vel}

We now transfer the many-body minimal velocity estimate. 
\begin{proof}[Proof of Theorem \ref{thm:intro-BSE-minvel}]
Let $f,g\in C_c^\infty(\bR)$ satisfy
\[
    \supp f\cap\supp g=\emptyset.
\]
The many-body minimal velocity estimate, see Theorem \ref{thm:intro-many-body-minvel}\eqref{thm:minimal-trace-bound}, implies that for every time 
\[
    t\to\infty,
    \qquad
    t=o(N^{1/2}),
\]
one has
\begin{equation}\label{eq:many-body-minvel-transfer}
    \Tr\left[
        g(\Theta^2)
        f^2\left(\frac{|x|^2}{t^2}\right)
        g(\Theta^2)
        \gamma^{(1)}_{N,t}
    \right]
    \longrightarrow 0
\end{equation}
as $N\to\infty$.

We define the bounded nonnegative one-particle observable
\[
    B_t
    :=
    g(\Theta^2)
    f^2\left(\frac{|x|^2}{t^2}\right)
    g(\Theta^2).
\]
Since $f,g\in C_c^\infty(\bR)$, both factors are bounded operators on $L^2(\bR^3)$, and therefore
\[
    \|B_t\|_{\op}
    \leq
    \|f\|_\infty^2\|g\|_\infty^2.
\]
By \eqref{eq:transfer-bounded-observable},
\begin{align}
    \left|
    \Tr\left[
        B_t\left(
            \gamma^{(1)}_{N,t}
            -
            |u_t\rangle\langle u_t|
        \right)
    \right]
    \right|
    &\leq
    C\|f\|_\infty^2\|g\|_\infty^2 N^{-1/2}.
    \label{eq:minvel-transfer-error}
\end{align}

Now choose sequences $N$ and $t$ such that
\[
    N,t\to\infty,
    \qquad
    t=o(N^{1/2}).
\]
Applying the many-body minimal velocity estimate \eqref{eq:many-body-minvel-transfer} with \eqref{eq:minvel-transfer-error}, we obtain
\[
    \Tr\left[
        g(\Theta^2)
        f^2\left(\frac{|x|^2}{t^2}\right)
        g(\Theta^2)
        |u_t\rangle\langle u_t|
    \right]
    \longrightarrow 0
\]
as $N,\,t\to\infty$, i.e., 
\begin{equation}\label{eq:BSE-minvel}
    \lim_{t \to \infty}\left\|
        f\left(\frac{|x|^2}{t^2}\right)
        g(\Theta^2)u_t
    \right\|_2^2
    =0.
\end{equation}
\end{proof}

\medskip

\subsection{Proof of Corollary~\ref{cor:many-body-minvel-improved} for the updated many-body minimal velocity}\label{subsec:Nbd-min-vel}

We now provide the proof of improved the many-body minimal velocity estimate. 

\begin{proof}[Proof of Corollary~\ref{cor:many-body-minvel-improved}]
We first fix $t>0$. Letting $N\to\infty$ in Theorem~\ref{thm:intro-many-body-minvel}\eqref{thm:minimal-trace-bound} and using \eqref{eq:transfer-bounded-observable}, we obtain
\begin{equation}\label{eq:one-body-minvel-quantitative}
    \left\|f\left(\frac{|x|^2}{t^2}\right)g(\Theta^2)u_t\right\|_2^2\le C\left(\frac{1}{t^2}+\frac{1}{t}\right).
\end{equation}
The constant $C>0$ is independent of $t$. We insert the decomposition $\p_1+\q_1=\Id$. Since $\|\p_1\Psi_{N,t}\|_2\le1$ and $\|\q_1\Psi_{N,t}\|_2^2=\alpha_N(t)$, by Cauchy--Schwarz we obtain
\begin{align*}
    \left\|f\left(\frac{|x_1|^2}{t^2}\right)g(\Theta_1^2)\Psi_{N,t}\right\|_2^2
    &\le 2\left\|f\left(\frac{|x_1|^2}{t^2}\right)g(\Theta_1^2)\p_1\Psi_{N,t}\right\|_2^2+2\left\|f\left(\frac{|x_1|^2}{t^2}\right)g(\Theta_1^2)\q_1\Psi_{N,t}\right\|_2^2\\
    &\le 2\left\|f\left(\frac{|x|^2}{t^2}\right)g(\Theta^2)u_t\right\|_2^2+2\|f\|_\infty^2\|g\|_\infty^2\alpha_N(t)\\
    &\le C\left(\frac{1}{t^2}+\frac{1}{t}+\frac{1}{N}\right),
\end{align*}
where the last inequality follows from \eqref{eq:one-body-minvel-quantitative} and \eqref{eq:alphaNest}. By the definition of $\gamma^{(1)}_{N,t}$, this proves \eqref{eq:many-body-minvel-improved}.

We next prove \eqref{eq:many-body-scattering-improved}. We choose $f,g\in C_c^\infty(\bR)$ with $0\le f,g\le1$, $f=1$ on $[0,v^2]$, $g=1$ on $[v^2+\varepsilon,1]$ and disjoint supports. Since $\Theta^2$ commutes with $\sqrt{-\Delta+1}$, by \eqref{eq:outgoing-assumption} we have
\[
    (\Id-g(\Theta^2))\e^{-\i t\sqrt{-\Delta+1}}u_+=0.
\]
Using $\p_1+\q_1=\Id$ and \eqref{eq:alphaNest} as above, we obtain
\begin{align*}
    \left\|(\Id-g(\Theta_1^2))\Psi_{N,t}\right\|_2^2
    &\le 2\left\|(\Id-g(\Theta^2))u_t\right\|_2^2+2\alpha_N(t)\\
    &\le 2\left\|u_t-\e^{-\i t\sqrt{-\Delta+1}}u_+\right\|_2^2+\frac{C}{N}.
\end{align*}
Since $f=1$ on $[0,v^2]$, by Cauchy--Schwarz and \eqref{eq:many-body-minvel-improved}, we obtain
\begin{align*}
    \left\|\Id_{\{|x_1|\le vt\}}\Psi_{N,t}\right\|_2^2
    &\le 2\left\|f\left(\frac{|x_1|^2}{t^2}\right)g(\Theta_1^2)\Psi_{N,t}\right\|_2^2+2\left\|(\Id-g(\Theta_1^2))\Psi_{N,t}\right\|_2^2\\
    &\le C\left(\frac{1}{t^2}+\frac{1}{t}+\frac{1}{N}\right)+4\left\|u_t-\e^{-\i t\sqrt{-\Delta+1}}u_+\right\|_2^2.
\end{align*}
By \eqref{eq:hyp-scattering}, the right-hand side tends to zero as $\min\{t,N\}\to\infty$. This proves \eqref{eq:many-body-scattering-improved}.
\end{proof}

\appendix

\section{Alternative Proof of One-Body Minimal Velocity}
\label{sec:one-body-minvel-direct}

In this appendix, we give a proof of a one-body minimal velocity estimate. The result we state is a particular case of \cite[Proposition 8.7]{BFG2025}, but the strategy of proof is different. Our proof follows the lines of the one of the moment bound in Section \ref{subsec:MomentBound}.
One could prove a more general result for $w\in L^1(\mathbb{R}^3)+L^q(\mathbb{R}^3)$, like in \cite[Proposition 8.7]{BFG2025}, but we refrain from doing so for the sake of clarity.
\begin{proposition}Assume Hypothesis~\ref{hyp} holds. Let
\begin{equation}\label{eq:moment-bound-one-body}
    A(t):=x-t\Theta
\quad
\text{and}
\quad
    \mathcal{M}(t)
    :=
    \langle u_t, |A(t)|^2 u_t\rangle = t^2 \left\|
        \left(\frac{x}{t}-\Theta\right)u_t
    \right\|_2^2.
\end{equation}
Then $\mathcal{M}(t)$ is bounded on $[0,\infty)$.
\end{proposition}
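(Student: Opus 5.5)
The plan is to repeat the Grönwall argument for $\mathcal{M}_N(t)$ from Subsection~\ref{subsec:MomentBound}, now at the one-body level. Here the mean-field potential is quadratic in $u_t$, so its commutator with $\Theta$ decays like $\langle t\rangle^{-3}$ and not only like $\langle t\rangle^{-3/2}$. That is what makes the time integral converge. I will take $\mathcal{M}(0)=\|xu_0\|_2^2<\infty$, as in Theorem~\ref{thm:intro-many-body-minvel}; this is implicitly needed for the statement to make sense at $t=0$.

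\textbf{Step 1 (derivative identity).} Since $A(t)=x-t\Theta$ solves the free Heisenberg equation $\partial_t A(t)+\i[\sqrt{-\Delta+1},A(t)]=0$, only the potential $U_t:=w*|u_t|^2$ contributes to the derivative. This gives
\[
\frac{\d}{\d t}\mathcal{M}(t)=2\Re\big\langle A(t)u_t,\i[U_t,A(t)]u_t\big\rangle .
\]
Since $U_t$ is a multiplication operator, $[U_t,x]=0$, hence $[U_t,A(t)]=-t[U_t,\Theta]$. Therefore
\[
\frac{\d}{\d t}\mathcal{M}(t)=-2t\,\Re\big\langle A(t)u_t,\i[U_t,\Theta]u_t\big\rangle .
\]

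\textbf{Step 2 (decay of the commutator).} From $\|\Theta\|_{\op}\le1$ and Young's inequality I get
\[
\|[U_t,\Theta]\|_{\op}\le 2\|U_t\|_\infty\le 2\|w\|_1\|u_t\|_\infty^2\le C\langle t\rangle^{-3}
\]
by \eqref{eq:decay}. Cauchy--Schwarz then yields $\frac{\d}{\d t}\mathcal{M}(t)\le Ct\langle t\rangle^{-3}\sqrt{\mathcal{M}(t)}$. Consequently $\frac{\d}{\d t}\sqrt{\mathcal{M}(t)+\varepsilon}\le Ct\langle t\rangle^{-3}$ for every $\varepsilon>0$. Integrating and letting $\varepsilon\to0$ gives
\[
\sqrt{\mathcal{M}(t)}\le\|xu_0\|_2+C\int_0^\infty \tau\langle\tau\rangle^{-3}\,\d\tau<\infty,
\]
uniformly in $t\ge0$.

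\textbf{Main obstacle (rigor of Step 1).} The main difficulty is justifying the differentiation, since $u_t$ is only known to lie in $H^{1/2}\cap L^\infty$ and it is not a priori clear that $xu_t\in L^2$ for $t>0$. My first attempt will replace $x$ by a bounded regularization $x_\delta:=x(1+\delta|x|^2)^{-1/2}$ and work with $A_\delta(t)=x_\delta-t\Theta$. Then $\mathcal{M}_\delta(t)$ is finite and differentiable in the weak sense, using $u\in C^1([0,\infty),H^{-1/2})$, and $\i[\sqrt{-\Delta+1},x_\delta]$ is bounded uniformly in $\delta$, as follows by the same functional-calculus arguments as in \cite{BFG2025}.

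The free Heisenberg equation now holds only up to an error $\i[\sqrt{-\Delta+1},x_\delta]-\Theta$. This error is bounded but does not vanish, and it must be controlled. Relative to $\sqrt{\mathcal{M}_\delta}$ it contributes at most $C\sqrt{\mathcal{M}_\delta}$, which would ruin uniformity in time, so it has to be shown to vanish as $\delta\to0$ along $u_t$ by dominated convergence. If this becomes delicate, a fallback is to prove the bound first for smooth, rapidly decaying data, where $xu_t\in L^2$ can be propagated, and then pass to the limit by continuity of the Hartree flow together with Fatou's lemma.
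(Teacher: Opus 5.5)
Your proposal is correct and follows the paper's proof essentially verbatim: the free Heisenberg equation reduces $\frac{\d}{\d t}\mathcal{M}$ to $t$ times the commutator $[w*|u_t|^2,\Theta]$, which Young's inequality and \eqref{eq:decay} bound by $C\langle t\rangle^{-3}$, so $\sqrt{\mathcal{M}}$ grows by at most an integrable amount. Your remark that $\|xu_0\|_2<\infty$ is needed is apt, since the paper's conclusion $Y(t)\le Y(0)+C$ uses it without listing it among the hypotheses, and your regularization via $x_\delta$ supplies rigor that the paper's formal computation skips.
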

The operator $A(t)$ and the corresponding moment $\mathcal M(t)$ are the one-body analogue of~\eqref{def:many-body-pc-observable} and~
\eqref{def:many-body-pc-moment}.
\begin{proof}We use the notations
\[
    H(t)=\sqrt{-\Delta + 1}+V_{\mathrm{eff}}(t,x),
    \qquad
    V_{\mathrm{eff}}(t,x):=(w*|u_t|^2)(x)
\]
and recall that
\[
    \Theta=\frac{-\i\nabla}{\sqrt{-\Delta + 1}}.
\]
We first compute the Heisenberg derivative of $A(t)$. Since
\[
    \i[\sqrt{-\Delta + 1},x]=\Theta,
    \qquad
    [\sqrt{-\Delta + 1},\Theta]=0,
\]
we have
\[
    \partial_t A(t)+\i[\sqrt{-\Delta + 1},A(t)]=0.
\]
Moreover, $V_{\mathrm{eff}}(t,x)$ commutes with $x$. Hence
\[
    \partial_t A(t)+\i[H(t),A(t)]
    =
    -\i t [V_{\mathrm{eff}}(t),\Theta].
\]
Therefore we get
\[
\begin{aligned}
    \frac{\dd}{\dd t}\mathcal M(t)
    &=
    \left\langle
        \left(
            \partial_t A(t)+\i[H(t),A(t)]
        \right)u_t,
        A(t)u_t
    \right\rangle
    \\
    &\quad+
    \left\langle
        A(t)u_t,
        \left(
            \partial_t A(t)+\i[H(t),A(t)]
        \right)u_t
    \right\rangle
    \\
    &=
    \i t
    \left\langle
        [V_{\mathrm{eff}}(t),\Theta]u_t,
        A(t)u_t
    \right\rangle
    -
    \i t
    \left\langle
        A(t)u_t,
        [V_{\mathrm{eff}}(t),\Theta]u_t
    \right\rangle 
\end{aligned}
\]
and 
\[
    \left|\frac{\dd}{\dd t}\mathcal M(t)\right|
    \leq
    2t \, \|[V_{\mathrm{eff}}(t),\Theta]u_t\|_2 \, \mathcal M(t)^{1/2}.
\]
Since $\|\Theta\|_{\mathrm{op}}$ is bounded by 1  and by using Young's inequality, we have,
\[
    \|[V_{\mathrm{eff}}(t),\Theta]u_t\|_2
    \leq
    \|w\|_{1} \, 
    \|u_t\|_2 \,
    \|u_t\|_\infty^2.
\]
Hence,
\[
    \left|\frac{\dd}{\dd t}\mathcal M(t)\right|
    \leq
    2 t \, \|w\|_{1} \, \|u_t\|_\infty^2 \, \mathcal M(t)^{1/2}.
\]
Using the decay estimate from \eqref{eq:decay},
\[
    \left|\frac{\dd}{\dd t}\mathcal M(t)\right|
    \leq
    C t\langle t \rangle^{-3} \mathcal M(t)^{1/2}
    .
\]
Setting $Y(t):=\mathcal M(t)^{1/2}$,
we obtain
\[
    \frac{\dd}{\dd t}Y(t)
    \leq
    C \langle t \rangle^{-2}.
\]
It thus follows that $Y(t)\le Y(0)+C$ and that $\mathcal{M}(t)$ is bounded.
\end{proof}

\section{Time decay of the solution to the one-body equation}
\label{app: t decay}

In this Appendix we prove that the first inequality in assumption \eqref{eq:weighted1bodydecay} from Theorem \ref{thm:intro-many-body-minvel}\eqref{thm:minimal-nu-bound} is satisfied for a suitable class of potential $w$ and initial data $u_0$. 

We recall that $\|f\|_{H^{s,1}}= \|\langle \nabla\rangle^s f\|_1$, with $\langle \nabla \rangle = \sqrt{-\Delta + 1}$.

\begin{lemma}\label{lemma:xu_t L infty}
    Let $w \in L^1(\bR^3)$ and $u_t$ the solution to \eqref{eq:BSE} with $u_0 \in H^{s,1}(\bR^3)$ and $\langle x \rangle u_0 \in H^s(\bR^3)$ with $s\ge 5/2$. If $\|w\|_1 \|u_0\|_{H^s\cap H^{s,1}}$ is sufficiently small, then 
    \begin{equation*}
    \|\langle x\rangle u_t\|_\infty \lesssim 1 , 
    \end{equation*}
    uniformly in $t\ge0$.
\end{lemma}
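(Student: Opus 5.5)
The plan is to write $u_t$ through Duhamel's formula,
\[
u_t = \e^{-\i t\langle\nabla\rangle}u_0 - \i\int_0^t \e^{-\i (t-s)\langle\nabla\rangle}\big((w*|u_s|^2)u_s\big)\,\d s ,
\]
and to bound $\|x u_t\|_\infty$ by estimating the linear and Duhamel pieces separately. The bound $\|u_t\|_\infty\lesssim 1$ already follows from Hypothesis \ref{hyp} (via \cite[Theorem 2.10]{BFG2025}), so only the weight $x$ needs work. The key algebraic identity is the commutation of $x$ with the free propagator:
\[
x\,\e^{-\i t\langle\nabla\rangle} = \e^{-\i t\langle\nabla\rangle}\,(x - t\Theta).
\]
Hence
\[
x\,\e^{-\i t\langle\nabla\rangle}u_0 = \e^{-\i t\langle\nabla\rangle}(xu_0) - t\,\e^{-\i t\langle\nabla\rangle}\Theta u_0 .
\]

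For the linear part I will use the semi-relativistic dispersive estimate
\[
\|\e^{-\i t\langle\nabla\rangle}f\|_\infty\lesssim \langle t\rangle^{-3/2}\|f\|_{H^{s,1}}, \qquad s\ge 5/2,
\]
which is the estimate behind \cite[Theorem 2.10]{BFG2025}. Since $\Theta$ is a bounded Fourier multiplier that is smooth of symbol order $0$, the term $t\,\e^{-\i t\langle\nabla\rangle}\Theta u_0$ is bounded by $t\langle t\rangle^{-3/2}\|u_0\|_{H^{s,1}}\lesssim 1$. For the term $\e^{-\i t\langle\nabla\rangle}(xu_0)$, the hypotheses $\langle x\rangle u_0\in H^s$ with $s>3/2$ give a uniform bound through Sobolev embedding together with unitarity in $H^s$. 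No decay is needed there.

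For the Duhamel part I apply the same identity under the integral, with $x-(t-s)\Theta$ acting on $F_s:=(w*|u_s|^2)u_s$. The term $(t-s)\Theta F_s$ is controlled by
\[
(t-s)\langle t-s\rangle^{-3/2}\|F_s\|_{H^{s,1}}.
\]
Using the a priori bounds of \cite{BFG2025}, namely $\|u_s\|_{H^s}\lesssim 1$ and $\|u_s\|_\infty\lesssim\langle s\rangle^{-3/2}$, together with a fractional Leibniz rule, I expect $\|F_s\|_{H^{s,1}}\lesssim \|w\|_1\langle s\rangle^{-3/2}$ or better. The convolution
\[
\int_0^t \langle t-s\rangle^{-1/2}\langle s\rangle^{-3/2}\,\d s
\]
is only $O(\langle t\rangle^{-1/2})$, so this contribution is bounded. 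The term $xF_s = (w*|u_s|^2)\,xu_s$ I will treat by a bootstrap on
\[
M(T):=\sup_{t\le T}\big(\|xu_t\|_\infty + \|xu_t\|_{H^s}\big),
\]
estimating it in $H^{s}$ (and in $H^{s,1}$ where decay is needed). The factor $w*|u_s|^2$ carries $\|w\|_1\|u_s\|_\infty^2\lesssim\|w\|_1\langle s\rangle^{-3}$, which is integrable. This should give $M(T)\le C + C\|w\|_1\|u_0\|\,M(T)$, and the smallness assumption closes the bootstrap.

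The main obstacle is controlling the weighted norm $\|xu_t\|_{H^s}$ uniformly in time. By the same identity, its linear part grows like $t\|u_0\|_{H^{s}}$, so it cannot be bounded directly. I would therefore run the bootstrap not on $xu_t$ but on the profile quantity $(x-t\Theta)u_t$, equivalently on $x$ applied to the profile $\e^{\i t\langle\nabla\rangle}u_t$, in $H^s$. This can be shown to be uniformly bounded in the spirit of the Appendix \ref{sec:one-body-minvel-direct} moment bound, now with $s$ derivatives. I would then recover the $L^\infty$ bound from the dispersive estimate plus the term $t\Theta u_t$, handled via decay as above. This is essentially the route of \cite[Lemma 8.1]{BFG2025}, which I would adapt to the sup norm.
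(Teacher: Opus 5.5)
Your argument is correct in its final form. It shares the paper's skeleton: Duhamel's formula, the conjugation identity for $x$, Sobolev embedding for the $xf$ piece, and the dispersive estimate for the $t\Theta f$ piece. (The identity's sign should be $\e^{\i t\langle\nabla\rangle}x\,\e^{-\i t\langle\nabla\rangle}=x+t\Theta$; this is harmless.) Together these ingredients are exactly the weighted bound $\|\langle x\rangle\e^{-\i t\langle\nabla\rangle}f\|_\infty\lesssim\langle t\rangle^{-1/2}\|f\|_{H^{s,1}}+\|\langle x\rangle f\|_{H^s}$ that the paper imports from \cite[Lemma B.7]{BFG2025}. Your bound $\|(w*|u_\tau|^2)u_\tau\|_{H^{s,1}}\lesssim\|w\|_1\langle\tau\rangle^{-3/2}$ and the convolution estimate also coincide with the paper's.

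The routes differ only in the weighted nonlinear term. The paper bounds it by $\|w\|_1(\langle\tau\rangle^{-3}\|\langle x\rangle u_\tau\|_{H^s}+\langle\tau\rangle^{-3/2}\|\langle x\rangle u_\tau\|_\infty)$ using a weighted Leibniz rule. It imports the linear growth $\|\langle x\rangle u_\tau\|_{H^s}\lesssim 1+\langle\tau\rangle$ from \cite[Lemma 8.2]{BFG2025} and closes a scalar Gr\"onwall inequality on $\|\langle x\rangle u_t\|_\infty$. So your first bootstrap failed only because it demanded a uniform bound on $\|xu_t\|_{H^s}$. Letting that component grow like $\langle t\rangle$ would have sufficed, since it enters only multiplied by $\|w*|u_\tau|^2\|_\infty\lesssim\langle\tau\rangle^{-3}$.

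Your replacement also works and is stronger, since it implies that growth bound: a uniform $H^s$ bound on $A(t)u_t=(x-t\Theta)u_t$, followed by $xu_t=A(t)u_t+t\Theta u_t$. It closes because, with $V_t=w*|u_t|^2$, one has $A(t)(V_tu_t)=V_tA(t)u_t-t[\Theta,V_t]u_t$. Kato--Ponce, together with $\|A(t)u_t\|_\infty\lesssim\|A(t)u_t\|_{H^s}$, then gives $\|V_tA(t)u_t\|_{H^s}\lesssim\|w\|_1\langle t\rangle^{-3/2}\|A(t)u_t\|_{H^s}$. It also gives $t\|[\Theta,V_t]u_t\|_{H^s}\lesssim t\langle t\rangle^{-3}$.

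That last bound, and your final step $t\|\Theta u_t\|_\infty\lesssim1$, both require $\|\Theta u_t\|_\infty\lesssim\langle t\rangle^{-3/2}$. This does not follow from the decay of $\|u_t\|_\infty$ alone, because $\Theta$ is not bounded on $L^\infty$. It must be derived from Duhamel's formula and the dispersive estimate, as you indicate. The paper's route is shorter given the cited lemmas. Yours is self-contained, uses only the unweighted Leibniz rule, and never feeds $\|xu_t\|_\infty$ back into the Gr\"onwall step.

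Two minor corrections. First, the final Gr\"onwall inequality has an integrable kernel, so no smallness is needed to close it. Smallness of $\|w\|_1\|u_0\|_{H^s\cap H^{s,1}}$ enters only through the a priori decay and $H^s$ bounds of \cite[Theorem 2.10]{BFG2025}. Second, $\Theta$ is not bounded on $L^1$, since its symbol behaves like a Riesz transform at high frequency. The bound $\|\e^{-\i t\langle\nabla\rangle}\Theta f\|_\infty\lesssim\langle t\rangle^{-3/2}\|f\|_{H^{s,1}}$ should therefore come from including $\xi/\langle\xi\rangle$ in the kernel estimate for $\e^{-\i t\langle\nabla\rangle}\langle\nabla\rangle^{-s}$, not from boundedness of $\Theta$ on $H^{s,1}$.
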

\begin{proof}
    We write the solution of equation \eqref{eq:BSE} via Duhamel formula 
    \begin{equation*}
        u_t = \e^{-\i t\sqrt{-\Delta + 1}}u_0 - \i \int_0^t  \e^{-\i(t-\tau)\sqrt{-\Delta + 1}}w*|u_\tau|^2 u_\tau \, \d \tau
    \end{equation*}
    and we introduce the notation $\|f\|_{H^s_1}= \|\langle x \rangle f\|_{H^s}$.
    Applying \cite[Lemma B.7]{BFG2025} we have 
    \begin{align*}
     \|\langle x\rangle u_t\|_\infty \lesssim & \langle t \rangle^{-1/2} \|u_0\|_{H^{s,1}} + \|u_0\|_{H^s_1}  \\
     & + \int_0^t (\langle t -\tau\rangle^{-1/2} \|w*|u_\tau|^2 u_\tau\|_{H^{s,1}} +  \|w*|u_\tau|^2 u_\tau\|_{H^{s}_1} )\, \d \tau. 
    \end{align*}
     By H\"older and Young inequalities, combined with a fractional Leibniz rule \cite[Proposition 3.1]{BFG2025}, and since $\|u_\tau\|_2 = \|u_0\|_2=1$ we obtain 
    \begin{align}
        \|w*|u_\tau|^2 u_\tau\|_{H^{s,1}} \lesssim & \|w*|u_\tau|^2\|_2 \|u_\tau\|_{H^s} + \|w*\langle \nabla\rangle^s(|u_\tau|^2)\|_2 \nonumber \\
        \lesssim &\|w\|_1 \|u_\tau\|_\infty\|u_\tau\|_{H^s} \lesssim \|w\|_1 \langle \tau\rangle^{-3/2}, \label{comp: H^s,1}
    \end{align}
    where we applied \cite[Theorem 2.10]{BFG2025} in the last inequality. Thanks to the fact that $\|f\|_{H^s_1}\lesssim \|\langle x \rangle \langle \nabla \rangle^s f\|_2\lesssim \|f\|_{H^s_1}$ we obtain in a similar way the bound 
    \begin{align}
       \| w*|u_\tau|^2 u_\tau\|_{H^{s}_1} \lesssim & \|w*|u_\tau|^2 \|_\infty \|u_\tau\|_{H^{s}_1} + \|w*\langle \nabla \rangle ^s(|u_\tau|^2) \|_2 \|\langle x \rangle u_\tau\|_\infty\nonumber\\
       \lesssim & \|w\|_1 (\|u_\tau \|_\infty^2 \|u_\tau\|_{H^{s}_1} + \|u_\tau \|_\infty \|u_\tau\|_{H^{s}}\|\langle x \rangle u_\tau\|_\infty)\nonumber\\
       \lesssim & \|w\|_1 (\langle \tau\rangle^{-3} \|u_\tau\|_{H^s_1} + \langle \tau\rangle^{-3/2} \|\langle x \rangle u_\tau\|_\infty), \label{comp: H^s_1}
    \end{align}
    where $\|u_\tau\|_{H^s}$ is uniformly bounded in $\tau$ thanks to \cite[Theorem 2.10]{BFG2025}.

    From \cite[Lemma 8.2]{BFG2025} we have $\|u_\tau\|_{H^s_1}\lesssim \|u_0\|_{H^s_1} + \langle \tau \rangle  \|u_0\|_{H^{s,1}}$, combining this with \eqref{comp: H^s,1} and \eqref{comp: H^s_1} we obtain 
    \begin{align*}
    \|\langle x\rangle u_t\|_\infty \lesssim & \, 1 + \|w\|_1\int_0^t ( \langle t -\tau\rangle^{-1/2} \langle \tau \rangle^{-3/2} + \langle \tau \rangle^{-2} + \langle \tau\rangle^{-3/2} \|\langle x \rangle u_\tau\|_\infty)\, \d\tau.
    \end{align*}
    Since $\sup_{t>0} \int_0^t\langle t -\tau\rangle^{-1/2} \langle \tau \rangle^{-3/2} <\infty $, the statement follows from to Gr\"onwall inequality. 
\end{proof}

\bibliographystyle{abbrv}
\bibliography{refs}

\end{document}